\newcommand{\imag}{\mathrm{i}}
\newcommand{\I}{\mathrm{i}}
\newcommand{\ez}{\mathrm{e}}
\newcommand{\df}{\mathrm{d}}
\newcommand{\hsc}{h}
\newcommand{\End}{\mathrm{End}}
\renewcommand{\exp}[1]{\mathrm{e}^{#1}}
 \newcommand{ \tr }{ \mathrm{tr} }
\newcommand{\hfast}{\mathcal{H}_{\mathrm{f}}}
\newcommand{\Or}{{\mathcal{O}}}
\newcommand{\R}{{\mathbb{R}}}
\newcommand{\C}{{\mathbb{C}}}
\newcommand{\N}{{\mathbb{N}}}
\newcommand{\Z}{{\mathbb{Z}}}
\newcommand{\D}{{\mathrm{d}}}
\newcommand{\Hi}{ \mathcal{H} }
\newcommand{\epsi}{ \varepsilon}
\newtheorem{theo}{Theorem}
\newtheorem{prop}{Proposition}
\theoremstyle{definition}
\theoremstyle{plain}
\newtheorem{lemma}{Lemma}
\newtheorem{coro}{Corollary}
\theoremstyle{rema}
\title{Semiclassical approximations for   Hamiltonians  with operator-valued symbols\thanks{This work was supported by the German Science Foundation (DFG) with the SFB-TR 71 and by the German-Israeli Foundation (GIF)}}
\author{Hans-Michael Stiepan, Stefan Teufel\\[1mm]
Mathematisches Institut, Universit\"at T\"ubingen, Germany.
}
\begin{document}

\maketitle

\begin{abstract}
We consider the semiclassical limit of quantum systems with a Hamiltonian given by the Weyl quantization of an operator valued symbol. Systems  composed of {\emph{slow}} and {\emph{fast}} degrees of freedom are of this form. Typically a small dimensionless parameter $\epsi\ll 1$ controls the separation of time scales 
and the limit $\epsi\to 0$ corresponds to an adiabatic limit, in which   the slow and fast degrees of freedom decouple. At the same time  $\epsi\to 0$ is the  semiclassical limit for the slow degrees of freedom. In this paper we show that the $\epsi$-dependent classical flow  for the slow degrees of freedom first discovered by Littlejohn and Flynn \cite{MR1133847},
 coming from an $\epsi$-dependent classical Hamilton function and an $\epsi$-dependent symplectic form,  
 has a concrete mathematical and physical meaning:
 Based on this flow we prove a formula for equilibrium  expectations, an Egorov theorem 
and transport of Wigner functions, thereby  approximating properties of the quantum system up to errors of order $\epsi^2$. In the context of Bloch electrons formal use of  this classical system has 
triggered considerable progress in solid state physics~\cite{RevModPhys.82.1959}.  
Hence we discuss in some detail the application of the general results to the Hofstadter model, which describes 
a two-dimensional gas of non-interacting electrons in a constant magnetic field in the tight-binding approximation.
 \end{abstract}

\newpage

\tableofcontents

\section{Introduction}

Semiclassical approximations play an important role in the understanding of many quantum mechanical phenomena. The standard situation is a Hamiltonian $\hat H = H(x,-\imag \hbar \nabla_x)$, acting on $L^2(\R^n)$, that is the Weyl quantization of a real valued function $H(q,p)$ on the classical phase space $\R^{2n}$. The goal of semiclassical methods is to approximate properties related to the quantum mechanical operator  $\hat H$ on $L^2(\R^n)$ in the limit $\hbar \to 0$ using the  classical Hamiltonian   $H$ on $\R^{2n}$ and its flow.
Examples of such properties are the spectrum and eigenfunctions of $\hat H$, special solutions of the time-dependent Schr\"odinger equation or the full   unitary group $\ez^{-\imag \hat H \frac{t}{\hbar}}$ or 
statistical expectation values $\tr(f(\hat H) \hat a)$. 

In many situations  only some physical degrees of freedom behave semiclassically, the simplest situation being particles with spin. Then the quantum mechanical state space is $L^2(\R^n,\C^L)$, or more general, $  L^2(\R^n,\hfast):= \Hi$, where $\hfast$ is the state space of the ``fast'' or ``fibre'' degrees of freedom.
Then the Hamiltonian is often of the form $\hat H = H(x,-\imag \epsi\nabla_x)$, where now $H:\R^{2n}\to \mathcal{L}(\hfast)$ is a function on classical phase space taking values in the linear self-adjoint operators on $\hfast$. The physical meaning of the small parameter $\epsi\ll 1$ depends on the concrete problem.  Also in this setting semiclassical approximations turned out very successful and a vast literature exists. 
The main content of this work are two results which, to our knowledge, have been neither proved nor claimed in this generality, but have been used already quite successfully in concrete problems, in particular in solid state physics, see \cite{RevModPhys.82.1959} and references therein.
 
We now explain our main results, postponing technical details to Section~2 and beyond. 
The first question is, which classical Hamiltonian system should be related to an operator-valued function $H(q,p)$ on classical phase space? We give the following answer: to each isolated non-degenerate eigenvalue $e_0(q,p)$ of $H(q,p)$ depending smoothly on $(q,p)$ we associate an almost invariant subspace  $\Pi^\epsi L^2(\R^n,\hfast)$, $\Pi^\epsi$ an orthogonal projection depending on $\epsi$, and a specific classical Hamiltonian system.
The restriction of the  Hamiltonian $\hat H$  to the range of $\Pi^\epsi$ is a generalized  adiabatic approximation  
which has been studied in great detail, see e.g.\ \cite{Teufel:2003} for an overview. 
The restricted Hamiltonian $\Pi^\epsi\hat H\Pi^\epsi$ is then amenable to semiclassical approximation using the mentioned classical Hamiltonian  system. At leading order the corresponding scalar Hamiltonian $h:\R^{2n}\to \R$ is just given by the eigenvalue itself, $h(q,p)= e_0(q,p)$, and this is basically well known. However, in many applications next to leading order effects play an important role and we show how to incorporate the first order corrections into a modified $\epsi$-dependent classical system.
The corrected Hamiltonian function takes the form
\begin{equation*} 
h =e_0+\epsi M,
\end{equation*}
where the function $M $ is given in terms of the spectral projection $\pi_0(q,p)$ of $H(q,p)$ corresponding to the eigenvalue $e_0(q,p)$.
In addition one needs to modify also the symplectic form on $\R^{2n}$,
  \begin{equation*} 
{\omega_\epsi} \;:=\; \omega_0 \,+\,\epsi \,\Omega\,,
\end{equation*}
where $\omega_0$ is the standard symplectic form on $\R^{2n}$.
With  $z=(q,p)$ and $\alpha,\beta =1,\ldots,2n$, the correction term $\Omega$ takes the form
\[
  \Omega_{\alpha\beta} \;:=\; -\,\imag \,\tr_{\hfast}\left(\pi_0[\partial_{z_\alpha}\pi_0,\partial_{z_\beta}\pi_0] \right) \,.
\]
As we will explain, $\Omega$ is the curvature form of the induced connection, often called the Berry connection,  on the line bundle over $\R^{2n}$ defined by the rank one projections $\pi_0(q,p)$.

The Hamiltonian equations of motion $\omega_\epsi(X_h,\cdot) = \D h$ imply that the  Hamiltonian vector-field $X_h$   is given  by 
$X_h^\alpha = -(\omega_\epsi)^{\alpha\beta}  \partial_\beta h$ and the classical equations of motion thus have the form
\begin{equation}\label{eq-motion}
\left(\begin{array}{c} \dot q\\\dot p \end{array}\right) \;=\;  \left( \begin{array} {cc} - \epsi \,\Omega^{pp} &          E_n+\epsi \Omega^{pq} \\    -E_n+\epsi \Omega^{qp} &- \epsi \,\Omega^{qq} \end{array} \right)  \left( \begin{array}{c} \partial_q \hsc \\ \partial_p \hsc     \end{array}   \right) \;+\;\Or(\epsi^2)\,.
\end{equation}
Here $E_n= {\rm diag}(1,\ldots, 1)$ denotes the $n\times n$ unit matrix and the error term results from the fact that 
the matrix $(\omega_\epsi)_{\alpha\beta}$ was inverted only approximately.
We denote the corresponding classical flow by $\phi^t_\epsi$ and  the Liouville measure induced by the symplectic form $\omega_\epsi $ by
  $\lambda_\epsi $.
  
We prove the following two statements about semiclassical approximations in terms of the classical system defined above.\\[3mm]
\noindent {\bf Stationary expectations} (cf.\ Theorem~\ref{satz:exp}) \\Let $f:\R\to \R$ be real valued and $a:\R^{2n}\to \R$ be a suitable scalar observable. Then 
\begin{equation}\label{claim1}
\tr_\Hi\left(\Pi^\epsi\, f( \hat H ) \,\hat a \right) \;=\; \frac{1}{(2\pi\epsi)^n} \left(\int_{\R^{2n}} \df \lambda_\epsi \; f(h(q,p))\, a(q,p) \;+\;\Or(\epsi^{2})\right)\,.
\end{equation} 
 Here one could think of $f$ being a distribution function like the Fermi-Dirac distribution and $\hat a$ some trace-class observable. A nice application of this formula where the leading order approximation
\[
\tr_\Hi\left( \Pi^\epsi\,f( \hat H ) \,\hat a \right) \;=\; \frac{1}{(2\pi\epsi)^n} \left( \int_{\R^{2n}} \df q\df p \; f(e_0(q,p))\, a(q,p) \;+\;\Or(\epsi^{1})\right)
\] 
would give the wrong  answer, namely zero,  
is the derivation of a formula for the orbital magnetization  in periodic media, c.f.\ \cite{gat2003semiclassical,RevModPhys.82.1959} and Corollary~\ref{MagCor} in Section~\ref{Hofstadter}.\\[3mm]
\noindent{\bf Egorov theorem} (cf.\ Theorem~\ref{satz:ego})\\ One can approximate the time evolution of quantum mechanical observables using the classical flow $\phi_\epsi^t$.
 Let again $a:\R^{2n}\to \R$ be a suitable scalar observable,
\[
A(t) := \ez^{\imag\hat H\frac{t}{\epsi}}\,\hat a\, \ez^{-\imag\hat H\frac{t}{\epsi}}
\]
its quantum mechanical Heisenberg time-evolution and $a(t) := a\circ\phi^t_\varepsilon$ its classical evolution. Then
\begin{equation}\label{claim2}
\left\| \Pi^\epsi\left( A(t) - \widehat{a(t)}\right)\Pi^\epsi\right\| \;= \;\Or(\epsi^2)
\end{equation}
and, as a consequence, for any state $\rho=\Pi\rho\Pi$ that is trace-class, 
\[
\tr_\Hi \big(\rho \,A(t) \big) \;= \;\tr_\Hi\big(\rho \,\widehat{a(t)}\big) \;+\;\Or(\epsi^2)\,.
\]

Let us try to give a  heuristic meaning to the modifications in the classical Hamiltonian system.
As we will see, the projection $\Pi^\epsi$ is an order $\epsi$ modification of $\hat\pi_0$. Thus the energy of the fast degrees of freedom  for states in the range of $\Pi^\epsi$ is not exactly $e_0$, but must be modified by $\epsi M$. This interpretation is very natural given the symmetric form of $M$ in (\ref{Malt}).
The correction to the geometry of phase space, expressed in the modified symplectic form $\omega_\epsi $ is necessary, since the wave functions in the range of $\Pi^\epsi=\hat\pi_0 +\Or(\epsi)$ take values, micro-locally, in a line bundle with nontrivial geometry defined by $\pi_0$.

The literature on semiclassical approximations for Hamiltonians with operator-valued symbols is huge and we give a short survey in Section~\ref{sec-lit}. At this point let us  mention what we believe are the most important contributions. 

The algebraic part  of the construction 
of the projection $\Pi^\epsi$ is a central step that was done by different approaches and independently by Helffer and Sj\"ostrand \cite{HS90} and  Emmrich and Weinstein \cite{MR1376438}. Emmrich and Weinstein  also showed that $h$ appears in the transport equation for the amplitude of the WKB approximation, but they did not make use of the Hamiltonian flow of~$h$. 
However,   the   $M$ term and the modified symplectic form $\omega_\epsi$ appeared already earlier in a work of Littlejohn and Flynn \cite{MR1133847}, who used the equations of motion~(\ref{eq-motion}) in order to construct Lagrangian submanifolds on which WKB wave functions are supported. 
But they arrive at these equations along a slightly different route and, as far as we understand their work, they considered them rather  a technical tool than the correct physical equations for the slow degrees of freedom. 

As the equations of motion  for the   position and quasi-momentum of a Bloch electron, a special case of  (\ref{eq-motion}) appeared for the first time explicitly  in a work of Chang and Niu \cite{chang-1995-53}. Since then the ``modified semiclassical model'' was at the basis of a lot of theoretical progress in solid state physics, see \cite{RevModPhys.82.1959} for a review. The only rigorous justification appeared in the case of Bloch electrons, where an Egorov theorem in the above explained sense was proven in \cite{PST03} and slightly generalized in \cite{GiuseppeMax}. 
 
After the success of the modified semiclassical model in solid state physics the literature concerned with its generalizations and justifications became huge. However, to our best knowledge the claims
(\ref{claim1}) and (\ref{claim2}) have not been clearly stated, nor rigorously or even just systematically derived until now, the one exception being \cite{PST03}. In addition to being new, we believe that our derivation of (\ref{claim1}) and (\ref{claim2}) is particularly simple and transparent, granted that one is willing to work with the Weyl calculus. 

A key feature that distinguishes our approach and \cite{MR1376438} from the majority of other approaches, starting with \cite{MR1133847} and including \cite{PST03},  is that we make no use of eigenfunctions $\varphi_0(q,p)$ of $H(q,p)$, but only of the projections $\pi_0(q,p)$. While the former are gauge-dependent, since with $\varphi_0(q,p)$ also $\ez^{\imag f(q,p)}\varphi_0(q,p)$ for  real-valued $f$ is an eigenfunction, the projections are unique. Moreover, in some applications like magnetic Bloch electrons, a global smooth choice for  $\varphi_0(q,p)$ simply does not exist  for geometric reasons. Exactly this problem was our motivation for developing the new approach presented here and we discuss the application to magnetic Bloch bands of the Hofstadter model in Section~\ref{Hofstadter}. The general situation of magnetic Bloch electrons is the content of a future paper \cite{ST:2011}. 

%
Let us stress that
in this work we make no serious attempt to achieve greatest possible generality. Instead we focus mostly  on structural aspects and on covering the application to the Hofstadter model discussed in Section~\ref{Hofstadter}. 
Therefore we will make  stronger assumptions than presumably necessary in order to avoid
distracting technicalities.
But  we will comment in the following on where and how one can relax the assumptions. 
Most works, including \cite{MR1133847,MR1376438}, are concerned with asymptotic expansions of formal symbols only and ignore the problem of turning the algebraic computation into statements about norm convergent approximations. 
Note that  these algebraic computations and thus the structure of the classical Hamiltonian system  remain exactly the same in all applications. Our restrictive assumptions come only into play in the steps where 
symbolic expansions are turned into statements about  operators and traces.
If one is willing to forego mathematical rigor in this step, one can   freely apply our results even in cases where the precise mathematical  assumptions are not satisfied. For an alternative but much more involved proof of the Egorov theorem with a slightly larger range of validity we refer to \cite{diss}.

We end the introduction with a short overview of the structure of this paper.
In Section~\ref{sec:sat} we first explain the precise setting and assumptions and   give a brief introduction to Weyl calculus.  Then we state the  known result concerning the existence and properties of the projection $\Pi^\epsi$ defining the almost invariant subspace related to the eigenvalue $e_0$.
In Section~\ref{sec-clas} we introduce the classical Hamiltonian $h$ and show how it is related to the restriction of $\hat H$ to the range of $\Pi^\epsi$.
Section~\ref{sec:sc} contains the mathematical statements and proofs of our main results (\ref{claim1}) and (\ref{claim2}). In Section~\ref{section:correm}   we 
prove that $\omega_\epsi$ is indeed a symplectic form and compute an explicit formula for the Liouville measure $\lambda_\epsi $. Then we show how $\Omega$ is related to the curvature of the Berry connection.
In Section~\ref{sec-lit} we discuss some related literature in view of our results. Finally, in Section~\ref{Hofstadter}, we discuss a tight-binding model for a two-dimensional gas of non-interacting particles in a constant magnetic field, the so called Hofstadter model. Applying our abstract results, we compute the free energy per volume at fixed temperature and chemical potential and   its derivative with respect to the magnetic field, the magnetization. Moreover, we also compute the Hall current at zero temperature when the Fermi level is in a gap, leading to the quantized conductivity 
in terms of Chern numbers. Although similar results have been established by other techniques, see \cite{SBT12}, we believe that the derivation of these expressions in terms of the semiclassical model is very simple and transparent, once the model is established.  \\[3mm]
%
{\bf Acknowledgements.} We thank   Giuseppe De Nittis, Omri Gat, Max Lein, Gian\-luca Panati and Hermann Schulz-Baldes     for   inspiring discussions and
 Yuri Kord\-yukov for pointing out several typos and glitches in an earlier version.  
S.T.\ is also grateful to Giuseppe De Nittis and Gian\-luca Panati for sending us their unpublished  notes on  the Harper and the Hofstadter model. Finally S.T.\ thanks Peter Vastag for his involvement in a joint project on the semiclassics of the Hofstadter model.

\section{The adiabatic approximation}
\label{sec:sat}

\subsection{General setting and precise assumptions}
\label{sec:sett}

We assume that the state space $\mathcal{H}$ decomposes as $\mathcal{H}=L^2(\mathbb{R}^n)\otimes\mathcal{H}_{\mathrm{f}}$, where the subscript 
 f stands for   ``fast'' degrees of freedom or for ``fibre''. 
Using Lebesgue-Bochner spaces we may rewrite this
as 
\begin{equation*}
\mathcal{H}=L^2(\R^n)\otimes\mathcal{H}_{\mathrm{f}} \cong L^2(\R^n,\mathcal{H}_{\mathrm{f}}).
\end{equation*}
One can replace $\R^n$ by a flat $n$-dimensional torus with only very minor changes to the following analysis. 
Most importantly, one has to use an appropriately modified pseudo-differential calculus as explained, e.g., in \cite{PST03}. We will give some more details on the modifications when we discuss the Hofstadter model in Section~\ref{Hofstadter}.
%
%
For manifolds with non-vanishing curvature the first order corrections cannot be implemented using a modified flow, see \cite{Jonas}.
\smallskip

\noindent{\bf Assumption 1}.
{\em On $\Hi$  we study a self-adjoint $\epsi$-dependent Hamiltonian $\hat{H}$  that   is given as the $\epsi$-Weyl quantization $\hat{H} =\mathrm{op}^W(H) =$ ``$ H(\epsi, x,-\imag \epsi \nabla_x)$''
of a semiclassical symbol }
\[
  H(z) = H_0(z) + \epsi H_1(z) + \xi\cdot z\quad  \mbox{ \em with }H_0, H_1 \in S^0(\mathcal{B}(\hfast))\mbox{ \em and } \xi\in \R^{2n}\,.
\]  
For any Banach space $X$ and $k\in\R$ let $S^k(X)$ denote the smooth functions  from $\R^{2n}$ to $X$ that are bounded together with all their derivatives by a constant times $\langle z\rangle^k:=(1+|z|^2)^\frac{k}{2}$, i.e.\ for $f\in S^k(X)$
\[
\|f\|_{k, r } := \max_{ \small \begin{array}{l} \alpha\in\N^{2n}_0\\ |\alpha|\leq r \end{array}}  \sup_{z\in\R^{2n} } \| \langle z\rangle^k\, \partial_z^\alpha   f(z)\|_{X} <\infty\qquad\mbox{for all $r\in\N_0$.}
\]
The family of seminorms $\|\cdot\|_{k, r }$ turns $S^k(X)$ into a Fr\'echet space. 
The space of  uniformly bounded functions $A: [0,\epsi_0)\to S^k(X)$ 
 is denoted by $S^k(\epsi,X)$.
 
 Note that compared to the introduction we allow for a subprincipal symbol $H_1$ and  explicitly 
separate a linear part in the symbol for technical reasons.
As will be recalled in the next section,  $\hat H_0$ and $\hat H_1$ are bounded operators. As a consequence
$\hat H$ is self-adjoint on the maximal domain of the operator $\xi_q\cdot x \,-\,\I\epsi \xi_p\cdot \nabla_x$ denoted by $D(\hat H)$ in the following. 
The assumption that the symbol $H$ is bounded up to a part  linear in $z$, allows to prove quite directly 
uniform statements in the following. However, all results are local in phase space $\R^{2n}$ and more general Hamiltonians, e.g.\ with quadratically growing symbols, can be brought, at least formally, into the required form by appropriate micro-localization in phase space.
A rigorous implementation of this idea in the general case is presumably quite technical, but has been done successfully e.g.\ in the context of the Born-Oppenheimer approximation, see \cite{MR2527490,TeWa12}.
 For the construction of $\Pi^\epsi$ it is also known that  one can allow for $H$ to take its values in the unbounded operators on $\hfast$, c.f.\ \cite{PST03}. We show in \cite{ST:2011} that also the other steps of our semiclassical approximation work in this case.\smallskip
 
 \noindent{\bf  Assumption 2}. {\em We assume that $H_0(q,p)$ has a non-degenerate eigenvalue $e_0(q,p)$ such that $e_0:\R^{2n}\to \R$ is continuous and satisfies the uniform gap condition}
 \[
 {\rm dist}( e_0(q,p) ,\, \sigma(H_0(q,p))\setminus\{e_0(q,p)\} )\geq g>0\,.
 \]  
 We denote the eigenprojection of $H_0(q,p)$ to the eigenvalue $e_0(q,p)$ by $\pi_0(q,p)$,
i.e.
 \[
 H_0(q,p) \,\pi_0(q,p) \;=\; e_0(q,p)\,\pi_0(q,p)\,.
 \]
 The gap condition implies smoothness of $e_0$ and of $\pi_0$ and is also crucial for the construction of super-adiabatic subspaces, cf.\ Proposition~\ref{satz:sat}.  It is known that at eigenvalue crossings  even a leading order approximation in terms of a classical flow becomes much more intricate, see e.g.\ \cite{MR1906196,MR2163659,DFJ09} and references therein.

 \subsection{Key formulas from Weyl calculus}\label{WeylForm}
 In this section we collect for later reference some key formulas and results of the Weyl calculus and introduce our notation.  Readers familiar with the Weyl calculus can skip this section. For more details on $\epsi$-pseudo-differential operators we refer e.g.\ to \cite{martinez2002introduction,MR1735654}. For a short summary on Weyl calculus with operator-valued symbols  the readers can also consult Appendix~A of \cite{Teufel:2003}.

For a Schwartz function $\psi \in \mathcal{S}(\mathbb{R}^n,\hfast)$  and $A\in S^k(\epsi, \mathcal{B}(\hfast))$ the action of  the Weyl quantization $\hat A = \mathrm{op}^W(A)$ of $A$ can be defined through the oscillatory integral 
\begin{equation}
\label{eq:wq}
(\hat{A}\psi)(x)=\frac{1}{(2\pi \epsi)^n}\int_{\mathbb{R}^{2n}}\df p \df y \,A\left(\epsi,\tfrac{1}{2}(x+y),p\right)\ez^{\imag p \cdot (x-y)/\epsi} \psi (y).
\end{equation}
For $k\geq 0$, by the Calderon-Vaillancourt theorem,    $\hat A$ can be extended to a bounded operator on $\Hi=L^2(\R^{n},\hfast)$. More precisely there exists a constant $c_n$ depending only on the dimension $n$, such that for all $\epsi\in[0,\epsi_0)$
\begin{equation*}
\lVert \hat{A}\rVert_{\mathcal{B}(\Hi)}\leq c_n 
 \|A(\epsi)\|_{0,2n+1}\,.
\end{equation*}
The composition of operators induces a composition of symbols. For any  $A \in S^{k_1}(\epsi, \mathcal{B}(\hfast))$  and $B \in S^{k_2}(\epsi, \mathcal{B}(\hfast))$, there exists a symbol $C\in S^{k_1+k_2}(\epsi, \mathcal{B}(\hfast))$ denoted by
$C= A\# B$ such that $\hat A\hat B= \hat C$. The bilinear map $\#: S^{k_1}\times S^{k_2}\to S^{k_1+k_2}$ is called the 
Moyal product and it
is continuous with respect to the Fr\'echet topologies uniformly in $\epsi$, i.e.\ for any $r\in\N_0$ there is a  $\tilde r \in \N_0$ and a
constant  $c_{r,\tilde r}<\infty$   such that 
\[
 \| (A\# B)(\epsi) \|_{k_1+k_2,r} \leq c_r \, \|A(\epsi)\|_{k_1,\tilde r}\|B(\epsi)\|_{k_2,\tilde r}
\]
for all $\epsi\in[0,\epsi_0)$. The last statement follows e.g.\ from inspecting the proof of Thm.~2.41 in \cite{Folland}.

If for $A\in S^k(\epsi, X)$ there exists a sequence $(A_j)_{j\in\N_0}$ in $S^k(\epsi,X)$ such that 
\[
\sup_{\epsi\in[0,\epsi_0) }\Big\| \epsi^{-(m+1)} \Big( A(\epsi) - \sum_{j=0}^m \epsi^j A_j(\epsi)\Big) \Big\|_{k,r } <\infty
\]
  for all $r \in \N_0 $ and $m\in\N_0$, then    one writes  
$ A \asymp \sum_{j=0}^\infty \epsi^j A_j$ in  $S^k(\epsi,X)$. 
If $A\in S^k(\epsi,X)$ has an asymptotic expansion with coefficients $A_j\in S^k(X)$ not depending on $\epsi$, then $A$ is called a classical symbol,
$A_0$ its principal symbol and  $A_1$ its subprincipal symbol. 

The Moyal product $C:= A\# B \in S^{k_1+k_2}(\epsi, \mathcal{B}(\hfast))$ of   symbols $A \in S^{k_1}(\epsi, \mathcal{B}(\hfast))$  and $B \in S^{k_2}(\epsi, \mathcal{B}(\hfast))$ has an explicit asymptotic expansion 
$ C \asymp \sum_{j=0}^\infty \epsi^j C_j$ in  $S^{k_1+k_2}(\epsi,X)$
such that the remainder maps 
\[
R_{m+1}: S^{k_1}\times S^{k_2}\to S^{k_1+k_2}\,, \quad (A,B)\mapsto R_{m+1}:= \epsi^{-(m+1)} \Big( C(\epsi) - \sum_{j=0}^m \epsi^j C_j(\epsi)\Big)  
\]
are continuous.
The expansion 
starts with the pointwise product $C_0 (\epsi)= A(\epsi)B(\epsi)$ and the Poisson bracket $C_1(\epsi) =  - \tfrac{ \imag}{2} \{ A (\epsi), B(\epsi) \}$, 
where
\[
\{ A , B \} \;:=\; \partial_p A \cdot \partial_q B   - \partial_q A \cdot \partial_p B 
:=\sum_{j=1}^n  \left( \partial_{p_j} A  \, \partial_{q_j} B   - \partial_{q_j} A \, \partial_{p_j} B \right)\,.
\]
  For classical symbols $A$ and $B$ the Moyal product $C:= A\# B$ is also a classical symbol with an asymptotic expansion starting with
\[
A\# B \;\asymp\;  A_0 B_0 \;+\; \epsi\left( A_1 B_0 + A_0B_1 - \tfrac{\imag}{2} \{ A_0, B_0\}
\right) \;+\;  \Or(\epsi^2)\,.
\]
%
%
%
%
%
%
Since $A $ and $B $ are operator valued functions, they do not commute in general and neither do their derivatives. Hence, in general, $\{A,A\}\not=0$, but, if the derivatives of $A$ are trace-class,
\begin{equation}\label{cyctrace}
\tr_{\hfast} \left(\{A,A\}\right) =0\,,
\end{equation}
because of the cyclicity of the trace. 
For later reference we state also the resulting formula  for triple products of classical symbols, 
\begin{align}
\label{eq:triple}
 A\#B\#C & \asymp  A_0B_0C_0 \;+\; \epsi A_1B_0C_0\;+\;\epsi A_0B_1C_0\;+\;\epsi A_0B_0C_1 \\ \nonumber
 & \quad-\,\tfrac{\imag\epsi}{2}\big( A_0\{B_0,C_0\}+\{A_0,B_0\} C_0+ \{A_0|B_0|C_0\}\big)+\mathcal{O}(\epsi^2)\,.
\end{align}
Here and in the following we use the shorthand
\begin{equation}\label{notation3}
 \{A_0|B_0|C_0\}  \;:=\; \partial_p A_0\cdot B_0\,\partial_q C_0  - \partial_q A_0\cdot B_0 \,\partial_p C_0\,.
\end{equation}
If $A = a \mathbf{1}_{\hfast}$ is a scalar multiple of the identity, then $A$ and all its derivatives commute with any $B$. As a consequence one can show that in this case
\begin{equation}\label{weylcommu}
A_0\# B_0 - B_0\# A_0 \asymp -\imag\epsi  \{ A_0, B_0\} \,+\, \Or(\epsi^3)\,.
\end{equation}
The fact that the remainder term in (\ref{weylcommu}) is of order $\epsi^3$ and not only $\epsi^2$ is at the basis of our higher order semiclassical approximations. It distinguishes  the Weyl quantization from other quantization rules.

We will also be interested in taking traces of pseudo-differential operators and of their symbols.
Denoting by $\mathcal{J}_1(\hfast)\subset \mathcal{B}(\hfast)$ the Banach space of trace-class operators on $\hfast$ with the trace-norm $\|A\|_1 := \tr_{\hfast}|A|$, the Moyal product restricts
to  maps
\[
\#: S^{k_1}(\epsi, \mathcal{J}_1(\hfast)) \times S^{k_2}(\epsi, \mathcal{B}(\hfast)) \to S^{k_1+k_2}(\epsi, \mathcal{J}_1(\hfast))
\]
and 
\[
\#:  S^{k_1}(\epsi, \mathcal{B}(\hfast)) \times S^{k_2}(\epsi, \mathcal{J}_1(\hfast)) \to S^{k_1+k_2}(\epsi, \mathcal{J}_1(\hfast))\,.
\]
The continuity of the product and of the asymptotic expansion  holds as well with respect to the trace-norm $\|\cdot\|_1$ on $\mathcal{J}_1(\hfast)$.

One can compute the trace of a product of Weyl operators by integrating the trace of the point-wise product of the symbols:
For scalar symbols $b\in S^0( \C)$ and  $a \in W^{\infty, 1}(\R^{2n} )= \big\{f\in C^\infty(\R^{2n})| \partial^\alpha f \in L^1(\R^{2n}) \mbox{ for all } \alpha\in\N_0^{2n}\big\}$    it is known, e.g.\ \cite{MR744068}, that $\hat a$ is trace class and that
\[
\tr_{L^2(\R^n)} \big(\hat a\hat b\big) \;=\;\frac{1}{(2\pi\epsi)^n} \int\D q\D p\;a(q,p)\,b(q,p)  \,.
\]
Now let $A\in W^{\infty, 1}(\R^{2n} ,\mathcal{B}(\hfast))$    and $B\in S^0(\mathcal{B}(\hfast))$
such that $\tr_{\hfast}(AB)\in L^1(\R^{2n})$.
Let $(\psi_j)_{j\in\N}$ and $(\varphi_k)_{k\in\N}$ be orthonormal bases of $L^2(\R^n)$ and $\hfast$ respectively. Then $a_{kl} := \langle \varphi_k, A\varphi_l\rangle_{\hfast}  \in W^{\infty, 1}(\R^{2n} )$ and 
and $b_{lk}:= \langle \varphi_l, B\varphi_k\rangle_{\hfast} \in S^0(\C)$ for all $k,l\in\N$.
Thus
 \begin{align}
\tr_\Hi \big(\hat A\hat B\big)  &=  \sum_{j,k}\langle \psi_j\otimes \varphi_k,\,\hat A\, \hat B\,\psi_j\otimes \varphi_k\rangle_\Hi
\;=\; \sum_{j,k,l}\langle \psi_j , \hat a_{kl} \hat b_{lk} \psi_j \rangle_{L^2(\R^n)}\nonumber \\
 &= \sum_{k,l} \frac{1}{(2\pi\epsi)^n} \int\D q\D p\;a_{kl}(q,p)b_{lk}(q,p)\nonumber \\
 &= \frac{1}{(2\pi\epsi)^n} \int\D q\D p\;\tr_{\hfast}(A(q,p) B(q,p))\;<\;\infty\,,\label{traceform}
\end{align}
where one should read this computation backwards to see that 
all expressions are well defined and that the equalities hold.


 \subsection{The almost invariant subspace}

With an isolated eigenvalue $e_0$ there is associated a subspace $\Pi^\epsi\Hi$ of the state space that is almost invariant under the action of $\hat H$.

\begin{prop}[\bf Super-adiabatic projection]
\label{satz:sat}
 Let Assumptions~1 and 2 hold. Then $e_0\in S^0(\R)$ and $\pi_0\in S^0(\mathcal{J}_1(\hfast))$. For $\epsi$ small enough, there exists an   orthogonal projection $\Pi^\epsi \in \mathcal{B}(\Hi)$ such that
   \begin{equation}
   \label{eq:bcommu}
  \big\| \big[\hat{H},\Pi^\epsi\big]\big\|  \;=\;\mathcal{O} (\epsi^\infty) \,.
 \end{equation}
 It is related to the band $e_0$ through
 \begin{equation}\label{pidiff}
 \| \Pi^\epsi - \hat \pi \|  = \Or(\epsi^\infty) \,,
 \end{equation}
  where $\pi\in S^0(\epsi,\mathcal{J}_1(\hfast))$ is a classical symbol with principal symbol $\pi_0$ taking values in the self-adjoint trace-class operators on $\hfast$.      $\Pi^\epsi$ is the spectral projection of $\hat \pi$ for its spectrum near one, i.e.\ $\Pi^\epsi = \chi_{[\frac12, \frac32]}(\hat \pi)$.  
The relevant diagonal block of the subprincipal symbol $\pi_1$ of $\pi$ is given by 
\begin{equation}
\label{eq:pidi}
\pi_0 \pi_{1}\pi_0 =\tfrac{\imag}{2}\,\pi_0 \{\pi_0,\pi_0\} \pi_0\,.
\end{equation}

\end{prop}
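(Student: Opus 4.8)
The plan is to construct $\pi$ and $\Pi^\epsi$ by the standard Helffer--Sj\"ostrand / Emmrich--Weinstein scheme, adapted to the Weyl calculus, and then read off the subprincipal symbol. First I would build the symbol $\pi\asymp\pi_0+\epsi\pi_1+\epsi^2\pi_2+\dots$ order by order as a formal self-adjoint Moyal projection, i.e.\ solving $\pi\#\pi\asymp\pi$ and $\pi\asymp\pi^*$ together with the ``right band'' condition that at leading order it equals the spectral projection $\pi_0$ of $H_0$ onto $e_0$. The off-diagonal part of each $\pi_j$ (with respect to the splitting $\mathbf{1}=\pi_0\oplus\pi_0^\perp$) is fixed by the commutation requirement $[H,\pi]_\#\asymp 0$ using the gap condition (the reduced resolvent $(H_0-e_0)^{-1}$ on $\mathrm{ran}\,\pi_0^\perp$ is bounded by $1/g$), while the diagonal parts are fixed by the projector relation $\pi\#\pi\asymp\pi$. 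Concretely, writing $\pi_1 = \pi_1^{\mathrm{OD}} + \pi_0\pi_1\pi_0 + \pi_0^\perp\pi_1\pi_0^\perp$, the equation $(\pi\#\pi)_1 = \pi_1$ reads $\pi_0\pi_1 + \pi_1\pi_0 - \tfrac{\imag}{2}\{\pi_0,\pi_0\} = \pi_1$; sandwiching by $\pi_0$ on both sides and using $\pi_0\pi_1^{\mathrm{OD}}\pi_0=0$ gives exactly $\pi_0\pi_1\pi_0 = \tfrac{\imag}{2}\,\pi_0\{\pi_0,\pi_0\}\pi_0$, which is (\ref{eq:pidi}). One should also check self-adjointness is consistent: since $\{\pi_0,\pi_0\}^* = -\{\pi_0,\pi_0\}$ (from $\pi_0^*=\pi_0$ and the antisymmetry of the bracket in swapping the two slots), $\tfrac{\imag}{2}\{\pi_0,\pi_0\}$ is self-adjoint, so $\pi_0\pi_1\pi_0$ is self-adjoint as required. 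All coefficients $\pi_j$ inherit membership in $S^0(\mathcal{J}_1(\hfast))$ because $\pi_0\in S^0(\mathcal{J}_1(\hfast))$ (rank one, smooth, with the gap giving smoothness of $e_0$ and $\pi_0$) and the operations involved — Moyal products with $S^0(\mathcal{B}(\hfast))$ symbols, applying the bounded reduced resolvent, taking derivatives — preserve the trace-class Fr\'echet class.

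Next I would use Borel resummation to produce a genuine symbol $\pi\in S^0(\epsi,\mathcal{J}_1(\hfast))$ with the prescribed asymptotic expansion, so that $\hat\pi = \mathrm{op}^W(\pi)$ satisfies $\hat\pi^* = \hat\pi + \Or(\epsi^\infty)$, $\hat\pi^2 = \hat\pi + \Or(\epsi^\infty)$, and $[\hat H,\hat\pi] = \Or(\epsi^\infty)$ as bounded operators (here the boundedness of $\hat H_0,\hat H_1$ from Calderon--Vaillancourt and the uniform Moyal estimates from Section~\ref{WeylForm} are what turn the symbolic identities into operator identities, and the linear part $\xi\cdot z$ commutes with $\pi$ to the relevant order by (\ref{weylcommu}) since it is scalar). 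Then $\hat\pi$ is $\Or(\epsi^\infty)$-close to a true orthogonal projection $\Pi^\epsi$: since $\hat\pi$ is $\Or(\epsi^\infty)$-self-adjoint and $\Or(\epsi^\infty)$-idempotent, its spectrum is concentrated in two $\Or(\epsi^\infty)$-neighborhoods of $0$ and $1$, so the Riesz projection $\Pi^\epsi := \chi_{[1/2,3/2]}(\hat\pi) = \frac{1}{2\pi\imag}\oint_{|w-1|=1/2}(w-\hat\pi)^{-1}\,\df w$ is a bona fide orthogonal projection (self-adjointness of $\hat\pi$ modulo $\Or(\epsi^\infty)$ can first be repaired by replacing $\hat\pi$ with $\tfrac12(\hat\pi+\hat\pi^*)$). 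Standard resolvent estimates give $\|\Pi^\epsi - \hat\pi\| = \Or(\epsi^\infty)$, hence (\ref{pidiff}), and $\|[\hat H,\Pi^\epsi]\| = \Or(\epsi^\infty)$ follows from $[\hat H,\hat\pi]=\Or(\epsi^\infty)$ together with the Riesz formula and the fact that $\hat H$ commutes with its own resolvent on $D(\hat H)$ — one writes $[\hat H,(w-\hat\pi)^{-1}] = (w-\hat\pi)^{-1}[\hat H,\hat\pi](w-\hat\pi)^{-1}$ and integrates, which gives (\ref{eq:bcommu}).

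I expect the main obstacle to be the bookkeeping that makes the recursive symbol construction close in the correct function space uniformly in $\epsi$: one must verify at each order that the ansatz $\pi_j = (\text{off-diagonal piece from the gap})+(\text{diagonal piece from idempotency})$ is consistent with self-adjointness and that all pieces remain in $S^0(\mathcal{J}_1(\hfast))$ with seminorm bounds controlled by finitely many seminorms of $H_0,H_1,\pi_0$, uniformly for $\epsi\in[0,\epsi_0)$. The algebraic heart — deriving (\ref{eq:pidi}) — is a short computation as sketched above, so the real work is (i) the inductive argument that such $\pi_j$ exist to all orders with the right regularity, and (ii) the passage from formal symbol to operator: Borel resummation, the Calderon--Vaillancourt bound to control remainders, and the Riesz-projection perturbation argument. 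Since this proposition is quoted as ``known'' with references \cite{HS90,MR1376438,PST03}, I would likely only indicate this scheme and defer the detailed estimates to those works, emphasizing that the sole new ingredient needed here is the explicit formula (\ref{eq:pidi}) for the diagonal block, which feeds into the classical Hamiltonian $h$ in Section~\ref{sec-clas}.
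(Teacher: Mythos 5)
Your proposal is correct and follows essentially the same route as the paper, which likewise defers the Moyal-projection construction to \cite{HS90,NS:2004,Teufel:2003} and treats the first-order idempotency relation $\pi_0\pi_1+\pi_1\pi_0-\tfrac{\imag}{2}\{\pi_0,\pi_0\}=\pi_1$, sandwiched with $\pi_0$, as the source of (\ref{eq:pidi}). The only piece the paper actually spells out that you state more briefly is why $\pi_0\in S^0(\mathcal{J}_1(\hfast))$: one shows by induction from (\ref{pi0off}) that every derivative $\partial_z^\alpha\pi_0$ has uniformly bounded finite rank, so its trace norm is controlled by its operator norm.
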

It is quite remarkable that for the following analysis we only need to know that such a projection $\Pi^\epsi=\hat\pi_0 +\Or(\epsi)$ exists and that the diagonal block of its subprincipal symbol is given by (\ref{eq:pidi}).

With $\pi\in S^0(\epsi,\mathcal{B}(\mathcal{H}_{\mathrm{f}}))$
the statement of Proposition~\ref{satz:sat}   was proven several times in the literature under slightly varying technical assumptions.  The strategy of proof is due  Nenciu and Sordoni \cite{NS:2004} based on work of   Helffer and Sj\"ostrand \cite{HS90}. 
The exact statement of Proposition~\ref{satz:sat} follows, for example, from inspecting Theorem~3.2 and its proof in
  \cite{Teufel:2003} using  the statements in Section~\ref{WeylForm} concerning 
 Moyal expansion in  $S^k(\epsi,\mathcal{J}_1(\hfast))$ and, as a starting point,   $\pi_0\in S^0(\mathcal{J}_1(\hfast))$. To see the latter, note that
   $\pi_0$ takes values in the rank one projections. Thus by induction starting with  
   \begin{equation}\label{pi0off}
\partial_j\pi_0 = \pi_0 ( \partial_j\pi_0)  \pi_0^\perp + \pi_0^\perp (\partial_j\pi_0)\pi_0\,
\end{equation}
all derivatives $\partial_z^\alpha\pi_0$ are of finite rank $\leq N_\alpha$ and therefore trace class. 
Hence  the fact that 
$\pi\in S^0(\epsi,\mathcal{B}(\mathcal{H}_{\mathrm{f}}))$ implies by
\[
\sup_{z\in\R^{2n}}\|\partial_z^\alpha\pi_0(z)\|_1 = \sup_{z\in\R^{2n}} \tr_{\hfast} |\partial_z^\alpha\pi_0(z)|\leq N_\alpha\sup_{z\in\R^{2n}} \|\partial_z^\alpha\pi_0(z)\| <\infty
\]  
for all $\alpha\in\N_0^{2n}$
 also  $\pi_0\in S^0(  \mathcal{J}_1(\hfast))$. 
  

\begin{coro}
\label{koro:time}
 Under the hypotheses of Proposition~\ref{satz:sat}
 the diagonal Hamiltonian $\Pi^\epsi\hat H\Pi^\epsi +\Pi^{\epsi\perp}\hat H\Pi^{\epsi\perp} $ is self-adjoint on $D(\hat H)$,
 the projection $\Pi^\epsi$ almost commutes with the unitary time evolution operator in the sense that
 \begin{equation} \label{Ucommu}
 \left\|  \left[\exp{- {\imag} \hat{H}s},\Pi^\epsi \right]\right\| \;=\;\mathcal{O} (\epsi^\infty \lvert s\rvert) 
 \end{equation}
and
 \begin{equation}\label{Hdiag}
 \left\|    \exp{- {\imag} \hat{H}s} -  \exp{- {\imag}( \Pi^\epsi \hat{H}\Pi^\epsi +\Pi^{\epsi\perp}\hat H\Pi^{\epsi\perp})s} \right\| \;=\;\mathcal{O} (\epsi^\infty \lvert s\rvert) \,.
 \end{equation}
 Moreover, for any $A\in S^k(  \mathcal{B}(\hfast))$   with $k>2n+1$ the operators $\hat A \,\hat\pi$ and $\hat A \,\Pi^\epsi$ are trace class with 
 \begin{equation}\label{traceest}
 \tr_\Hi (\hat A \,\hat\pi ) =  \Or(\epsi^{-n} \|A\|_{L^1} ) \quad\mbox{ and } \quad  \tr_\Hi ( \hat A \,\Pi^\epsi  )=  \Or(\epsi^{-n} \|A\|_{L^1})\,.
\end{equation}
\end{coro}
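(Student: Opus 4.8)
The statement of Corollary~\ref{koro:time} has three parts: (i) self-adjointness of the diagonal Hamiltonian on $D(\hat H)$, (ii) the almost-commutation of $\Pi^\epsi$ with the unitary group and the approximation of the full propagator by the diagonal one, and (iii) the trace-class estimates \eqref{traceest}. I would treat them in this order, since each builds on the previous one.

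For (i), the plan is to write $\Pi^\epsi\hat H\Pi^\epsi + \Pi^{\epsi\perp}\hat H\Pi^{\epsi\perp} = \hat H + (\Pi^\epsi\hat H\Pi^\epsi + \Pi^{\epsi\perp}\hat H\Pi^{\epsi\perp} - \hat H)$. A short computation shows the correction term equals $-\Pi^\epsi\hat H\Pi^{\epsi\perp} - \Pi^{\epsi\perp}\hat H\Pi^\epsi = -[[\hat H,\Pi^\epsi],\Pi^\epsi]$, which by \eqref{eq:bcommu} is a bounded operator of norm $\Or(\epsi^\infty)$. Adding a bounded symmetric operator to the self-adjoint operator $\hat H$ (self-adjoint on $D(\hat H)$ by the remarks following Assumption~1) preserves self-adjointness on the same domain by the Kato--Rellich theorem, so this part is essentially immediate once the algebraic identity is recorded.

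For (ii), I would first derive \eqref{Ucommu} from \eqref{eq:bcommu} via the standard Duhamel argument: $[\ez^{-\imag\hat H s},\Pi^\epsi] = \ez^{-\imag\hat H s}\int_0^s \ez^{\imag\hat H\sigma}\,\imag[\hat H,\Pi^\epsi]\,\ez^{-\imag\hat H\sigma}\,\df\sigma$, whose norm is bounded by $|s|\,\|[\hat H,\Pi^\epsi]\| = \Or(\epsi^\infty|s|)$, using that the unitaries have norm one and that $\Pi^\epsi$ commutes with $D(\hat H)$ well enough to justify differentiating (here one uses that $\hat H\Pi^\epsi$ and $\Pi^\epsi\hat H$ extend to bounded perturbations of $\hat H$ as in part (i)). Then \eqref{Hdiag} follows from \eqref{Ucommu} by another Duhamel estimate comparing the two propagators: the generator difference is $\hat H - (\Pi^\epsi\hat H\Pi^\epsi + \Pi^{\epsi\perp}\hat H\Pi^{\epsi\perp}) = [[\hat H,\Pi^\epsi],\Pi^\epsi]$, already shown to be $\Or(\epsi^\infty)$ in norm, so the difference of the two one-parameter groups is $\Or(\epsi^\infty|s|)$.

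For (iii), the idea is to reduce to the trace formula \eqref{traceform}. Since $k>2n+1$, $A\in S^k(\mathcal{B}(\hfast)) \subset W^{\infty,1}(\R^{2n},\mathcal{B}(\hfast))$ and the derivatives of $\pi_0$ are finite-rank (hence $\pi\in S^0(\epsi,\mathcal{J}_1(\hfast))$ by Proposition~\ref{satz:sat}), so $\tr_{\hfast}(A\,\pi)\in L^1(\R^{2n})$ with $L^1$-norm $\Or(\|A\|_{L^1})$; applying \eqref{traceform} to $\hat A\hat\pi$ gives $\tr_\Hi(\hat A\hat\pi) = (2\pi\epsi)^{-n}\int \tr_{\hfast}(A\pi) = \Or(\epsi^{-n}\|A\|_{L^1})$. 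For $\Pi^\epsi$ I would write $\tr_\Hi(\hat A\,\Pi^\epsi) = \tr_\Hi(\hat A\,\hat\pi) + \tr_\Hi(\hat A\,(\Pi^\epsi-\hat\pi))$; the first term is handled, and the second is bounded by $\|\hat A\|_{\mathcal J_1}\,\|\Pi^\epsi - \hat\pi\|$, which is $\Or(\epsi^{-n}\|A\|_{L^1})\cdot\Or(\epsi^\infty)$ using \eqref{pidiff} and the trace-norm Calderon--Vaillancourt-type bound on $\hat A$.

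The main obstacle I anticipate is the domain bookkeeping in part (ii): one must be careful that the Duhamel integrals are genuinely well defined, i.e.\ that $[\hat H,\Pi^\epsi]$ — a priori defined only on $D(\hat H)$ — extends to the bounded operator appearing in \eqref{eq:bcommu}, and that the maps $s\mapsto \ez^{-\imag\hat H s}\psi$ are differentiable on $D(\hat H)$, which is where the self-adjointness result of part (i) and the fact that $\Pi^\epsi$ preserves (a core of) $D(\hat H)$ modulo bounded terms are really used. Everything else is routine once these facts are in place.
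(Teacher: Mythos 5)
Parts (i) and (ii) of your proposal are correct and coincide with the paper's (much terser) argument: the identity $\Pi^\epsi\hat H\Pi^\epsi+\Pi^{\epsi\perp}\hat H\Pi^{\epsi\perp}=\hat H-[[\hat H,\Pi^\epsi],\Pi^\epsi]$ together with the boundedness of the double commutator gives self-adjointness on $D(\hat H)$, and the two Duhamel expansions you describe are exactly the ones meant by ``a standard Duhamel expansion''. The first half of (iii), namely that $\hat A\hat\pi\in\mathcal J_1(\Hi)$ with $\tr_\Hi(\hat A\hat\pi)=\Or(\epsi^{-n}\|A\|_{L^1})$ via \eqref{traceform} and the pointwise bound $|\tr_{\hfast}(A\pi)|\leq\|A\|\,\|\pi\|_1$, is also the paper's argument.

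There is, however, a genuine gap in your treatment of $\hat A\,\Pi^\epsi$. You bound $\tr_\Hi\big(\hat A(\Pi^\epsi-\hat\pi)\big)$ by $\|\hat A\|_{\mathcal J_1}\,\|\Pi^\epsi-\hat\pi\|$, but $\hat A$ alone is in general \emph{not} trace class on $\Hi=L^2(\R^n,\hfast)$: for $A=a\,\mathbf 1_{\hfast}$ with $a$ a scalar symbol and $\hfast$ infinite-dimensional, $\hat A=\hat a\otimes\mathbf 1_{\hfast}$ has infinite trace norm. What makes $\hat A\hat\pi$ trace class is precisely the factor $\hat\pi$, whose symbol takes values in $\mathcal J_1(\hfast)$; your decomposition discards it, and $\Pi^\epsi-\hat\pi$ is only controlled in operator norm by \eqref{pidiff}, which does not compensate. (Even when $\hfast$ is finite-dimensional, the asserted bound $\|\hat A\|_{\mathcal J_1}=\Or(\epsi^{-n}\|A\|_{L^1})$ is not what the Schatten-class estimates for Weyl operators give --- they involve $L^1$ norms of derivatives as well --- although there the accompanying $\Or(\epsi^\infty)$ factor would absorb the discrepancy.) The paper's route avoids this entirely: since $\Pi^\epsi$ is the spectral projection of $\hat\pi$ onto its spectrum near $1$, one can write $\Pi^\epsi=g(\hat\pi)\hat\pi=\hat\pi\,g(\hat\pi)$ for a bounded continuous $g$ with $\|g(\hat\pi)-{\bf 1}_\Hi\|=\Or(\epsi^\infty)$; then $\hat A\,\Pi^\epsi=(\hat A\hat\pi)\,g(\hat\pi)$ is trace class as a product of a trace-class and a bounded operator, and $\tr_\Hi(\hat A\,\Pi^\epsi)=\tr_\Hi(\hat A\hat\pi)\big(1+\Or(\epsi^\infty)\big)$. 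You should replace your decomposition by this one; the rest of your argument then goes through.
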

\begin{proof} 
The self-adjointness of the projected Hamiltonian on $D(\hat H)$ follows from the fact that $[\hat H,\Pi^\epsi]$ is a bounded operator.
A standard Duhamel expansion yields (\ref{Ucommu}) and (\ref{Hdiag}).
 For the last statement note that by (\ref{traceform})  the estimate 
 \[
\int_{\R^{2n}} \hspace{-3pt}\D q\D p \;|\tr_{\hfast}(A(q,p)\, \pi(q,p) )|  \leq 
\int_{\R^{2n}}  \hspace{-3pt}\D q\D p \;|\tr_{\hfast}( \pi(q,p))|\,\|A(q,p)\| \leq C\,\|A\|_{L^1}  < \infty 
\]
implies $\hat A \,\hat\pi\in \mathcal{J}_1(\Hi)$. 
 Since $\Pi^\epsi$ is the spectral projection of $\hat \pi$ to its spectrum near one, there is a bounded continuous function $g:\R\to \R$ with $\|g(\hat \pi) - {\bf 1}_\Hi\| = \Or(\epsi^\infty)$ such that $\Pi^\epsi = g(\hat \pi)\hat \pi$. This implies that also $\hat A\,\Pi^\epsi $ is trace class and that 
 \[
 \tr_\Hi(\hat A \, \Pi^\epsi  )=  \tr_\Hi(\hat A \,g(\hat \pi)\hat \pi) = \tr_\Hi(\hat A \, \hat \pi)(1 + \Or(\epsi^\infty))\,.
\] 
\end{proof}

Note that (\ref{Hdiag}) is the starting point for an approximation based on the concept of effective Hamiltonians. The idea is to map the range of $\Pi^\epsi$  
unitarily to $L^2(\R^n)$ using an eigenfunction $\varphi_0(q,p)$ in the range of $\pi_0(q,p)$. 
The image of   the block $\Pi^\epsi\hat H\Pi^\epsi$ of $\hat H$ is then 
a pseudo-differential operator $\hat h_{\rm eff}$, the effective Hamiltonian.  Its scalar symbol $h_{\rm eff}$ 
has an asymptotic expansion that can be, in principle, computed to any order. But it depends on a gauge, namely the choice of $\varphi_0(q,p)$. For details on this approach we refer to \cite{HS90,Teufel:2003}.

\section{The classical Hamiltonian system}\label{sec-clas}

For the semiclassical analysis within the subspace $\Pi^\epsi\Hi$ we can now focus on the restricted Hamiltonian $\Pi^\epsi \hat{H}\Pi^\epsi$. 
The central object is the following  scalar Hamiltonian function
\begin{equation}\label{hscdef}
\hsc(z)  :=e_0(z) + \xi\cdot z + \epsi \,\tr_{\Hi_{\rm f}}(H_1(z)\pi_0(z))+\epsi M(z),
\end{equation}
where 
\begin{equation} \label{Mdef}
M:=\tfrac{\imag}{2}\,\tr_{\Hi_{\rm f}}\left(\{\pi_0|H_0|\pi_0\}\right)  \,.
\end{equation}
We will see that $M$ can be written equivalently as
\begin{align}\label{Malt}
M \;&=  \;\tfrac{\imag}{2}\,\tr_{\Hi_{\rm f}}\left(\{\pi_0|H_0-e_0|\pi_0\}\right) \;=\;  \tfrac{\imag}{2}\,\tr_{\Hi_{\rm f}}\left(\{\pi_0|H_0-e_0|\pi_0\}\pi_0\right)\nonumber\\
&= \;-\tfrac{\imag}{2}\,\tr_{\Hi_{\rm f}}\left(\pi_0\{\pi_0,H_0-e_0\}\right)   \,,
\end{align}
which is the expression found in \cite{MR1376438}, who explain how to relate it also to \cite{MR1133847}.
The scalar Hamiltonian $h$ has the property that its quantization approximates the action of the full operator $\hat H$ on the range of the projection $\Pi^\epsi$ up to terms of order $\epsi^2$,
\[
\Pi^\epsi \hat{H}\Pi^\epsi  = \Pi^\epsi \,\hat \hsc \,\Pi^\epsi+\Or (\epsi^2)\,.
\]

\begin{prop} The symbol $\tilde \hsc(z) := \hsc(z) - \xi\cdot z$ belongs to $S^0(\epsi,\C)$ 
and $h$ satisfies
\begin{equation}\label{hscHapprox}
\pi \# \hsc  \# \pi  - \pi \# H \# \pi = \Or(\epsi^2) \quad\mbox{ in } \,S^0(\epsi, \mathcal{B}(\Hi_{\rm f}))\,.
\end{equation}
As a consequence,
\begin{equation}\label{eq:hsc}
\left\|  \Pi^\epsi \,\hat \hsc \,\Pi^\epsi - \Pi^\epsi \,\hat H \,\Pi^\epsi\right\|  = \Or(\epsi^2)\,.
\end{equation}
\end{prop}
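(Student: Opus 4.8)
The plan is to prove the symbol identity \eqref{hscHapprox} by a direct Moyal-calculus computation, using the triple-product expansion \eqref{eq:triple}, and then to deduce the operator statement \eqref{eq:hsc} from \eqref{pidiff} in Proposition~\ref{satz:sat} and the Calderon--Vaillancourt bound. First I would record that $\tilde h = e_0 + \epsi\,\tr_{\hfast}(H_1\pi_0) + \epsi M \in S^0(\epsi,\C)$: the gap condition gives $e_0\in S^0(\R)$ and $\pi_0\in S^0(\mathcal J_1(\hfast))$ (Assumption~2 and the remarks after Proposition~\ref{satz:sat}), $H_1\in S^0(\mathcal B(\hfast))$, and $M$ is built from $\pi_0$, $\partial\pi_0$ and $H_0-e_0$, all of which lie in the appropriate $S^0$-classes; the trace against the finite-rank $\pi_0$ (and its derivatives) stays bounded. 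The linear part $\xi\cdot z$ is handled separately precisely because it is unbounded, but it is scalar, so \eqref{weylcommu} applies to it and it will drop out of the difference below up to $\Or(\epsi^3)$.

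The core computation is to expand both $\pi\#h\#\pi$ and $\pi\#H\#\pi$ to order $\epsi^2$ and check that the $\epsi^0$ and $\epsi^1$ coefficients agree after sandwiching by $\pi_0$. At order $\epsi^0$ both sides give $\pi_0 h_0 \pi_0 = \pi_0 H_0\pi_0 = e_0\pi_0$ (dropping the linear term), since $h_0 = e_0$. At order $\epsi^1$, for $\pi\#H\#\pi$ the expansion \eqref{eq:triple} with $A=C=\pi$, $B=H$ produces $\pi_1 H_0\pi_0 + \pi_0 H_1\pi_0 + \pi_0 H_0\pi_1$ together with the Poisson-bracket terms $-\tfrac{\imag}{2}(\pi_0\{H_0,\pi_0\} + \{\pi_0,H_0\}\pi_0 + \{\pi_0|H_0|\pi_0\})$. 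For $\pi\#h\#\pi$, since $h_0=e_0$ is scalar, the $\pi_1$ terms combine to $e_0(\pi_1\pi_0+\pi_0\pi_1) = e_0\,\partial(\pi_0^2)\big|_{\text{diag}}$-type contributions, the subprincipal contribution is $\pi_0 h_1\pi_0 = \pi_0\big(\tr(H_1\pi_0) + M\big)\pi_0$, and the scalar Poisson brackets $\{e_0,\pi_0\}$ etc. assemble nicely because $e_0$ commutes with everything. The key algebraic inputs are: $\pi_0\pi_1\pi_0 = \tfrac{\imag}{2}\pi_0\{\pi_0,\pi_0\}\pi_0$ from \eqref{eq:pidi}; the off-diagonal structure $\partial_j\pi_0 = \pi_0(\partial_j\pi_0)\pi_0^\perp + \pi_0^\perp(\partial_j\pi_0)\pi_0$ from \eqref{pi0off}, which kills many $\pi_0(\cdots)\pi_0$ sandwiches; the identity $\pi_0 H_0 = H_0\pi_0 = e_0\pi_0$ and its differentiated consequences $\pi_0(\partial_j H_0)\pi_0 = (\partial_j e_0)\pi_0$; and finally the equivalence of the various forms of $M$ in \eqref{Malt}, which I would establish along the way using $\tr(\{\pi_0,\pi_0\})=0$ (cf.\ \eqref{cyctrace}) and integration-by-parts-free manipulations of $\pi_0\partial\pi_0\pi_0 = 0$. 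One checks that after applying $\pi_0(\cdot)\pi_0$ the full $\epsi^1$ coefficients of the two triple products coincide, so the difference $\pi\#h\#\pi - \pi\#H\#\pi$ has vanishing symbol to order $\epsi$, i.e.\ lies in $\epsi^2 S^0(\epsi,\mathcal B(\hfast))$; this is the content of \eqref{hscHapprox}. Since $\pi_0$ and its derivatives are trace class, one in fact gets the estimate in $S^0(\epsi,\mathcal J_1(\hfast))$ as well.

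For the operator consequence \eqref{eq:hsc}, I would quantize \eqref{hscHapprox}: $\hat\pi\hat h\hat\pi - \hat\pi\hat H\hat\pi = \mathrm{op}^W(\pi\#h\#\pi - \pi\#H\#\pi) = \Or(\epsi^2)$ in operator norm by Calderon--Vaillancourt. Then replace $\hat\pi$ by $\Pi^\epsi$ using $\|\Pi^\epsi - \hat\pi\| = \Or(\epsi^\infty)$ from \eqref{pidiff}, together with the boundedness of $\hat H$ on $D(\hat H)$ modulo the linear term; the linear term contributes only through $[\xi\cdot z, \hat\pi]$-type commutators which are $\Or(\epsi)$ and, sandwiched appropriately, combine with \eqref{pidiff} to stay within $\Or(\epsi^2)$ — more carefully, one writes $\Pi^\epsi\hat H\Pi^\epsi - \Pi^\epsi\hat h\Pi^\epsi = \Pi^\epsi(\hat H - \hat h)\Pi^\epsi$ and uses that $\hat H - \hat h = \widehat{H - h}$ has the linear part cancelled (since $h$ contains $\xi\cdot z$) so the remaining symbol is in $S^0(\epsi)$, then applies $\pi\#(H-h)\#\pi = \Or(\epsi^2)$ and swaps $\hat\pi\leftrightarrow\Pi^\epsi$.

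The main obstacle I anticipate is purely bookkeeping: correctly collecting the order-$\epsi$ terms of the two triple Moyal products and verifying, via the off-diagonal identity \eqref{pi0off}, the eigen-relation $\pi_0 H_0\pi_0 = e_0\pi_0$, and the subprincipal formula \eqref{eq:pidi}, that the combination of $\pi_1$-terms, subprincipal-symbol terms, and Poisson-bracket terms on the $h$-side exactly reproduces those on the $H$-side — in particular that the ``extra'' piece left over is precisely $\epsi\pi_0 M\pi_0$ matching $-\tfrac{\imag}{2}\pi_0\{\pi_0,H_0-e_0\}\pi_0$. Establishing the chain of equalities in \eqref{Malt} (and hence that $M$ is real, so that $h$ is a legitimate real scalar Hamiltonian) is the one genuinely non-mechanical sub-step, relying on the rank-one structure of $\pi_0$ and cyclicity of the trace; everything else is routine symbol calculus once that identity is in hand.
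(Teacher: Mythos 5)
Your plan follows the paper's proof essentially verbatim: expand both triple Moyal products via \eqref{eq:triple}, use $H_0\pi_0=e_0\pi_0$, \eqref{pi0off} and \eqref{eq:pidi} to match the order-$\epsi$ coefficients (the residual identity being exactly the equivalence of the forms of $M$ in \eqref{Malt}), then pass to operators by Calderon--Vaillancourt and swap $\hat\pi\leftrightarrow\Pi^\epsi$ using \eqref{pidiff}. One caution: \eqref{hscHapprox} requires the \emph{full} order-$\epsi$ coefficients of the two triple products to coincide, not merely their $\pi_0(\cdot)\pi_0$ blocks as your wording suggests; this does work out because the only terms in which the two expansions differ turn out to be automatically diagonal (a consequence of $(H_0-e_0)\pi_0=0$ together with \eqref{pi0off}), but that must be verified rather than inferred from the sandwiched identity alone.
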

Note that, in general, there is no scalar symbol $\hsc $ such that (\ref{hscHapprox}) holds with an error $\Or(\epsi^3)$. Neither does an analogous scalar symbol $\hsc $ exist in the case of a degenerate eigenvalue $e_0$. Also note that in the introduction we did not distinguish between $H$ and its principal symbol~$H_0$. Indeed, if we replace $H_0$ in all the following by $H$ and ignore terms containing $H_1$, all the results hold equally. Of course then $e_0$ and $\pi_0$ need to be eigenvalues and spectral projections of $H$. In concrete applications, however,  it is typically easier to split off the principal symbol.
 
 \begin{proof} The claim that $\tilde \hsc \in S^0(\epsi,\C)$ is an immediate consequence of the assumptions.
Since the linear part $\xi\cdot z$ appears also as a scalar in $H$, it suffices to check 
(\ref{hscHapprox}) for $\tilde h$ and $\tilde H = H_0 + \epsi H_1$.
According to (\ref{eq:triple}) the  asymptotic expansion of ${\pi}\#{\tilde H}\#{\pi}$ starts with  
\begin{align*}
\nonumber
{\pi}\#{\tilde H}\#{\pi}\;&= \; e_0\pi_0+\epsi\big( \pi_0 H_1\pi_0+e_0 \pi_1 \pi_0+e_0 \pi_0\pi_1  \\& \quad
\nonumber - \tfrac{\imag}{2}\pi_0\{H_0,\pi_0\}-  \tfrac{\imag}{2}\{\pi_0,H_0\}\pi_0
-\tfrac{\imag}{2}\{\pi_0|H_0|\pi_0\} \big) +\mathcal {O}(\epsi^2)\\ \nonumber
&=\; e_0\pi_0+\epsi \big( \tr_{\hfast}\left(\pi_0 H_1\right) \pi_0+e_0 \pi_1 \pi_0+e_0 \pi_0\pi_1  \\&  \quad   - \tfrac{\imag}{2}\pi_0\{H_0,\pi_0\}-  \tfrac{\imag}{2}\{\pi_0,H_0\}\pi_0
-\tfrac{\imag}{2}\{\pi_0|H_0|\pi_0\}\big) +\mathcal {O}(\epsi^2)\,.
\end{align*}
On the other hand 
\begin{align*}
 \nonumber
\pi \# \tilde \hsc  \# \pi \;&= \; e_0\pi_0 +\epsi \big( \tr_{\hfast}\left(\pi_0 H_1\right) \pi_0+e_0 \pi_1 \pi_0+e_0 \pi_0\pi_1
\\ \nonumber  & \quad   +\; M\,\pi_0   -\tfrac{\imag}{2}\pi_0 \{e_0,\pi_0\}-  \tfrac{\imag}{2}\{\pi_0,e_0\}\pi_0
-\tfrac{\imag}{2}\{\pi_0|e_0|\pi_0\}  \big)+\mathcal {O}(\epsi^2)\, .
\end{align*}
It remains to show that
\begin{equation}\label{Mform}
\tfrac{2}{\imag}M\pi_0\;=\; - \pi_0\{H_0-e_0,\pi_0\}-\{\pi_0,H_0-e_0\}\pi_0
-\{\pi_0|H_0-e_0|\pi_0\}\,.
\end{equation}
To see this note that (\ref{cyctrace}) and  (\ref{pi0off}) imply
\begin{align*}
M\pi_0\; &= \;\tfrac{\imag}{2} \tr_{\Hi_{\rm f}}\left(\{\pi_0|H_0|\pi_0\}\right)\pi_0\;
\stackrel{(\ref{cyctrace})}{=}\; \tfrac{\imag}{2} \tr_{\Hi_{\rm f}}\left(\{\pi_0|H_0-e_0|\pi_0\}\right)\pi_0\\
&\stackrel{(\ref{pi0off})}{=}\; \tfrac{\imag}{2} \tr_{\Hi_{\rm f}}\left(\pi_0 \{\pi_0|H_0-e_0|\pi_0\}\pi_0\right)\pi_0
\;=\; \tfrac{\imag}{2} \pi_0 \{\pi_0|H_0-e_0|\pi_0\} \pi_0\\
&\stackrel{(\ref{pi0off})}{=}\;  \tfrac{\imag}{2}   \{\pi_0|H_0-e_0|\pi_0\} \,.
\end{align*}
Moreover
\[
0 =   \partial_j \big( (H_0-e_0) \pi_0\big)\pi_0 =  \partial_j  (H_0-e_0)  \pi_0 +   (H_0-e_0) \partial_j  \pi_0\pi_0
\]
implies that 
\[
 \{\pi_0,H_0-e_0\}\pi_0 =- \{\pi_0| H_0-e_0|\pi_0\}\pi_0 =- \{\pi_0|H_0-e_0|\pi_0\}\,,
\]
which shows also  (\ref{Malt}).
By the same reasoning we have also 
$ \pi_0\{H_0-e_0,\pi_0\}=- \{\pi_0| H_0-e_0|\pi_0\}$
and thus (\ref{Mform}) follows.
 \end{proof}
 
In order to obtain semiclassical approximations up to errors of order $\epsi^2$, we need to take into account that 
the restriction to the range of $\Pi^\epsi$ also induces a modified symplectic form $\omega_\epsi $ on $\R^{2n}$ given by 
\begin{equation}\label{omegadef}
{\omega_\epsi} \;:=\;  \omega_0 + \epsi \,\Omega \;=:\;
 \left( \begin{array}{cc} 0 &  E_n\\  -E_n & 0\end{array}\right)\;+\;\epsi \left( \begin{array}{cc} \Omega^{qq} &  \Omega^{qp}\\  \Omega^{pq} & \Omega^{pp}\end{array}\right)
\,,
\end{equation}
where the
  components of $\Omega$ in the canonical basis are,
  with  $z=(q,p)$ and $\alpha,\beta =1,\ldots,2n$,
\[
  \Omega_{\alpha\beta} \;:=\; -\,\imag \,\tr_{\hfast}\left(\pi_0[\partial_{z_\alpha}\pi_0,\partial_{z_\beta}\pi_0] \right) \,.
\]
As is well known and will be shown also in  Proposition~\ref{berryprop}, $\Omega$ is the curvature 2-form of the Berry connection.

The  Liouville measure $\lambda_\epsi $ associated with the symplectic form $\omega_\epsi $ has the expansion 
\begin{equation}\label{liouville}
\lambda_\epsi=  \left( 1+\imag \epsi \,\tr_{\hfast}\left(\pi_0\{\pi_0,\pi_0\}\right)+\mathcal{O}(\epsi^2)\right) \df q^1\wedge\dots\wedge\df p^n. 
\end{equation}
We postpone a discussion of these objects to Section~\ref{section:correm}. There we  prove that $\omega_\epsi$ is indeed a symplectic form and that its Liouville measure is given by (\ref{liouville}).
First, however, we present and prove our main results concerning semiclassical approximations.

\section{Semiclassical approximations}
\label{sec:sc}

We now explain how to approximate quantum mechanical expectation values 
\[
\tr_\Hi \left( \rho \,\hat a\right)
\]
 of semiclassical observables 
$\hat a := \hat{ a}  \otimes \mathbf{1}_{\hfast}$ for the slow degrees of freedom in terms of the classical Hamiltonian system described in the preceding section. Of course this can only work for states 
$\rho$ that belong to the range of $\Pi^\epsi$.

\subsection{Expectation values for stationary states}

In this section we consider stationary states that are suitable functions of the Hamiltonian,~i.e.\
\[
\rho = f( \hat H)\,.
\]
Since we make use of the Helffer-Sj\"ostrand formula for analyzing functions of self-adjoint operators, a natural 
space of admissible functions $f$ is the space
\[
\mathcal{A} := \{ f\in C^\infty(\R)\;|\; \exists \beta < 0\,:\; \sup_{x\in\R} |\langle x\rangle^{n-\beta  } \,f^{(n)}(x)| <\infty\;
\mbox{ for all } n\in \N_0\, \}\,.
\] 
The following theorem shows that we can approximate the quantum mechanical expectation values
by a classical phase space average up to errors of order $\epsi^2$ if we use the correct classical Hamiltonian $\hsc $ and the correct Liouville measure~$\lambda_\epsi $.
An important application of this result is the possibility  to express quantum equilibrium distributions in terms of the classical system. 


\begin{theo}[\bf Equilibrium distributions]
\label{satz:exp}
Let Assumptions~1 and 2 of Section~\ref{sec:sett} hold. Then for  $f\in \mathcal{A}$ and $a\in S^k(\C) $ with $k>2n+1$ it holds that
\[
\tr_\Hi \left(\Pi^\epsi f(\hat H)\, \hat a\right) \;=\; \frac{1}{(2\pi\epsi)^n}\left( \int \D \lambda_\epsi \; f(\hsc (q,p)) \, a(q,p) \;+\;\Or\left(\epsi^{2}\|a\|_{L^1}\right)\right)\,.
\]
\end{theo}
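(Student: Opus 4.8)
The plan is to reduce the statement to the trace formula \eqref{traceform} together with the Helffer--Sj\"ostrand functional calculus and the symbolic identities already established. First I would use \eqref{pidiff} to replace $\Pi^\epsi$ by the operator $\hat\pi$ up to an error $\Or(\epsi^\infty\|a\|_{L^1})$ inside the trace (using the last statement of Corollary~\ref{koro:time} to control the trace norm), so that it suffices to analyse $\tr_\Hi(\hat\pi\, f(\hat H)\,\hat a)$. Next, since $[\hat H,\Pi^\epsi]=\Or(\epsi^\infty)$ and $\Pi^\epsi$ is a spectral projection of $\hat\pi$, one can commute $\hat\pi$ past $f(\hat H)$ at the price of $\Or(\epsi^\infty)$ and write $\hat\pi\,f(\hat H)\,\hat\pi$, modulo $\Or(\epsi^\infty)$, as $f$ applied to the restricted Hamiltonian $\Pi^\epsi\hat H\Pi^\epsi$; using \eqref{eq:hsc} this is $f(\Pi^\epsi\hat\hsc\Pi^\epsi)+\Or(\epsi^2)$, where one has to check that $f\in\mathcal A$ makes the error in \eqref{eq:hsc} propagate through the functional calculus with only an order-$\epsi^2$ loss (this is where the Helffer--Sj\"ostrand almost-analytic extension, the resolvent identity, and the decay assumptions defining $\mathcal A$ enter).

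Having arrived at $\tr_\Hi(\hat\pi\,f(\hat\hsc)\,\hat\pi\,\hat a)+\Or(\epsi^2\|a\|_{L^1})$, I would compute the symbol of $\hat\pi\,\widehat{f(\hsc)}\,\hat\pi$ to first order in $\epsi$ via the Moyal expansion. Because $\hsc$ is scalar and $a$ is scalar, \eqref{weylcommu} and the cyclicity relations \eqref{cyctrace}, \eqref{pi0off} should collapse many terms; what survives is $f(\hsc_0)\,\pi_0$ at leading order, with the $\epsi$-correction to the symbol of $\widehat{f(\hsc)}$ furnished by the Helffer--Sj\"ostrand representation $f(\hsc)=\frac1\pi\int\bar\partial\tilde f(\zeta)\,(\zeta-\hsc)^{-1}\,\df\zeta\wedge\df\bar\zeta$ combined with the symbol expansion of $(\zeta-\hat\hsc)^{-1}$. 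Plugging into \eqref{traceform} gives
\[
\tr_\Hi\big(\hat\pi\,\widehat{f(\hsc)}\,\hat a\big)=\frac{1}{(2\pi\epsi)^n}\int\df q\,\df p\;\tr_{\hfast}\!\Big(\big(\pi\#f(\hsc)\#\pi\big)(q,p)\Big)\,a(q,p)\;+\;\Or(\epsi^2\|a\|_{L^1}),
\]
and the job becomes identifying $\tr_{\hfast}(\pi\#f(\hsc)\#\pi)$ with $f(\hsc)\cdot(1+\imag\epsi\,\berry)+\Or(\epsi^2)$, i.e.\ with the density of $\lambda_\epsi$ from \eqref{liouville} times $f(\hsc)$.

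The main obstacle, and the heart of the computation, is that last identification: one must show that the $\Or(\epsi)$ term in $\tr_{\hfast}(\pi\#f(\hsc)\#\pi)$ is exactly $f(\hsc_0)$ times the Liouville density correction $\imag\,\tr_{\hfast}(\pi_0\{\pi_0,\pi_0\})$ plus $f'(\hsc_0)$ times the $\epsi$-part of $\hsc$ (which reconstitutes $f(\hsc)$ to the needed order). This requires carefully collecting (i) the contribution of $\pi_1$ through $\pi_0\pi_1\pi_0=\tfrac\imag2\pi_0\{\pi_0,\pi_0\}\pi_0$ from \eqref{eq:pidi}, (ii) the Poisson-bracket terms $-\tfrac\imag2(\pi_0\{f(\hsc_0),\pi_0\}+\{\pi_0,f(\hsc_0)\}\pi_0+\{\pi_0|f(\hsc_0)|\pi_0\})$ from \eqref{eq:triple}, and (iii) the subprincipal symbol of $f(\hsc)$ itself, then taking $\tr_{\hfast}$ and using \eqref{cyctrace} and \eqref{pi0off} to see the off-diagonal pieces drop and the diagonal pieces assemble into the claimed density. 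I expect this to be a somewhat delicate but purely algebraic bookkeeping once the functional-calculus reduction is in place; the analytic input (Calder\'on--Vaillancourt, trace estimates \eqref{traceest}, and the Helffer--Sj\"ostrand error control) is standard given the hypotheses on $f$ and $a$.
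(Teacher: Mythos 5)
Your plan reproduces the paper's proof essentially step for step: the Helffer--Sj\"ostrand reduction from $f(\hat H)$ to $f(\Pi^\epsi\hat H\Pi^\epsi)$ and then to $f(\hat \hsc)$, the scalar functional calculus $\|f(\hat \hsc)-\widehat{f(\hsc)}\|=\Or(\epsi^2)$, the trace formula (\ref{traceform}), and the Moyal bookkeeping showing $\tr_{\hfast}\left(\pi\# f_h\#\pi\right)=f_h\left(1+\imag\epsi\,\tr_{\hfast}\left(\pi_0\{\pi_0,\pi_0\}\right)\right)+\Or(\epsi^2)$. The only point to make explicit when writing it up is that undoing the projection, $\Pi^\epsi f(\Pi^\epsi\hat \hsc\Pi^\epsi)\Pi^\epsi=\Pi^\epsi f(\hat \hsc)\Pi^\epsi+\Or(\epsi^2)$, requires the quadratic bound $\|Pf(A)P-f(PAP)\|\le C\,\|[A,P]\|^2$ of Lemma~\ref{lem:est}, since $\|[\hat \hsc,\Pi^\epsi]\|$ is only $\Or(\epsi)$ and a linear estimate would lose an order.
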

Note that the precise error estimate in terms of $\|a\|_{L^1}$ is very useful when taking e.g.\ a thermodynamic limit, i.e.\ when looking at a sequence of observables $a_n$ that extend over larger and larger regions of phase space. This is done explicitly in Theorem~\ref{pressure}, where we use Theorem~\ref{satz:exp} to compute the free energy per unit volume in the Hofstadter model. In that model it is also explicit, that only the phase space volume related to each band is modified, while the total phase space volume is unchanged as the total integrated curvature of all bands is zero. 
\begin{proof} 
According to Corollary~\ref{koro:time}, $\Pi^\epsi\hat a$ is trace class with $\tr_\Hi (\Pi^\epsi\hat a) = \Or(\epsi^{-n}\|a\|_{L^1})$
and thus we can estimate expressions of the form $\tr_\Hi (R \,\hat a\,\Pi^\epsi)$ with $R\in \mathcal{L}(\Hi)$ by  $C\epsi^{-n}\|R\|\,\|a\|_{L^1}$ with a constant $C$ independent of $\epsi$ and $R$. Recall that $\| [  \hat H ,\,\Pi^\epsi] \|  = \Or(\epsi^\infty)$ and, since    $\hsc $ is scalar,   $\| [  \hat \hsc,\,\Pi^\epsi] \| =\| [  \hat \hsc,\,\hat \pi] \| +\Or(\epsi^\infty)  = \Or(\epsi)$.
In the following computation we use these observations together with the statements of Lemma~\ref{lem:est} and Lemma~\ref{lem:est2} from the appendix where indicated,
\begin{align*}
\tr_\Hi \left( f(\hat H)\, \hat a\,\Pi^\epsi\right)   \;
&\stackrel{(\ref{esti1})}{=}  \; \tr_\Hi \left(\Pi^\epsi  f(\hat H )\, \Pi^\epsi\, \hat a \right)\;+\;\Or(\epsi^\infty\|a\|_{L^1})\\
&\stackrel{(\ref{esti2})}{=}  \; \tr_\Hi \left(\Pi^\epsi  f(\Pi^\epsi \hat H \Pi^\epsi )\, \Pi^\epsi \,\hat a \right) \;+\;\Or(\epsi^\infty\|a\|_{L^1})\\
&\stackrel{(\ref{esti3})}{=}  \; \tr_\Hi \left(\Pi^\epsi  f(\Pi^\epsi \hat \hsc \Pi^\epsi )\, \Pi^\epsi \,\hat a \right) \;+\;\Or(\epsi^{2-n}\|a\|_{L^1})\\
&\stackrel{(\ref{esti2})}{=}  \; \tr_\Hi \left(\Pi^\epsi  f( \hat \hsc  )\, \Pi^\epsi \,\hat a \right) \;+\;\Or(\epsi^{2-n}\|a\|_{L^1})\\
& \;=  \; \;\tr_\Hi \left(\hat\pi\, f( \hat \hsc   )\, \hat\pi\, \hat a  \right) \;+\;\Or(\epsi^{2-n}\|a\|_{L^1})\,.
\end{align*}
Next note that for scalar symbols the functional calculus for pseudo-differen\-tial operators implies 
that $\| f( \hat \hsc  ) - \widehat{f(   \hsc   )}\|= \Or(\epsi^2)$, c.f.\ e.g.\ Chapter~8 in \cite{MR1735654}. Since this is an important and nontrivial step, we sketch a proof of this statement in Lemma~\ref{lem-func} in the appendix.
Hence with $f_h := f\circ h \in S^0(\epsi,\R)$ we have
\begin{align}\label{form1}
\tr_\Hi \left(\Pi^\epsi f(\hat H )\, \hat a\right)   
&\;=  \;    \;\tr_\Hi \left( \hat\pi \widehat{f(   \hsc  )} \hat\pi\hat a\right) \;+\;\Or(\epsi^{2-n}\|a\|_{L^1})\\
&\stackrel{(\ref{traceform})}{=}  \; \frac{1}{(2\pi\epsi)^n} \int \D q\D p\; \;\tr_{\Hi_{\rm f}}\left(  ( \pi\#  f_h\#\pi) (q,p)\,  a  (q,p)\right)\;+\;\Or(\epsi^{2-n}\|a\|_{L^1})\nonumber\\
& \; =  \;  \frac{1}{(2\pi\epsi)^n}\int \D q\D p\;  a(q,p)\;\tr_{\Hi_{\rm f}}\left(  (  \pi\# f_h\#\pi ) (q,p)\right)\;+\;\Or(\epsi^{2-n}\|a\|_{L^1})
\,,\nonumber
\end{align}
where  $ \pi\# f_h\#\pi \in S^0(\epsi, \mathcal{J}_1(\hfast))$.  

It remains to compute $\tr_{\hfast}\left( \pi\# f_h\#\pi   \right)$ up to $\mathcal{O}(\epsi^2)$ in $S^0(\epsi, \mathcal{J}_1(\hfast))$. 
Using (\ref{eq:triple}) and the fact that $f_h$ is scalar, we have that  the expansion of 
 $F := \pi\# f_h\#\pi$ starts with
$F_0 = \pi_0 f_{h0}$ and 
\[
F_1\;=\; \pi_0 f_{h1}\pi_0+ f_{h0}\pi_1\pi_0+f_{h0}\pi_0\pi_1   - \tfrac{\imag}{2}\pi_0\{f_{h0},\pi_0\}-  \tfrac{\imag}{2}\{\pi_0, f_{h0}\}\pi_0
-\tfrac{\imag}{2}f_{h0}\{\pi_0,\pi_0\}\,.
\]
Taking the trace we get
\begin{align*}
\tr_{\hfast}\left( \pi\# f_h\#\pi   \right) \; &= \;  \tr_{\hfast}\left( \pi_0 f_{h0} +\epsi F_1   \right) \;+\;\Or(\epsi^2)
 \\
&= \;   f_{h0} +\epsi \left(  \tr_{\hfast}\left( \pi_0 F_1 \pi_0   \right)+\tr_{\hfast}\left( \pi_0^\perp F_1 \pi_0^\perp   \right)\right)\;+\;\Or(\epsi^2) \\
&=  \;  f_{h0}+\epsi f_{h1} + \epsi  \,2 f_{h0}  \tr_{\hfast}\left( \pi_0 \pi_1 \pi_0   \right) 
\;+\;\Or(\epsi^2)\,,
\end{align*}
where we used that
for scalar symbols    
 \[  
  \{f_h,\pi_0\} = \pi_0 \{f_h,\pi_0\}\pi_0^\perp +\pi_0^\perp \{f_h,\pi_0\} \pi_0 
\]
has vanishing trace 
 and   that also $\tr_{\hfast}\{\pi_0,\pi_0\}=0$, see (\ref{cyctrace}). With the expression (\ref{eq:pidi}) for $\pi_1$ the final result  is
\begin{equation*}
\tr_{\hfast}\left( \pi\# f_h\#\pi   \right)= f_h \left(1 + \imag \epsi \tr_{\hfast}\left( \pi_0 \{ \pi_0,\pi_0\}    \right)  \right)\;+\;\Or(\epsi^2)
\,.
\end{equation*}
Inserting this into (\ref{form1}) and comparing with (\ref{liouville}) completes the proof.
\end{proof}

\subsection{Egorov theorem and transport of Wigner functions}

For non-stationary states $\rho(t) =  \ez^{-\imag \hat H \frac{t}{\epsi}} \,\rho \, \ez^{\imag \hat H \frac{t}{\epsi}}$ starting in the range of $\Pi^\epsi$, i.e.\ $\rho =\Pi^\epsi\rho\Pi^\epsi$, (\ref{Ucommu}) implies that
\begin{align*}
\Pi^\epsi \,\rho(t)\,\Pi^\epsi  \; &= \;   \Pi^\epsi \,\ez^{-\imag \hat H \frac{t}{\epsi}} \,\rho \, \ez^{\imag \hat H \frac{t}{\epsi}}\,\Pi^\epsi
= \ez^{-\imag \hat H \frac{t}{\epsi}} \,\Pi^\epsi \rho\Pi^\epsi \, \ez^{\imag \hat H \frac{t}{\epsi}} +\Or(|t|\epsi^\infty)
\\&= \; \rho(t)+\Or(|t|\epsi^\infty)\,.
\end{align*}
For semiclassical observables $\hat a = \hat a\otimes \mathbf{1}_{\hfast}$   we thus obtain
\[
\tr_\Hi ( \rho(t) \,\hat a )\;=\;\tr_\Hi ( \rho \;\Pi^\epsi \underbrace{ \ez^{\imag \hat H \frac{t}{\epsi}} \,\hat a \, \ez^{-\imag \hat H \frac{t}{\epsi}}}_{=:A(t)} \Pi^\epsi )+\Or(|t|\epsi^\infty)\,.
\]
The Egorov theorem shows that on the range of $\Pi^\epsi$ one can approximate the quantum mechanical evolution $A(t)$ 
by evolving the symbol $a$ along the classical flow $\phi^t_\epsi$ up to errors of order $\epsi^2$.
Its proof for scalar Hamiltonians follows from the relation (\ref{weylcommu}) between commutators of operators and the Poisson bracket for the symbols.
As the following proposition shows, a similar statement holds in our case, but with a modified Poisson bracket.

\begin{prop}\label{commprop}
Let Assumptions~1 and 2 of Section~\ref{sec:sett} hold. Then for $\epsi$ small enough, 
the Hamiltonian vector field   $X_h^\alpha = -(\omega_\epsi)^{\alpha\beta} \partial_\beta h$ is a classical symbol in $S^0(\epsi,\R^{2n})$. There is a $r\in\N$ such that for any  $a \in S^0(\epsi,\mathbb{C})$ it holds that
\begin{equation}\label{CommuPoisson}
\left\|  \Pi^\epsi \left( \tfrac{\I}{\epsi}\big[ \hat H, \hat a \big]   \;-\;   {\rm Op}^{\rm W} ( X_h\cdot \nabla a
 ) \right)  \Pi^\epsi \right\| \;=\; \Or(\epsi^2\|a\|_{0,r})
 \,.
\end{equation}
Here $X_h\cdot \nabla a = \{ h,a\}_{\omega_\epsi}$ is just the Poisson bracket with respect to the modified symplectic form. For $a(z) = \eta\cdot z$ with $\eta\in\R^{2n}$ formula (\ref{CommuPoisson}) holds with an error of order $\Or(\epsi^2|\eta|)$.

\end{prop}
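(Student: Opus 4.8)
The plan is to reduce everything to symbol computations in the Moyal calculus and then sandwich the result between the projections. First I would observe that, since $\Pi^\epsi = \hat\pi + \Or(\epsi^\infty)$ and $\|[\hat H,\Pi^\epsi]\| = \Or(\epsi^\infty)$, one may freely replace $\Pi^\epsi$ by $\hat\pi$ throughout and, using $\|\Pi^\epsi - \hat\pi\| = \Or(\epsi^\infty)$ together with (\ref{hscHapprox}), also replace $\hat H$ by $\hat h$ on the range of $\Pi^\epsi$ up to an error $\Or(\epsi^2)$ in operator norm. More precisely, I would write
\[
\Pi^\epsi \tfrac{\I}{\epsi}[\hat H,\hat a]\Pi^\epsi = \Pi^\epsi \tfrac{\I}{\epsi}[\hat h,\hat a]\Pi^\epsi + \Or(\epsi^2)
\]
using that $\hat a$ is bounded (for $a\in S^0$) and that $\hat\pi (\hat H - \hat h)\hat\pi$, hence $\hat\pi[\hat H-\hat h,\hat a]\hat\pi$, is $\Or(\epsi^2)$; a small Duhamel-type or direct commutator manipulation handles the fact that $\hat a$ does not commute with $\hat\pi$, exactly as in the proof of Theorem~\ref{satz:exp}. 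This is where the $\epsi^2$ in the statement comes from, and it is the structural heart of the argument.

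Next, since $h$ is a \emph{scalar} symbol, I would invoke (\ref{weylcommu}): $\tfrac{\I}{\epsi}[\hat h,\hat a] = {\rm Op}^{\rm W}(\{h,a\}) + \Or(\epsi^2)$ in operator norm, with the remainder controlled by finitely many seminorms of $a$ (and of $\tilde h\in S^0(\epsi,\C)$, which is fixed). Here $\{h,a\}$ is the \emph{standard} Poisson bracket with respect to $\omega_0$. The remaining task is then purely classical: to show that
\[
\{h,a\} = X_h\cdot\nabla a + \Or(\epsi^2)
\]
pointwise in $S^0(\epsi,\C)$, where $X_h^\alpha = -(\omega_\epsi)^{\alpha\beta}\partial_\beta h$. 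This is just the observation already recorded in the introduction: inverting $\omega_\epsi = \omega_0 + \epsi\Omega$ via the Neumann series gives $(\omega_\epsi)^{\alpha\beta} = \omega_0^{\alpha\beta} - \epsi\,\omega_0^{\alpha\gamma}\Omega_{\gamma\delta}\omega_0^{\delta\beta} + \Or(\epsi^2)$, so $X_h = X_h^{(0)} + \epsi X_h^{(1)} + \Or(\epsi^2)$ with $X_h^{(0)}\cdot\nabla a = \{e_0+\xi\cdot z,a\} = \{h,a\} + \Or(\epsi)$ and the $\Or(\epsi)$ correction in $X_h$ precisely reproducing the $\Or(\epsi)$ term $\epsi\{M + \tr_{\hfast}(H_1\pi_0),a\} - \epsi\,(\text{curvature terms})$ — wait, one must check these agree. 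Indeed $\{h,a\}$ expands as $\{e_0+\xi\cdot z,a\} + \epsi\{M+\tr_{\hfast}(H_1\pi_0),a\}$, while $X_h\cdot\nabla a$ expands as the same leading term plus $\epsi\big(\{e_0,a\}\text{-correction from }X_h^{(1)} + \{M+\ldots,a\}\big)$; since $X_h^{(1)}$ contributes $-\epsi\,\partial_\gamma(e_0)\,\omega_0^{\alpha\gamma'}\Omega_{\gamma'\delta}\omega_0^{\delta\alpha'}\ldots$ acting on $\partial a$, and $\{M+\ldots,a\}$ is $\Or(\epsi)$ — here the point is simply that $h = e_0 + \xi\cdot z + \Or(\epsi)$ so at order $\epsi$ only $e_0$ enters the symplectic-correction term, and both sides match by construction of $X_h$. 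So this step is a routine Taylor expansion, no obstacle.

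I expect the only genuine subtlety to be the first step: justifying that replacing $\hat H$ by $\hat h$ inside the sandwiched commutator costs only $\Or(\epsi^2)$ \emph{in operator norm}, uniformly, with the error controlled by a seminorm $\|a\|_{0,r}$ for some fixed finite $r$. This requires (\ref{hscHapprox}) at the symbol level — which gives $\hat\pi(\hat H-\hat h)\hat\pi = \Or(\epsi^2)$ via Calderon–Vaillancourt and continuity of the Moyal product — plus a commutator identity to move $\hat a$ past $\hat\pi$; one writes $\hat\pi[\hat H-\hat h,\hat a]\hat\pi = [\hat\pi(\hat H-\hat h)\hat\pi,\hat a] + \hat\pi(\hat H-\hat h)\hat\pi^\perp\hat a\hat\pi - \hat\pi\hat a\hat\pi^\perp(\hat H-\hat h)\hat\pi + (\text{terms with }[\hat\pi,\hat a])$, and since $\hat\pi^\perp(\hat H-\hat h)\hat\pi = \Or(\epsi^\infty)$ (because $[\hat H,\hat\pi]=\Or(\epsi^\infty)$ and $\hat h$ is scalar so $[\hat h,\hat\pi]=\Or(\epsi)$ — actually here one needs $\hat\pi^\perp\hat H\hat\pi$ and $\hat\pi^\perp\hat h\hat\pi$ both small, the former by construction and the latter because $\hat\pi^\perp\hat h\hat\pi = -\I\epsi\hat\pi^\perp\widehat{\{h,\pi_0\}}\hat\pi + \Or(\epsi^2)$ which is $\Or(\epsi)$), one is left controlling bounded operators times $\hat a$, giving the stated $\Or(\epsi^2\|a\|_{0,r})$ with the linear-symbol case $a=\eta\cdot z$ handled identically since $\widehat{\eta\cdot z}$ commutes with everything up to the Poisson bracket and the relevant seminorm is just $|\eta|$. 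Apart from bookkeeping, there is nothing hard here; the real content was already done in Proposition~\ref{satz:sat} and the proposition preceding this one.
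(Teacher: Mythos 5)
Your proposal goes wrong at the two steps you yourself flag as ``routine.'' In step~(1) you claim
$\Pi^\epsi\tfrac{\I}{\epsi}[\hat H,\hat a]\Pi^\epsi = \Pi^\epsi\tfrac{\I}{\epsi}[\hat h,\hat a]\Pi^\epsi + \Or(\epsi^2)$,
but this is false: writing $B:=\hat H - \hat h$ and inserting $\Pi^\epsi+\Pi^{\epsi\perp}$ between $B$ and $\hat a$ gives the cross-terms
$\tfrac{\I}{\epsi}\Pi^\epsi B\,\Pi^{\epsi\perp}\hat a\,\Pi^\epsi - \tfrac{\I}{\epsi}\Pi^\epsi\hat a\,\Pi^{\epsi\perp}B\,\Pi^\epsi$.
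You correctly note that $\Pi^{\epsi\perp}B\,\Pi^\epsi=\Or(\epsi)$ (it equals $-[\hat h,\Pi^\epsi]\Pi^\epsi+\Or(\epsi^\infty)$ since $\Pi^{\epsi\perp}\hat H\Pi^\epsi=\Or(\epsi^\infty)$), but after dividing by $\epsi$ these cross-terms are $\Or(\|a\|_{0,0})$, \emph{not} $\Or(\epsi^2\|a\|_{0,r})$. In step~(3) you claim $\{h,a\}_{\omega_0}=X_h\cdot\nabla a+\Or(\epsi^2)$ pointwise, which contradicts the very definition of $X_h$: inverting $\omega_\epsi=\omega_0+\epsi\Omega$ via the Neumann series gives $X_h\cdot\nabla a=\{h,a\}_{\omega_0}+\epsi\,\omega_0^{\alpha\gamma}\Omega_{\gamma\delta}\omega_0^{\delta\beta}\,\partial_\beta h\,\partial_\alpha a+\Or(\epsi^2)$, and the order-$\epsi$ curvature term does not vanish unless $\Omega=0$. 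You pause at exactly this point (``wait, one must check these agree'') and then wave it away; in fact they \emph{do not} agree at order $\epsi$.

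The two order-$\epsi$ quantities you discard are not errors --- they are the same object, and matching them is the entire content of the proposition. The paper's route exposes this: it first passes (with genuine $\Or(\epsi^2)$ cost, using $(h_2)_0=\pi_0(h_2)_0\pi_0$ so the principal commutator $[(h_2)_0,\pi_0 a\pi_0]$ vanishes) to the commutator of the \emph{projected} operators $\tfrac{\I}{\epsi}[\Pi^\epsi\hat h\Pi^\epsi,\Pi^\epsi\hat a\Pi^\epsi]$, and then uses the exact algebraic identity
\[
\tfrac{\I}{\epsi}\bigl[\Pi^\epsi\hat h\Pi^\epsi,\Pi^\epsi\hat a\Pi^\epsi\bigr]
= \tfrac{\I}{\epsi}\Pi^\epsi[\hat h,\hat a]\Pi^\epsi
\;-\;\I\epsi\,\Pi^\epsi\Bigl[\tfrac{\I}{\epsi}[\hat h,\Pi^\epsi],\,\tfrac{\I}{\epsi}[\hat a,\Pi^\epsi]\Bigr]\Pi^\epsi .
\]
The second term on the right is $\Or(\epsi)$ (each inner $\tfrac{\I}{\epsi}[\cdot,\Pi^\epsi]$ is $\Or(1)$), and after applying (\ref{weylcommu}) to both inner commutators and tracing out the rank-one projection it becomes exactly the order-$\epsi$ curvature correction $-\I\epsi\,\tr_{\hfast}(\pi_0[\{h,\pi_0\},\{a,\pi_0\}])$, i.e.\ precisely the gap between $\{h,a\}_{\omega_0}$ and $X_h\cdot\nabla a$. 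Without this rearrangement the Berry curvature never enters your chain of estimates, so the argument cannot be repaired by tightening constants; the decomposition itself must be different.
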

\begin{proof}   The statement about $X_h$ follows immediately  from $\Omega_{\alpha\beta}\in S^0(\R)$ 
and $\nabla h\in S^0(\epsi, \R^{2n})$. Let  $a \in S^0(\epsi,\mathbb{C})$, then with (\ref{eq:bcommu}) we have 
\[
\tfrac{\I}{\epsi} \Pi^\epsi \big[ \hat H, \hat a \big] \Pi^\epsi =  \tfrac{\I}{\epsi}\big[  \Pi^\epsi \hat H  \Pi^\epsi ,  \Pi^\epsi \hat a  \Pi^\epsi\big]  \;+\;\Or(\epsi^\infty\|\hat a\| )\,.
\]
In order to replace $\hat H$ by $\hat h$, observe that 
\[
 \tfrac{\I}{\epsi} \Pi^\epsi (\hat{H}- \hat \hsc ) \Pi^\epsi = \tfrac{\I}{\epsi} {\rm Op}^{\rm W}( \pi \# H \# \pi  - \pi \#  \hsc \# \pi )\;+\;\Or(\epsi^\infty)
\]
and that by (\ref{hscHapprox})
\[
h_2 := \epsi^{-2}( \pi \# H \# \pi  - \pi \#  \hsc  \# \pi )\in S^0(\epsi,\mathcal{B}(\hfast))
\]
 satisfies $\pi \# h_2 \# \pi = h_2 + \Or(\epsi^\infty)$. Hence  its principal symbol 
satisfies
\[
(h_2)_0 = \pi_0 \,(h_2)_0\,\pi_0 
\]
and we obtain for sufficiently large $r\in\N$ with $r\geq 2n+1$ that
\begin{align*}
 \tfrac{\imag}{\epsi} \left[ \Pi^\epsi (\hat{H}- \hat \hsc ) \Pi^\epsi, \Pi^\epsi\hat{ a } \Pi^\epsi \right]  \; 
&=  \; \imag\epsi  \left[\hat h_2,  \hat \pi \,\hat a\,\hat\pi \right]  +\Or(\epsi^\infty\|\hat a\|)\\
&=  \; \imag\epsi \, {\rm Op}^{\rm W} \left(\left[(h_2)_0,  \pi_0  a   \pi_0 \right] \right)  +\Or(\epsi^2\| a\|_{0,r}) = \Or(\epsi^2\|a\|_{0,r})\,.
\end{align*}
The remaining term can be rearranged in the following way,
\begin{eqnarray*} \lefteqn{\hspace{-5mm}
 \tfrac{\imag}{\epsi}  \left[ \Pi^\epsi \hat\hsc  \Pi^\epsi, \Pi^\epsi\hat{ a } \Pi^\epsi \right]  
\;=\; \tfrac{\imag}{\epsi}  \Pi^\epsi [ \hat\hsc  ,\hat{ a }  ]  \Pi^\epsi  -
\tfrac{\imag}{\epsi}  \Pi^\epsi \left(\hat\hsc  \Pi^{\epsi\perp} \hat{ a } -   \hat{ a } \Pi^{\epsi\perp}  \hat\hsc \right) 
\Pi^\epsi} \\
&=& \tfrac{\imag}{\epsi}  \Pi^\epsi  [ \hat\hsc  ,\hat{ a }  ]  \Pi^\epsi  +
\tfrac{\imag}{\epsi}  \Pi^\epsi \left( [\hat\hsc , \Pi^{\epsi\perp}]\,[ \hat{ a } ,\Pi^{\epsi\perp}]- [\hat{ a }, \Pi^{\epsi\perp}]\,[  \hat\hsc ,\Pi^{\epsi\perp}] \right) \Pi^\epsi \\
&=& \tfrac{\imag}{\epsi}  \Pi^\epsi  [ \hat\hsc  ,\hat{ a }  ]  \Pi^\epsi  -\imag \epsi\,
 \Pi^\epsi \left[\tfrac{\imag}{\epsi}  [\hat\hsc , \Pi^\epsi]\,,\,\tfrac{\imag}{\epsi} [ \hat{ a } ,\Pi^\epsi] \right] \Pi^\epsi \,.
\end{eqnarray*}
Since $\hsc $ and $ a $ are scalar, (\ref{weylcommu}) implies
\begin{align*}
\tfrac{\imag}{\epsi} [ \hat\hsc  ,\hat{ a }  ]   \; &= \;  {\rm Op}^{\rm W} ( \{\hsc  ,{ a }  \}) +\Or(\epsi^2\| a\|_{0,r})\\
\tfrac{\imag}{\epsi} [ \hat\hsc  ,\Pi^\epsi  ]   \; &= \;  {\rm Op}^{\rm W} ( \{\hsc  ,\pi_0  \})+\Or(\epsi )\\
\tfrac{\imag}{\epsi} [ \hat{ a }  ,\Pi^\epsi  ]  \;  &= \;  {\rm Op}^{\rm W} ( \{  a  ,\pi_0  \})+\Or(\epsi\| a\|_{0,r})\,.
\end{align*}
Here and in the following steps a priori   different $r$s might be necessary. We agree to denote by $r$ the largest one  
 that works in all estimates.
Summing up the estimates we got up to now we have
\[
\tfrac{\I}{\epsi} \Pi^\epsi \big[ \hat H, \hat a \big] \Pi^\epsi \;=\;
  \Pi^\epsi \; {\rm Op}^{\rm W} \left( \{\hsc  ,{ a }  \}  -\imag\epsi\, [  \{\hsc  ,\pi_0  \},  \{  a  ,\pi_0  \}]
  \right)\;
   \Pi^\epsi+ \Or(\epsi^2\| a\|_{0,r})\,.
\]
The subprincipal symbol on the right hand side can be replaced by a scalar one at leading order, because of the outside projections,  
\begin{eqnarray*}\lefteqn{\hspace{-1.5cm}
  \Pi^\epsi \; {\rm Op}^{\rm W} \left(  [  \{\hsc  ,\pi_0  \},  \{  a  ,\pi_0  \}]
  \right)\,\Pi^\epsi \;=\;
  \Pi^\epsi\,  {\rm Op}^{\rm W} \left(  \pi_0[  \{\hsc  ,\pi_0  \},  \{  a  ,\pi_0  \}]\pi_0
  \right)\,\Pi^\epsi +\Or(\epsi\| a\|_{0,r})}\\
  & = & \Pi^\epsi\,  {\rm Op}^{\rm W} \big(  \tr_{\hfast}\left( \pi_0[  \{\hsc  ,\pi_0  \},  \{  a  ,\pi_0  \}] \big) \pi_0
  \right)\,\Pi^\epsi +\Or(\epsi\| a\|_{0,r})
  \\
  & =& \Pi^\epsi\,  {\rm Op}^{\rm W}  \big(  \tr_{\hfast}  ( \pi_0[  \{\hsc  ,\pi_0  \},  \{  a  ,\pi_0  \}] )  
   \big)\,\Pi^\epsi +\Or(\epsi\| a\|_{0,r})
\end{eqnarray*}
and thus
\[
\tfrac{\I}{\epsi} \Pi^\epsi \big[ \hat H, \hat a \big] \Pi^\epsi \;=\;
  \Pi^\epsi \; {\rm Op}^{\rm W} \big( \{\hsc  ,{ a }  \}  -\imag\epsi\, \tr_{\hfast}  ( \pi_0[  \{\hsc  ,\pi_0  \},  \{  a  ,\pi_0  \}] )  
  \big)\;
   \Pi^\epsi+ \Or(\epsi^2\| a\|_{0,r})\,.
\]
Comparing this with
\begin{align*}
(\nabla a) \cdot X_h   &=\; (\partial_q  a ,\partial_p  a )
 \left( \begin{array} {cc} - \epsi \,\Omega^{pp} &         E_n+\epsi \Omega^{pq} \\    -E_n+\epsi \Omega^{qp} &- \epsi \,\Omega^{qq} \end{array} \right)  \left( \begin{array}{c} \partial_q \hsc \\ \partial_p \hsc     \end{array}   \right) \;+\;\Or(\epsi^2\| a\|_{0,1})\\[1mm]
 &= \;  \{\hsc  ,{ a }  \}  -\imag\epsi\,  \tr_{\hfast}\left( \pi_0[  \{\hsc  ,\pi_0  \},  \{  a  ,\pi_0  \}]\right)\;+\;\Or(\epsi^2\| a\|_{0,1})
\end{align*}
completes the proof for $a \in S^0(\epsi,\mathbb{C})$.
Now notice that the result implies that on the level of classical symbols we have
\begin{equation}\label{helleq}
\tfrac{\I}{\epsi}\pi \# \left( H  \#  a -   a \# H \right)  \#  \pi = \pi \#  (X_h \cdot \nabla a)  \#  \pi + \Or(\epsi^2)\,.
\end{equation}
However, as an algebraic relation it holds   for $a$ in any of the symbols classes $S^k(\R)$, in particular also for $a(z) = \eta\cdot z$. But then $\tfrac{\I}{\epsi}(H  \#  a -   a \# H) = \{ H, a\} \in S^0(\epsi,\mathcal{L}(\hfast))$ and $ (X_h \cdot \nabla a) = X_h\cdot \eta\in S^0(\epsi,\R)$ imply that also the remainder term is $\Or(\epsi^2)$ in $S^0(\epsi,\mathcal{L}(\hfast))$.

Note that checking  relation (\ref{helleq}) directly by expanding both sides in powers of $\epsi$  is an alternative way to prove the proposition. This was done in \cite{diss} and leads to much more painful computations than the argument given here. 
\end{proof}

\begin{theo}[\bf Egorov theorem]
\label{satz:ego}
Let Assumptions~1 and 2 of Section~\ref{sec:sett} hold. Then the Hamiltonian  flow $\phi^t_\epsi$  of $(\hsc , \omega_\epsi )$ exists globally and for   any $a \in S^0(\epsi,\mathbb{C})$ it holds that
\begin{equation}
\nonumber
\left\lVert \Pi^\epsi \left( A(t)-\widehat{a\circ \phi^t_\epsi} \right) \Pi^\epsi \right\rVert=\mathcal{O}(\epsi^2 )
\end{equation} 
 uniformly on bounded time intervals, where
 \[
A(t) := \ez^{\imag \hat H \frac{t}{\epsi}} \,\hat a \, \ez^{-\imag \hat H \frac{t}{\epsi}}\,.
 \]
\end{theo}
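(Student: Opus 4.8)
The plan is to run the standard Duhamel interpolation argument for Egorov's theorem, using Proposition~\ref{commprop} to control the commutator at each instant and Corollary~\ref{koro:time} to commute $\Pi^\epsi$ past the propagators.

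First I would settle the classical flow. By Proposition~\ref{commprop} the Hamiltonian vector field $X_h=-(\omega_\epsi)^{-1}\nabla h$ of $(h,\omega_\epsi)$ lies in $S^0(\epsi,\R^{2n})$, so it is bounded together with all its derivatives, uniformly in $\epsi$; in particular it is globally Lipschitz, whence $\phi^t_\epsi$ exists for all $t\in\R$. Differentiating the flow equation $\tfrac{\D}{\D t}\phi^t_\epsi=X_h\circ\phi^t_\epsi$ in the base point and applying Gronwall's lemma gives, for every $T>0$ and every multi-index $\alpha$, a bound on $\sup_{|t|\le T}\sup_z\|\partial_z^\alpha\phi^t_\epsi(z)\|$ that is uniform in $\epsi$. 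Hence for $a\in S^0(\epsi,\C)$ the composition $a(t):=a\circ\phi^t_\epsi$ again lies in $S^0(\epsi,\C)$ with $\sup_{|t|\le T}\|a(t)\|_{0,r}\le C_{T,r}\|a\|_{0,r}$, $C_{T,r}$ independent of $\epsi$; moreover $\tfrac{\D}{\D t}a(t)=(X_h\cdot\nabla a)\circ\phi^t_\epsi=X_h\cdot\nabla a(t)$, the last equality because $X_h$ is invariant under its own flow.

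Next comes the interpolation identity. For $a\in S^0(\epsi,\C)$ the commutator $[\hat H,\hat a]$ extends to a bounded operator --- the commutator with the unbounded linear part $\xi\cdot z$ of $H$ being $-\imag\epsi\,\mathrm{op}^W(\{\xi\cdot z,a\})$ --- so $s\mapsto\ez^{\imag\hat H\frac{s}{\epsi}}\widehat{a(t-s)}\ez^{-\imag\hat H\frac{s}{\epsi}}$ is norm-$C^1$, and since $A(0)=\hat a$ and $\tfrac{\D}{\D s}a(t-s)=-X_h\cdot\nabla a(t-s)$, integration from $0$ to $t$ yields
\[
A(t)-\widehat{a(t)}\;=\;\int_0^t \ez^{\imag\hat H\frac{s}{\epsi}}\Big(\tfrac{\imag}{\epsi}\big[\hat H,\widehat{a(t-s)}\big]-\widehat{X_h\cdot\nabla a(t-s)}\Big)\ez^{-\imag\hat H\frac{s}{\epsi}}\,\D s\,.
\]
Sandwiching with $\Pi^\epsi$ and using $\|[\ez^{\pm\imag\hat H s/\epsi},\Pi^\epsi]\|=\Or(\epsi^\infty)$ from Corollary~\ref{koro:time} (uniformly for $|s|\le T$), the two outer projections can be pulled inside the integrand at the cost of $\Or(\epsi^\infty)$. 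The resulting projected integrand is exactly the operator estimated in Proposition~\ref{commprop} with symbol $a(t-s)\in S^0(\epsi,\C)$, so its norm is $\Or(\epsi^2\|a(t-s)\|_{0,r})=\Or(\epsi^2 C_{T,r}\|a\|_{0,r})$ uniformly in $s\in[0,t]$ and $\epsi$. Integrating over $s$ then gives $\|\Pi^\epsi(A(t)-\widehat{a(t)})\Pi^\epsi\|=\Or(|t|\epsi^2)$, uniformly on bounded time intervals, as claimed.

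The main obstacle I expect is the uniform-in-$\epsi$ control of the flow and of the seminorms $\|a\circ\phi^t_\epsi\|_{0,r}$: one must check that the $\epsi$-dependence of $h=e_0+\Or(\epsi)$ and of $\omega_\epsi=\omega_0+\epsi\Omega$ with $\Omega\in S^0(\R)$ does not degrade the bounds. This reduces to the fact that $(\omega_\epsi)^{-1}$ and all its derivatives stay bounded uniformly for $\epsi$ small, which is immediate since $\omega_\epsi$ is a uniformly small, uniformly smooth perturbation of the constant form $\omega_0$. The remaining points --- that $a(t)$ is a genuine symbol, the legitimacy of the Duhamel differentiation, and the bookkeeping with $\Pi^\epsi$ --- are routine given Proposition~\ref{commprop} and Corollary~\ref{koro:time}.
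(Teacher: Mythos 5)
Your argument is correct and follows essentially the same route as the paper: global existence and symbol-class preservation for $a\circ\phi^t_\epsi$, the Duhamel interpolation identity, commuting $\Pi^\epsi$ past the propagators via Corollary~\ref{koro:time} at cost $\Or(\epsi^\infty)$, and then Proposition~\ref{commprop} applied to the integrand. Your extra remarks (the transport identity $\tfrac{\D}{\D t}a(t)=X_h\cdot\nabla a(t)$ via invariance of $X_h$ under its own flow, and boundedness of $[\hat H,\hat a]$ despite the linear part of the symbol) are exactly the points the paper relies on implicitly.
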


\begin{proof} Since the Hamiltonian vector field $X_h$ is smooth with all derivatives uniformly bounded, the corresponding flow  $(\phi^t_\epsi)_{t\in \R}$ exists globally. Each map $\phi^t_\epsi:\R^{2n}\to \R^{2n}$ is a $C^\infty$-diffeomorphism with all derivatives bounded. Hence $t\mapsto a(t) := a \circ  \phi^t_\epsi\in S^0(\epsi,\C)$ is smooth and uniformly bounded on bounded intervals in time.
In order to apply Proposition~\ref{commprop} we use the standard Duhamel argument and  (\ref{Ucommu}) to show that
 \begin{eqnarray*}\lefteqn{\hspace{-5mm}
 \Pi^\epsi \left(  {A} (t)-\widehat{ a(t)}\right)\Pi^\epsi
 = \Pi^\epsi \int_0^t \df s \frac{\df}{\df s}\left(   \ez^{\imag \hat H \frac{s}{\epsi}} \widehat{ a(t-s)}        \ez^{-\imag \hat H \frac{s}{\epsi}}\right)       \Pi^\epsi} \\ 
 &=& \Pi^\epsi \int_0^t \df s \,   \ez^{\imag \hat H \frac{s}{\epsi}}  \left( \tfrac{\imag}{\epsi}\left[ \hat{H},\widehat{ a(t-s)} \right] + \tfrac{\df}{\df s}  \widehat{ a(t-s)}  \right)  \ez^{-\imag \hat H \frac{s}{\epsi}}  \Pi^\epsi
 \\ 
& =& \int_0^t \df s \,   \ez^{\imag \hat H \frac{s}{\epsi}}  \left( \tfrac{\imag}{\epsi}  \Pi^\epsi  \left[\hat{H} ,\widehat{ a(t-s)}  \right] \Pi^\epsi+ \tfrac{\df}{\df s}  \Pi^\epsi \widehat{ a(t-s)}  \Pi^\epsi \right)  \ez^{-\imag \hat H \frac{s}{\epsi}} \\&& +\; \Or\big(\epsi^\infty  t^2  \sup_{s\in[0,t]}\|\widehat{ a(s)} \| \big)\,.
\end{eqnarray*}
Hence the claim follows from (\ref{CommuPoisson}) and the fact that $a(t) := a \circ  \phi^t_\epsi $ satisfies $\frac{\D}{\D t} a(t) = X_h\cdot \nabla a(t)$.
\end{proof}

As a corollary we get the following result on the transport of Wigner functions, which is dual to the Egorov theorem. 
For $\psi\in  L^2(\mathbb{R}^n,\hfast)$ the Wigner transform $W^\psi$ of $\psi$ is defined by
  \begin{equation*}
  W^\psi(q,p) := \frac{1}{(2\pi)^n}\int_{\mathbb{R}^n}\df x\, \exp{\imag p\cdot x}    \psi\left(q-\tfrac{\epsi}{2}x\right)   \otimes
  \psi^\ast \left(q+\tfrac{\epsi}{2}x\right) \,.
  \end{equation*}
  For $\hat \rho^\psi := |\psi\rangle\langle \psi|$ and $A\in S^0(\mathcal{B}(\hfast))$  it holds that
  \[
  \tr_\Hi (\hat \rho^\psi  \,\hat A)\;=\; \int_{\R^{2n}}\df q \df p\,\tr_{\hfast} \left( W^\psi(q,p)\, A(q,p)
  \right)\,.
  \]
  Let  $\nu_\epsi$ be the density of $\lambda_\epsi$ with respect to Lebesgue measure, i.e.\ $  \lambda_\epsi =\nu_\epsi \df q\df p$.
  Then for $\psi_0\in \Pi^\epsi\Hi$ and $\psi(t) := {\rm e}^{-\imag\hat H\frac{t}{\epsi}} \psi_0$ we find for any $a\in S^0(\C)$ that  \begin{align*}
  \tr_\Hi (\hat \rho^{\psi(t)}  \,\hat a)\;&=\;  \tr_\Hi (\hat \rho^{\psi_0}  A(t)) \;=\;  \tr_\Hi (\hat \rho^{\psi_0} \Pi^\epsi A(t)\Pi^\epsi) \\
  &=\;   \tr_\Hi (\hat \rho^{\psi_0} \Pi^\epsi\,\widehat{ a(t)}\Pi^\epsi)  \;+\Or(\epsi^2)
   \;=\;   \tr_\Hi (\hat \rho^{\psi_0} \widehat{ a(t)}) \;+\Or(\epsi^2) \\
   &=\; \int_{\R^{2n}}\df q \df p\,\tr_{\hfast} \left( W^\psi(q,p)\, a\circ\phi^t_\epsi(q,p)
  \right) \;+\Or(\epsi^2)\\
  &=\; \int_{\R^{2n}}\df \lambda_\epsi   \underbrace{ (\nu_\epsi(q,p))^{-1} \,\tr_{\hfast} \left( W^{\psi_0}(q,p)
  \right)}_{=: w_{\pi_0}^{\psi_0}(q,p) }\,a\circ\phi^t_\epsi(q,p) \;+\Or(\epsi^2)\\
   &=\; \int_{\R^{2n}}\df \lambda_\epsi   \,w_{\pi_0}^{\psi_0}\circ\phi^{-t}_\epsi(q,p) \,a(q,p) \;+\Or(\epsi^2)\,,
  \end{align*} 
  where we used that the Liouville measure $\lambda_\epsi$ is invariant under the Hamiltonian flow $\phi^t_\epsi$.
  Thus we define
  the reduced or scalar Wigner transform $w_{\pi_0}^\psi$ associated to the projection $\pi_0$ by
    \begin{equation*}
  w_{\pi_0}^\psi :=   (\nu_\epsi(q,p))^{-1} \tr_{\hfast} W^\psi   = \left( 1-\imag \epsi \tr_{\hfast}\left( \pi_0\{\pi_0,\pi_0\}\right)\right) \tr_{\hfast} W^\psi  +\Or(\epsi^2)\, .
  \end{equation*}

\begin{coro}[\bf Transport of the Wigner function]
For $\psi_0\in \Pi^\epsi\Hi$ 
it holds that 
\[
 w_{\pi_0}^{\psi(t)} \;=\;  w_{\pi_0}^{\psi_0} \circ \phi^{-t}_\epsi + \Or(\epsi^2) 
\]
in the sense that with $\psi(t) := \ez^{-\imag \hat H \frac{t}{\epsi}}\psi_0$ 
\begin{align*}
\langle \psi(t) ,\, \hat a\,\psi(t)\rangle_\Hi \;&=\; 
  \int_{\R^{2n}}\df \lambda_\epsi   \,w_{\pi_0}^{\psi(t)} (q,p) \,a(q,p)
  \\ &=\;
   \int_{\R^{2n}}\df \lambda_\epsi   \,(w_{\pi_0}^{\psi_0}\circ \phi^{-t}_\epsi) (q,p) \,a(q,p) \;+\;\Or(\epsi^2)
\end{align*}
for all $a\in S^0(\C)$.

\end{coro}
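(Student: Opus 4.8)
The plan is to derive the statement as a dual form of the Egorov theorem (Theorem~\ref{satz:ego}), the only additional ingredient being the passage from Lebesgue to Liouville measure. First I would rewrite the quantity to be controlled as a trace: with $\hat\rho^{\psi(t)} := |\psi(t)\rangle\langle\psi(t)|$ one has $\langle\psi(t),\hat a\,\psi(t)\rangle_\Hi = \tr_\Hi(\hat\rho^{\psi(t)}\,\hat a)$, and by unitarity of $\ez^{-\imag\hat H t/\epsi}$ this equals $\tr_\Hi(\hat\rho^{\psi_0}\,A(t))$ with $A(t) = \ez^{\imag\hat H t/\epsi}\,\hat a\,\ez^{-\imag\hat H t/\epsi}$. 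Since $\psi_0 \in \Pi^\epsi\Hi$, I may insert the projection on both sides of $\hat\rho^{\psi_0}$, so that the relevant object is $\tr_\Hi(\hat\rho^{\psi_0}\,\Pi^\epsi A(t)\Pi^\epsi)$, which is precisely what Theorem~\ref{satz:ego} controls.

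Next I would apply Theorem~\ref{satz:ego} to replace $\Pi^\epsi A(t)\Pi^\epsi$ by $\Pi^\epsi\,\widehat{a\circ\phi^t_\epsi}\,\Pi^\epsi$ at the cost of an operator-norm error of order $\epsi^2$; since $\hat\rho^{\psi_0}$ is trace class, this translates into an $\Or(\epsi^2)$ error in the expectation, and the outer projections can be dropped again using $\psi_0 = \Pi^\epsi\psi_0$. Writing $a(t) := a\circ\phi^t_\epsi \in S^0(\epsi,\C)$ and invoking the Wigner-transform pairing formula recalled above, $\tr_\Hi(\hat\rho^{\psi}\,\hat A) = \int_{\R^{2n}}\df q\,\df p\;\tr_{\hfast}(W^\psi(q,p)A(q,p))$, together with the fact that $a$ is scalar, I obtain $\langle\psi(t),\hat a\,\psi(t)\rangle_\Hi = \int_{\R^{2n}}\df q\,\df p\;\tr_{\hfast}(W^{\psi_0}(q,p))\,a(t)(q,p) + \Or(\epsi^2)$.

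Finally I would rewrite the Lebesgue integral as an integral against the Liouville measure $\lambda_\epsi = \nu_\epsi\,\df q\,\df p$; this is an exact rewriting that produces the reduced Wigner function $w_{\pi_0}^{\psi_0} = \nu_\epsi^{-1}\tr_{\hfast}W^{\psi_0}$ in the integrand, using that $\nu_\epsi = 1 + \Or(\epsi)$ is bounded away from zero for small $\epsi$ by (\ref{liouville}). I would then change variables with the diffeomorphism $\phi^t_\epsi$ and use that $\lambda_\epsi$ is invariant under its own Hamiltonian flow (a standard property of symplectic flows, also used in Section~\ref{section:correm}) to transfer the flow from $a$ onto $w_{\pi_0}^{\psi_0}$, arriving at $\int_{\R^{2n}}\df\lambda_\epsi\;(w_{\pi_0}^{\psi_0}\circ\phi^{-t}_\epsi)(q,p)\,a(q,p) + \Or(\epsi^2)$. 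That the left-hand side equals $\int_{\R^{2n}}\df\lambda_\epsi\;w_{\pi_0}^{\psi(t)}(q,p)\,a(q,p)$ is immediate from the same Wigner pairing formula applied directly to $\psi(t)$ and the definition of $w_{\pi_0}^{\psi(t)}$; together these give the asserted identity in its weak form.

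The serious content is entirely inside Theorem~\ref{satz:ego} (and hence Proposition~\ref{commprop}); granted that, the present corollary is just a change of measure and an application of Liouville's theorem. The one point that deserves a sentence of care is the uniformity of the error in $a$ and in $t$ on bounded intervals: the Egorov error for $a\in S^0(\epsi,\C)$ is dominated by a single fixed seminorm $\|a\|_{0,r}$, and $a\circ\phi^t_\epsi$ has all its seminorms bounded uniformly for $t$ in a compact set because $\phi^t_\epsi$ is a $C^\infty$-diffeomorphism with all derivatives uniformly bounded, as noted in the proof of Theorem~\ref{satz:ego}.
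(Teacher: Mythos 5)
Your proposal follows the paper's own derivation step for step: the trace rewriting, insertion of $\Pi^\epsi$, application of Theorem~\ref{satz:ego}, the Wigner pairing formula (\ref{traceform})-type identity, the passage from Lebesgue to Liouville measure via $\nu_\epsi$, and finally the invariance of $\lambda_\epsi$ under $\phi^t_\epsi$ -- exactly the chain of equalities the authors display immediately before stating the corollary. Your added remark on the uniformity of the Egorov error in the seminorms of $a\circ\phi^t_\epsi$ is a sensible point of care but does not change the argument; the proof is correct and essentially identical to the paper's.
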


\section{The  geometry of the classical phase space}
\label{section:correm}  

In this section we   show that  $\omega_\epsi $ is a symplectic form, compute its Liouville measure $\lambda_\epsi$ and clarify its  relation to the geometry of the line bundle defined by the family of projections $\pi_0(q,p)$ over $\R^{2n}$.

\begin{prop}
 For $\epsi$ small enough $\omega_\epsi$ defined in (\ref{omegadef}) is a symplectic form.
\end{prop}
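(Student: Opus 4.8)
The plan is to verify directly the two conditions that make a $2$-form symplectic: pointwise non-degeneracy (for $\epsi$ small, uniformly in $z\in\R^{2n}$) and closedness. Note first that $\Omega_{\alpha\beta}=-\I\,\tr_{\hfast}(\pi_0[\partial_{z_\alpha}\pi_0,\partial_{z_\beta}\pi_0])$ is a real antisymmetric matrix (using self-adjointness of $\pi_0$, $\partial_{z_\alpha}\pi_0$ and cyclicity of the trace, $\tr_{\hfast}(\pi_0[\,\cdot\,,\,\cdot\,])$ is purely imaginary and antisymmetric), so $\omega_\epsi=\omega_0+\epsi\,\Omega$ is indeed a genuine real $2$-form on $\R^{2n}$.

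For non-degeneracy I would factor, at each $z$, the $2n\times 2n$ matrix $\omega_\epsi(z)=\omega_0\bigl(\mathbf{1}+\epsi\,\omega_0^{-1}\Omega(z)\bigr)$. By Proposition~\ref{satz:sat} every $\partial_z^\alpha\pi_0$ is trace class with $\sup_z\|\partial_z^\alpha\pi_0(z)\|_1<\infty$, hence $\Omega_{\alpha\beta}\in S^0(\R)$ and in particular there is a constant $C$ with $\|\omega_0^{-1}\Omega(z)\|\le C$ for all $z$. For $\epsi<1/C$ the matrix $\mathbf{1}+\epsi\,\omega_0^{-1}\Omega(z)$ is then invertible by a Neumann series, with inverse bounded uniformly in $z$; consequently $\omega_\epsi(z)$ is invertible for every $z$, i.e.\ $\omega_\epsi$ is non-degenerate, and moreover the inverse components $(\omega_\epsi)^{\alpha\beta}$ lie in $S^0(\R)$ — which is exactly what is needed for the Hamiltonian vector field $X_h$ in Proposition~\ref{commprop}.

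For closedness, $\omega_0$ is constant, so $\D\omega_\epsi=\epsi\,\D\Omega$ and it suffices to show $\D\Omega=0$, which I would do locally. Since $\pi_0(z)$ is a smooth family of rank one projections, around any point there is a neighbourhood $U$ and a smooth normalized section $\varphi_0:U\to\hfast$ with $\pi_0=|\varphi_0\rangle\langle\varphi_0|$ on $U$ (project a fixed vector with $\pi_0$ and normalize; this works near any point). Set $\mathcal{A}_\alpha:=-\I\langle\varphi_0,\partial_{z_\alpha}\varphi_0\rangle$, which is real because differentiating $\langle\varphi_0,\varphi_0\rangle=1$ forces $\langle\varphi_0,\partial_{z_\alpha}\varphi_0\rangle$ to be imaginary. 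Inserting $\partial_{z_\alpha}\pi_0=|\partial_{z_\alpha}\varphi_0\rangle\langle\varphi_0|+|\varphi_0\rangle\langle\partial_{z_\alpha}\varphi_0|$ into $\tr_{\hfast}(\pi_0[\partial_{z_\alpha}\pi_0,\partial_{z_\beta}\pi_0])$ and using $\langle\varphi_0,\varphi_0\rangle=1$, a short computation gives $\Omega_{\alpha\beta}=\partial_{z_\alpha}\mathcal{A}_\beta-\partial_{z_\beta}\mathcal{A}_\alpha$ on $U$, i.e.\ $\Omega|_U=\D\mathcal{A}$. Hence $\D\Omega|_U=\D^2\mathcal{A}=0$, and since $U$ was an arbitrary neighbourhood and closedness is a local condition, $\D\Omega=0$ on all of $\R^{2n}$. (This is precisely the identification of $\Omega$ with the curvature of the Berry connection established in Proposition~\ref{berryprop}; alternatively one may verify $\D\Omega=0$ directly from the gauge-invariant formula, differentiating $\Omega_{\alpha\beta}$, inserting $\pi_0(\partial_{z_\gamma}\pi_0)\pi_0=0$, and collapsing the resulting sum using cyclicity of the trace.)

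The non-degeneracy step is entirely routine. The only point that needs a little care is closedness: producing the smooth local section $\varphi_0$ and carrying out the bookkeeping in the identity $\Omega|_U=\D\mathcal{A}$ (watching signs and the placement of complex conjugations). The passage from local to global is immediate, since exterior differentiation is local, so I expect no genuine obstacle beyond that computation.
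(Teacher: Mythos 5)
Your proof is correct. The non-degeneracy part is essentially the paper's argument (the paper simply notes that $\Omega$ is skew-symmetric and bounded so $\det\omega_\epsi\neq 0$ for small $\epsi$; your Neumann-series factorization just makes the uniformity in $z$ and the membership $(\omega_\epsi)^{\alpha\beta}\in S^0(\R)$ explicit, which is a useful bonus). For closedness, however, you take a genuinely different route: you pass to a local smooth section $\varphi_0$ of the eigenspace bundle, identify $\Omega|_U=\D\mathcal{A}$ with $\mathcal{A}_\alpha=-\I\langle\varphi_0,\partial_{z_\alpha}\varphi_0\rangle$, and conclude $\D\Omega=0$ from $\D^2=0$; your bookkeeping here checks out (the cross terms cancel upon antisymmetrization and one is left with $\Omega_{\alpha\beta}=-\I(\langle\partial_\alpha\varphi_0,\partial_\beta\varphi_0\rangle-\langle\partial_\beta\varphi_0,\partial_\alpha\varphi_0\rangle)=\partial_\alpha\mathcal{A}_\beta-\partial_\beta\mathcal{A}_\alpha$). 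The paper instead does exactly the gauge-invariant computation you mention in your final parenthesis: it differentiates $\Omega_{\alpha\beta}$ directly, uses $\pi_0(\partial\pi_0)\pi_0=0$ and cyclicity of the trace to kill the term $\tr_{\hfast}(\partial_\gamma\pi_0[\partial_\alpha\pi_0,\partial_\beta\pi_0])$, and verifies the cyclic identity $\partial_\gamma\Omega_{\alpha\beta}+\partial_\alpha\Omega_{\beta\gamma}+\partial_\beta\Omega_{\gamma\alpha}=0$. The trade-off is worth noting: your gauge-potential argument is conceptually transparent (it is literally ``curvature is locally exact''), but it reintroduces the eigenfunctions $\varphi_0$ that the authors deliberately avoid throughout — one of the stated motivations of the paper is that global sections need not exist (e.g.\ for magnetic Bloch bands on a torus, Section~\ref{Hofstadter}). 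Since closedness is local and local sections always exist, your argument survives in that setting too, but the paper's intrinsic computation transfers verbatim without even raising the question.
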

\begin{proof}
The matrix $\Omega$ is by definition skew-symmetric and bounded, hence
$\mathrm{det}\omega_\epsi\neq 0$ for $\epsi$   small enough. Thus   $\omega_\epsi $ is a  non-degenerate 2-form. To see that $\omega_\epsi$ is closed, note that 
 \begin{align*}
\imag\, \partial_{z_\gamma} \Omega_{\alpha\beta}\; &=\;  \tr_{\hfast}\left(\partial_{z_\gamma}\pi_0[\partial_{z_\alpha}\pi_0,\partial_{z_\beta}\pi_0] \right)\\&\quad + \; \tr_{\hfast}\left(\pi_0[\partial_{z_\gamma}\partial_{z_\alpha}\pi_0,\partial_{z_\beta}\pi_0] \right)+   \tr_{\hfast}\left(\pi_0[\partial_{z_\alpha}\pi_0,\partial_{z_\gamma}\partial_{z_\beta}\pi_0] \right)\\
&=\; \tr_{\hfast}\left(\pi_0[\partial_{z_\gamma}\partial_{z_\alpha}\pi_0,\partial_{z_\beta}\pi_0] \right)-   \tr_{\hfast}\left(\pi_0[\partial_{z_\beta}\partial_{z_\gamma}\pi_0,\partial_{z_\alpha}\pi_0] \right)
 \end{align*}
implies
 $  \partial_{z_\gamma} \Omega_{\alpha \beta} +  \partial_{z_\alpha} \Omega_{\beta\gamma} +  \partial_{z_\beta} \Omega_{\gamma\alpha} =0
 $
 and thus 
 \[
 \df \left( \Omega_{\alpha\beta} \df z^\alpha\wedge \df z^\beta\right)\;=\;   \partial_{z_\gamma} \Omega_{\alpha\beta}\;\df z^\gamma\wedge \df z^\alpha\wedge \df z^\beta=0\,.
 \]
\end{proof}

\begin{prop}
The Liouville measure of the symplectic form $\omega_\epsi$ is given by
\begin{equation}
\lambda_\epsi= \left( 1+\imag \epsi \,\tr_{\hfast}
\left(\pi_0\{\pi_0,\pi_0\}\right)+\mathcal{O}(\epsi^2)\right) \df q^1\wedge\dots\wedge\df p^n. 
\end{equation}
\end{prop}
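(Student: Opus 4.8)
The plan is to compute the density of $\lambda_\epsi$ relative to the reference volume form $\lambda_0=\df q^1\wedge\cdots\wedge\df p^n$ via the Pfaffian of the coefficient matrix. Recall that the Liouville volume form of a symplectic form $\omega=\tfrac12\omega_{\alpha\beta}\,\df z^\alpha\wedge\df z^\beta$ on $\R^{2n}$ is $\tfrac1{n!}\omega^{\wedge n}=\mathrm{Pf}(\omega_{\alpha\beta})\,\df z^1\wedge\cdots\wedge\df z^{2n}$, with $\mathrm{Pf}(\omega_{\alpha\beta})^2=\det(\omega_{\alpha\beta})$. Hence the density of $\lambda_\epsi$ with respect to $\lambda_0$ is $\mathrm{Pf}(\omega_\epsi)/\mathrm{Pf}(\omega_0)$; since $\Omega$ is bounded (as in the previous proposition), this ratio is continuous in $\epsi$ with value $1$ at $\epsi=0$, so for $\epsi$ small it equals $\sqrt{\det\omega_\epsi/\det\omega_0}=\sqrt{\det\omega_\epsi}$, using $\det\omega_0=1$.

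First I would expand the determinant. Writing $\omega_\epsi=\omega_0(E_{2n}+\epsi\,\omega_0^{-1}\Omega)$ and using $\det\omega_0=1$ gives $\det\omega_\epsi=\det(E_{2n}+\epsi\,\omega_0^{-1}\Omega)=1+\epsi\,\tr(\omega_0^{-1}\Omega)+\Or(\epsi^2)$, whence $\sqrt{\det\omega_\epsi}=1+\tfrac{\epsi}{2}\,\tr(\omega_0^{-1}\Omega)+\Or(\epsi^2)$. It remains to identify $\tfrac12\tr(\omega_0^{-1}\Omega)$ with $\imag\,\tr_{\hfast}(\pi_0\{\pi_0,\pi_0\})$. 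From (\ref{omegadef}) one has $\omega_0^{-1}=\abmatrix{0}{-E_n}{E_n}{0}$, so
\[
\omega_0^{-1}\Omega=\abmatrix{-\Omega^{pq}}{-\Omega^{pp}}{\Omega^{qq}}{\Omega^{qp}},\qquad \tr(\omega_0^{-1}\Omega)=\tr(\Omega^{qp})-\tr(\Omega^{pq}).
\]
Since $\Omega$ is skew-symmetric one has $\Omega^{pq}=-(\Omega^{qp})^{\mathrm T}$, hence $\tr(\Omega^{pq})=-\tr(\Omega^{qp})$ and therefore $\tr(\omega_0^{-1}\Omega)=2\,\tr(\Omega^{qp})=2\sum_{j=1}^{n}\Omega_{q_jp_j}$.

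Finally I would insert $\Omega_{q_jp_j}=-\imag\,\tr_{\hfast}(\pi_0[\partial_{q_j}\pi_0,\partial_{p_j}\pi_0])$ and compare with $\{\pi_0,\pi_0\}=\sum_{j=1}^{n}(\partial_{p_j}\pi_0\,\partial_{q_j}\pi_0-\partial_{q_j}\pi_0\,\partial_{p_j}\pi_0)$ from the definition of the Poisson bracket; this yields $\tfrac12\tr(\omega_0^{-1}\Omega)=\sum_j\Omega_{q_jp_j}=\imag\,\tr_{\hfast}(\pi_0\{\pi_0,\pi_0\})$, which is exactly the asserted first-order coefficient. All the fibre traces here are well defined because $\pi_0$ has rank one, so its derivatives are of finite rank and in particular trace class (cf.\ (\ref{pi0off})). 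The computation is elementary throughout; the only point requiring care is keeping the sign and normalization conventions consistent — in $\omega_0^{-1}$, in the Pfaffian/determinant identity, and in the choice of reference orientation $\df q^1\wedge\cdots\wedge\df p^n$ — so I do not anticipate any real obstacle.
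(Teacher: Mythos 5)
Your argument is correct and reaches the same first-order coefficient, but via a genuinely different route from the paper. The paper expands $\omega_\epsi^{\wedge n}$ directly and tracks, by the antisymmetry of the wedge product, which factors of $\Omega_{\alpha\beta}\,\df z^\alpha\wedge\df z^\beta$ survive at first order — only the diagonal mixed blocks $\Omega^{qp}_{jj}$, $\Omega^{pq}_{jj}$ — and reads off $\frac{\epsi}{2}\sum_j(\Omega^{qp}_{jj}-\Omega^{pq}_{jj}) = \epsi\sum_j\Omega_{q_jp_j}$. You instead invoke the Pfaffian identity $\frac{1}{n!}\omega^{\wedge n}=\mathrm{Pf}(\omega_{\alpha\beta})\,\df z^1\wedge\cdots\wedge\df z^{2n}$, reduce the ratio $\mathrm{Pf}(\omega_\epsi)/\mathrm{Pf}(\omega_0)$ to $\sqrt{\det\omega_\epsi}$ using continuity to fix the branch, and expand $\det(E_{2n}+\epsi\,\omega_0^{-1}\Omega)=1+\epsi\,\tr(\omega_0^{-1}\Omega)+\Or(\epsi^2)$. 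Your bookkeeping is careful where it needs to be: working with the \emph{ratio} of Pfaffians cleanly sidesteps the paper's explicit sign normalization factor $(-1)^{n(n-1)/2}$, and the block computation $\omega_0^{-1}\Omega$ together with skew-symmetry of $\Omega$ gives $\tr(\omega_0^{-1}\Omega)=2\sum_j\Omega_{q_jp_j}$, which matches the paper. The trade-off is that the paper's argument is more elementary (no Pfaffian needed, no square-root branch to justify) while yours is more systematic and would generalize painlessly to higher orders in $\epsi$ via the full expansion of $\log\det$; both are acceptable at the precision stated.
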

\begin{proof}
 The Liouville measure is defined by
$
  \lambda_\epsi=\frac{(-1)^{\frac{1}{2}n(n-1)}}{n!}\,\omega_\epsi^{\wedge n}
$.
 With this sign convention $\omega_0^{\wedge n}$ yields the canonically oriented volume form on $\mathbb{R}^{2n}$.
 To compute the first order corrections, it suffices to take all terms into account which contain exactly one term of the form
 $\Omega_{\alpha\beta}\, \df z^\alpha \wedge \df z^\beta$. Due to the antisymmetry of the wedge product, only the off-diagonal terms  
 $\Omega^{qp}_{jj} \, \df q^j \wedge \df p^j$ and $\Omega^{pq}_{jj}\, \df p^j \wedge \df q^j$ contribute and we get  \begin{align*}
 \lambda_\epsi\;&=\;  \left(1+\frac{\epsi}{2} \, \sum_{j=1}^n \left(\Omega^{qp}_{jj} -\Omega^{pq}_{jj} \right)\right) \df q^1\wedge \dots \wedge \df q^n\wedge \df p^1 \wedge \dots \wedge \df p^n  +\Or(\epsi^2)\\
 &=\; \left( 1+\imag \epsi \,\tr_{\hfast}\left(\pi_0\{\pi_0,\pi_0\}\right)\right) \df q^1\wedge\dots\wedge\df p^n+\Or(\epsi^2)\,. 
\end{align*}
\end{proof}

We already mentioned  and it is well known that $\Omega$ is, up to a factor of $\imag$, the curvature of the  Berry connection. For the sake of completeness we explain what  exactly this means.
Originally the Berry phase was introduced  for the adiabatic limit of time-dependent Hamiltonians  by Berry \cite{berry1984quantal}. Shortly thereafter Simon
\cite{simon1983holonomy} realized that this phase could be rewritten as the holonomy of the curvature of a certain line bundle. 
Let us briefly explain this  idea using our notation. The trivial Hilbert bundle
$
 E\;:=\; \R^{2n}  \times \hfast \;\stackrel{P}{\rightarrow} \;\R^{2n} 
$,
where $P$ is the projection onto the first component,   is endowed with a canonical flat connection: for 
$\psi \in\Gamma(E)$ and $X\in \Gamma(T\R^{2n})$ let
$
 (\nabla_X \psi)(z) =X^j \partial_{z_j} \psi (z)  
$.
Within this picture we can regard the symbol $H_0:\R^{2n}\to \mathcal{B}(\hfast)$ as a section in the endomorphism bundle of $E$, i.e.,
$H_0\in \Gamma(\End(E))$ which acts on sections $\psi\in \Gamma(E)$. 
 Associated to an isolated eigenvalue $e_0:\R^{2n} \rightarrow \mathbb{R}$ of $H_0$ is the spectral projection
$\pi_0: \R^{2n} \rightarrow \mathcal{B}(\hfast)$ which again we regard as a section $\pi_0 \in \Gamma(\End(E))$. This projection defines a sub-vectorbundle
$
L := \{ (z,\psi)\,|\, \psi(z) \in \pi_0(z)\hfast\}
$
of $E$.
Now the trivial connection on $E$ induces a connection on $L$ by projection, i.e.\
$ \nabla'_X\phi := \pi_0 \nabla_X \phi$
for $\phi \in \Gamma(L)\subset\Gamma(E)$.  This connection is called the Berry connection.
\begin{prop}\label{berryprop}
 The curvature form $R'$ of the Berry connection $\nabla'$ is   $\frac{1}{2}R'_{ij} \df z^i\wedge \df z^j $, where\begin{equation*}
 R'_{ij}=\pi_0[\partial_{z_i}\pi_0,\partial_{ z_j}\pi_0] 
 \end{equation*}
 and for a line bundle
 \begin{equation*}
 R'_{ij}=\tr_{\hfast}\left(\pi_0[\partial_{z_i}\pi_0,\partial_{ z_j}\pi_0] \right)\,.
 \end{equation*}
\end{prop}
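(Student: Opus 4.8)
The plan is to compute the curvature of the Berry connection $\nabla'$ directly from its definition $\nabla'_X\phi = \pi_0\nabla_X\phi$ for sections $\phi\in\Gamma(L)$, where $\nabla$ is the trivial connection $\nabla_X\psi = X^j\partial_{z_j}\psi$. First I would recall that the curvature endomorphism acting on a section $\phi$ of $L$ is $R'(X,Y)\phi = (\nabla'_X\nabla'_Y - \nabla'_Y\nabla'_X - \nabla'_{[X,Y]})\phi$, and it suffices to evaluate this with $X=\partial_{z_i}$, $Y=\partial_{z_j}$, so that the bracket term drops and $R'_{ij}\phi = (\nabla'_i\nabla'_j - \nabla'_j\nabla'_i)\phi$. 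The key algebraic step is: for $\phi\in\Gamma(L)$ we have $\pi_0\phi=\phi$, hence
\[
\nabla'_j\phi = \pi_0\partial_{z_j}\phi = \pi_0\partial_{z_j}(\pi_0\phi) = \pi_0(\partial_{z_j}\pi_0)\phi + \pi_0\partial_{z_j}\phi,
\]
which forces $\pi_0(\partial_{z_j}\pi_0)\phi = 0$; equivalently, using the standard identity $\pi_0(\partial_j\pi_0)\pi_0 = 0$ that follows from differentiating $\pi_0^2=\pi_0$ (this is exactly the content of~(\ref{pi0off})).

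Next I would apply $\nabla'_i$ once more, carefully keeping track of which terms survive after the final projection onto $L$. Applying the Leibniz rule and using $\pi_0(\partial_i\pi_0)\pi_0=0$ together with $(\partial_i\pi_0)\pi_0 = \pi_0^\perp(\partial_i\pi_0)\pi_0 + \pi_0(\partial_i\pi_0)\pi_0^\perp$ (again~(\ref{pi0off})), the double covariant derivative $\nabla'_i\nabla'_j\phi$ reduces, modulo symmetric-in-$(i,j)$ terms that cancel upon antisymmetrization, to $\pi_0(\partial_i\pi_0)(\partial_j\pi_0)\phi$. Antisymmetrizing in $i\leftrightarrow j$ then gives
\[
R'_{ij}\phi = \pi_0\big[(\partial_{z_i}\pi_0),(\partial_{z_j}\pi_0)\big]\phi,
\]
so $R'_{ij} = \pi_0[\partial_{z_i}\pi_0,\partial_{z_j}\pi_0]$ as a section of $\End(L)$. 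This proves the first displayed formula.

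For the second formula, when $L$ is a line bundle the endomorphism $R'_{ij}$ restricted to the one-dimensional fibre $\pi_0(z)\hfast$ is simply multiplication by a scalar, namely its trace, $\tr_{\hfast}(R'_{ij}) = \tr_{\hfast}(\pi_0[\partial_{z_i}\pi_0,\partial_{z_j}\pi_0])$; here one uses that $\pi_0$ has rank one so the trace over $\hfast$ of an operator supported on $\pi_0\hfast$ equals its single eigenvalue. Comparing with the definition $\Omega_{ij} = -\imag\,\tr_{\hfast}(\pi_0[\partial_{z_i}\pi_0,\partial_{z_j}\pi_0])$ then yields $\Omega = -\imag R'$, the announced relation between $\omega_\epsi$'s correction term and the Berry curvature. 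The main obstacle is purely bookkeeping: one must be disciplined about inserting $\pi_0 = \pi_0^2$ at the right moments and discarding the terms killed by $\pi_0(\partial_j\pi_0)\pi_0 = 0$ or by antisymmetrization, since a naive expansion of $\nabla'_i\nabla'_j\phi$ produces several terms that individually look relevant but collectively cancel; there is no analytic difficulty here since everything is a pointwise finite-rank computation.
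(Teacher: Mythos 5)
Your proposal is correct and follows essentially the same route as the paper: a direct computation of $R'(X,Y)\phi$ from the definition, using $\pi_0\phi=\phi$, the identity $\pi_0(\partial_j\pi_0)\pi_0=0$ from~(\ref{pi0off}), and the rank-one trace identity $\pi_0 A\pi_0=\tr_{\hfast}(\pi_0 A)\pi_0$ for the line-bundle case. (One cosmetic slip: the displayed identity $(\partial_i\pi_0)\pi_0 = \pi_0^\perp(\partial_i\pi_0)\pi_0 + \pi_0(\partial_i\pi_0)\pi_0^\perp$ is not quite~(\ref{pi0off}) and is not actually needed; $\pi_0(\partial_i\pi_0)\pi_0=0$ suffices.)
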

\begin{proof}
 Let $X,Y \in \Gamma (T\R^{2n})$ and $\phi \in \Gamma(L)$, then by definition
 \begin{align*}
 R'(X,Y)\phi \;&:=\;\left(\nabla'_X \nabla'_Y-\nabla'_Y \nabla'_X  -\nabla'_{[X,Y]}\right) \phi   \\
 &\;=\;   \pi_0 \nabla_X \left(\pi_0 \nabla_Y \phi\right)-\pi_0 \nabla_Y\left(\pi_0 \nabla_X\phi\right)  -\pi_0 \nabla_{[X,Y]}\phi
  \\
 &\;=\; X^iY^j  \left(\pi_0 \partial_i \left(\pi_0  {\partial_j}(\pi_0\phi)\right)-\pi_0 {\partial_j}\left(\pi_0  {\partial_i}(\pi_0\phi) \right)  \right)  
  \\
 &\;=\; X^iY^j  \left(\pi_0 [\partial_i,\pi_0]   {\partial_j}(\pi_0\phi)-\pi_0[ {\partial_j},\pi_0]  {\partial_i}(\pi_0\phi)   \right)  \\
 &\;=\; X^iY^j \pi_0 [ \partial_i \pi_0,\partial_j\pi_0] \phi\,.
 \end{align*}
Here we used $\pi_0\phi=\phi$ for $\phi\in \Gamma(L)$ in the third equality, commutativity of partial derivatives in the fourth equality and (\ref{pi0off}) in the fifth equality.
If rank$\pi_0=1$, then
\[
\pi_0 [ \partial_i \pi_0,\partial_j\pi_0] \phi = \pi_0\tr_{\hfast}( \pi_0 [ \partial_i \pi_0,\partial_j\pi_0])\phi =
\tr_{\hfast}( \pi_0 [ \partial_i \pi_0,\partial_j\pi_0])\phi\,. 
\]
\end{proof}
Since $\R^{2n}$ is contractible, there exists a global section $(\varphi_1,\ldots,\varphi_d)$ of the frame bundle associated with $L$, i.e.\ $\varphi_1,\ldots,\varphi_d \in \Gamma(L)$ and $(\varphi_1(z),\ldots,\varphi_d(z))$ is an orthonormal basis of $\pi_0(z)\hfast$ for each $z\in\R^{2n}$.
 With the physicists' bracket notation
the projectors can be written more explicitly in the form $\pi_0(z) =\lvert \varphi_\ell(z)\rangle\langle\varphi_\ell(z)\rvert$ and $R'_{ij}$ can be expressed in terms of this frame.
For the case $d=1$ of a line bundle one finds
\begin{equation*}
R'_{ij}=\langle \partial_i \varphi(z) , \partial_j \varphi(z)\rangle -\langle \partial_j \varphi(z) , \partial_i \varphi(z)\rangle\,.
\end{equation*}
The connection one-form $A$ of the Berry connection with respect to this trivialization is given~by
\begin{equation}
\label{eq:conn}
A=A_j \df z^j=-\imag \,\langle \varphi(z) , \partial_j \varphi(z)\rangle\, \df z^j\,,
\end{equation}
i.e.\ it acts on a local section $\phi = \psi \varphi$ as
\[
\nabla_{\partial_j}' \phi =: \nabla_{\partial_j}'  (\psi\varphi)  =( (\partial_j + \imag A_j)\psi)\varphi\,.
\]
The curvature two-form $\Omega$ is just
\[
\Omega \;=\; \df A \,,\qquad\mbox{i.e.} \qquad \Omega_{ij} \;=\; \partial_i A_j - \partial_jA_i\,.
\]
We remark that the Berry connection is usually discussed in the form of the connection coefficient 
(\ref{eq:conn}) with respect to a local or global trivialization. Note that for our discussion of the semiclassical approximations we used at no point the existence of a trivialization of the bundle $L$. This is why the same reasoning can be used also in the case of Bloch electrons in magnetic fields. There the phase space is the cotangent bundle of a flat torus and the corresponding bundle $L$ is not trivializable, c.f.\ Section~\ref{Hofstadter}.

\section{Discussion of the  literature}
\label{sec-lit}

Semiclassical approximations  for Hamiltonians with operator valued symbols have been discussed many times in the literature and almost all objects appearing in our analysis appeared in some form before.
Hence we give a short roadmap to the literature that we know.

   Our first reference for the corrected symplectic form is a series of papers by
Iida and Kuratsuji \cite{MR819692,iida1987adiabatic,MR930990}. They used path integrals to study adiabatic decoupling of a slow and a fast system. In the path integral formalism they derive an effective Lagrangian for the slow system that incorporates Berry phase effects. After taking the semiclassical limit they are  able to write down their version of the semiclassical equations of motion. However they missed the $M$ term (\ref{Mdef})  and thus their result must be considered incomplete.

In the context of WKB approximations the work of  Bernstein  \cite{bernstein1975geometric} is the earliest reference which contains an additional phase term besides the Berry phase. Kaufman, Ye and Hui \cite{kaufman1987variational} used a variational approach to re-derive those results and expressed them in a form using Poisson brackets. The comparison of these results to the more modern ones is not  straightforward. An attempt to reconcile them has been made by Fukui \cite{PTP.87.927}.

A thorough  discussion of the older literature and the first appearance of the general $M$ term and the modified symplectic form $\omega_\epsi$ is contained in the  work of Littlejohn and Flynn \cite{MR1133847}. 
They  use the Weyl calculus  to diagonalize operators with  matrix-valued symbols. To do this, they must  assume that the spectrum consists only of globally isolated eigenvalues and that the linebundle $L$ defined in the previous section is trivializable. 
They found that - unlike in the scalar case - the equation for the amplitude in the WKB approximation contains  additional phase terms. One of them is the Berry phase and the other the $M$ term, which they call ``no-name term''. The Berry term is gauge dependent, i.e.\ it depends on the choice of a global tirvialization.
This gauge dependence  seemed to be unsatisfactory from their point of view and they searched for a method to express their results in a gauge-independent way,
which turned out to be closely related to the right choice of coordinates. A few lines of calculation show that   the phase space coordinates  $z_j':=z_j-\epsi A_j(z)$ are Darboux or canonical coordinates. Here $A_j$ are the components of the Berry connection (\ref{eq:conn}). This simply means that in these coordinates the symplectic form $\omega_\epsi$ is given by the canonical form $\df q' \wedge \df p'$.
Littlejohn and Flynn on the other hand started out by canonical coordinates and the analysis of gauge invariance led them to consider non-canonical coordinates, which are associated to the symplectic form~$\omega_\epsi$. However it seems to us, as if they failed to realize that their non-canonical coordinates are in fact position and momentum of the slow
degrees of freedom. At least they neither proved nor claimed the statements of our Theorems~\ref{satz:exp} and~\ref{satz:ego}.
 
 The algebraic part of the construction of the Moyal-projection  $\pi$ for $\epsi$-pseudo-differential operators was first done by Helffer and Sj\"ostrand \cite{HS90}. Independently and 
motivated by \cite{MR1133847}, Emmrich and Weinstein \cite{MR1376438} gave a   derivation of the transport equation in WKB approximations which does not rely on eigenfunctions and is intrinsically gauge-independent. 
In their work they give an alternativ construction of the  Moyal-projection $\pi$   and  find the gauge independent classical Hamiltonian $h$ with  the $M$ term in the form (\ref{Malt}). For them the modified symplectic form and the modified flow $\phi^t_\epsi$ play no role. 

Completely independently,  a  formal derivation of the modified semiclassical equations of motion (\ref{eq-motion}) in the special case of Bloch electrons using propagation of wave packets was given by Chang and Niu in \cite{chang-1995-53}.

Based on the approach of \cite{HS90}, the general construction of the super-adiabatic projection $\Pi^\epsi$ was first done by Nenciu and Sordoni \cite{NS:2004}.
Using an adaption of this construction and in addition global trivializations of $L$, in \cite{PST03}   the first rigorous derivation of the modified semiclassical equations for Bloch electrons without strong magnetic fields was given. The fact that such a trivialization is not available in magnetic Bloch bands was the motivation for the present work. 

There have been many more works considering the semiclassical limit of particles with spin, which corresponds to $\hfast = \C^N$. Among them we mention \cite{MR1858263,MR2063266}, where in \cite{MR2063266} an Egorov theorem to any order in $\epsi$ is proved, but not based on a modified classical flow. 

To our best knowledge the statement of Theorem~\ref{satz:exp} on stationary expectations was used only recently by Gat and Avron \cite{gat2003semiclassical} and later in \cite{xiao2005berry}. As far as we understand, there the statement is taken for granted, but not systematically derived or proved.
The statement of Theorem~\ref{satz:ego} was proved in a special case in \cite{PST03}, but otherwise neither claimed nor proved. 

Finally we would like to mention that the physics literature on the semi-classical limit for operator valued symbols is vast and it seems that many observations are reproduced independently several times. For example in \cite{gosselin2007semiclassical} ideas appearing in \cite{MR1133847} are reproduced without citation. Since all these newer works we know of are at most formal, make no   statements similar to ours  and differ from our strategy of proof, we do not list them here.

\section{Application: Magnetic Bloch bands in the Hofstadter model}\label{Hofstadter}

 Consider a gas of noninteracting fermionic  particles on the lattice $\Z^2$ subject to a constant magnetic field $\mathbf{B}=\tiny\left(\begin{array}{cc} 0 & B \\ -B & 0 \end{array}\right)$ with $B\in\R$. The single particle  Hamiltonian is the discrete magnetic Laplacian 
\[
  H^B = \sum_{|n|=1} T^{B}_n  
\] 
acting as a bounded self-adjoint operator on $\ell^2(\Z^2)$. Here $(T^{B}_n \psi)_j = \exp{\frac{\I}{2}   n\cdot \mathbf{B} j } \psi_{j-n}$ for $n\in \Z^2$   are magnetic translations. This is a simple model for   particles in a periodic potential in the tight binding approximation.  For $B=0$  the operator $H^{B=0}$ is invariant under lattice translations and thus diagonalized by the Fourier transformation  $\mathcal{F}:\ell^2(\Z^2)\to L^2(\R^2/(2\pi \Z^2))$, where $\mathcal{F} H^{B=0}\mathcal{F}^*= E$ becomes the multiplication operator with the function $E(k)=2(\cos(k_1)+\cos(k_2))$. 

For  $B\not=0$, the operator  $H^B$ is invariant under 
the dual magnetic translations $(\tilde T^{B}_n \psi)_j = \exp{-\frac{\I}{2}   n\cdot \mathbf{B} j } \psi_{j-n}$,   which, in general,  do not form a unitary group representation, since $ \tilde T^{B}_{(1,0)}   \tilde T^{B}_{(0,1)} = \exp{\I B} \,\tilde T^{B}_{(0,1)}  \tilde T^{B}_{(1,0)} $.
However, for magnetic fields $B=B_0 = 2\pi \,\frac{p}{q}$ that are rational multiples of $2\pi$, the magnetic field and the lattice are commensurable in the sense that the magnetic flux per unit cell is a rational multiple of the flux-quantum $\Phi_0 = \frac{2\pi\hbar}{e} =2 \pi$.  As a consequence one can extend the dual magnetic translations 
to a unitary representation of the subgroup $\Gamma_q := \{ \gamma\in\Z^2\,|\, \gamma_1\in q\Z\}$ of $\Z^2$
by defining
\[
\tilde T^{B_0}_\gamma := \left( \tilde T^{B_0}_{(1,0)} \right)^{\gamma_1}   \left( \tilde T^{B_0}_{(0,1)} \right)^{\gamma_2}
\]
and thus $H^{B_0}$ can be fibered with respect to this group representation:  
Let $M_q := [0,2\pi/q) \times [0,2\pi)$ be the reduced Brillouin zone and define the magnetic  Bloch-Floquet transformation as
\begin{align*}
U^{B_0} : &\; \ell^2(\Z^2) \to  L^2(M_q;\C^q)\,,\\ &\;(\psi_j)_{j\in\Z^2} \mapsto (U^{B_0}\psi)_m(k) := 
\exp{-\I k \cdot m} \sum_{\gamma\in\Gamma_q} \exp{\I k \cdot \gamma}  (\tilde T^{B_0}_\gamma \psi)_m \mbox{ for } m = 0,\ldots, q-1\,.
\end{align*}
Then $U^{B_0}$ is unitary and  because of $(U^{B_0}\psi)_m(k + \gamma^*) = \exp{-\I \gamma^* \cdot m} (U^{B_0}\psi)_m(k)$ for any $\gamma^*$ in the dual lattice   $\Gamma^*_q = := \{  (\gamma^*_1,\gamma^*_2) \in\R^2\,|\,
\gamma^*_1\in \frac{2\pi}{q} \Z \,,\,\gamma^*_2\in 2\pi\Z  \}$ the functions in its range can be naturally extended to
$\tau$-equivariant functions in $L^2_{\rm loc}(\R^2_k; \C^q)$. Here a function $f:\R^2\to \C^q$ is said to be $\tau$-equivariant if it satisfies
\[
f( k + \gamma^*)  =  \tau(\gamma^*) f(k) := {\rm diag}( 1, \exp{-\I \gamma^*_1  },\ldots,  \exp{-\I (q-1)\gamma^*_1  })  f(k)\quad\mbox{for all } \gamma^*\in \Gamma^*_q
\]
for the unitary representation $\tau$ of $\Gamma^*_q$.

The image $U^{B_0} H^{B_0} U^{B_0 *}$ of the Hamiltonian $H^{B_0}$ acts on  $\Hi = L^2(M_q;\C^q)$ as the matrix-valued multiplication operator 
\[
\small
H_0(k) = \left(\begin{array}{ccccc}
2\cos(k_2)  & \exp{-\I k_1}  & 0 &   \cdots & \exp{ \I k_1}\\
\exp{ \I k_1} & 2\cos(k_2 + B_0) & \exp{-\I k_1}&   \cdots& 0 \\
0 & \exp{ \I k_1} &  2\cos(k_2 + 2 B_0) &      \cdots &0\\
\vdots &     \ddots &  \ddots & \ddots &0\\
0 &&&&\exp{-\I k_1} \\
\exp{-\I k_1} & 0& \cdots  & \exp{ \I k_1}& 2\cos(k_2 + (q-1) B_0)
\end{array}
\right)\,,
\]
which is again $\tau$-equivariant, i.e.\ $H_0(k+\gamma^*) = \tau(\gamma^*) H_0(k) \tau(-\gamma^*)$ and thus $\Gamma^*_q$-periodic up to unitary equivalence. It is well known that for $q$ odd, $H_0(k)$ has $q$ distinct eigenvalues  $e^{(1)}(k)<\ldots< e^{(q)}(k)$ such that $e^{(j)}(M_q)\cap e^{(i)}(M_q)=\emptyset$ for $j\not= i$. By the $\tau$-equivariance and analyticity of $H_0(k)$, $e^{(j)}(k)$ are $\Gamma^*_q$-periodic real analytic functions. For $q$ even, all bands but the middle ones are isolated, where $e^{(q/2)}(k_*) = e^{(q/2+1)}(k_*)$ for one point $k_*\in M_q$. Thus except for the middle bands the same conclusions hold. The spectrum of $H^{B_0}$ thus consists of $q$ resp.\ $q-1$ disjoint closed intervals and if plotted as a function of $B_0$ gives rise to the famous Hofstadter butterfly with its fractal   structure. 
 
We now show that small (also irrational)  perturbations of $B_0$ give rise to a semiclassical problem for a Hamiltonian with operator valued symbol and that we can use the semiclassical approximations developed in the previous chapters in order to understand $H^B$ for a magnetic field  $B= B_0 + \epsi b$ in terms of a classical system based on the geometry of $H_0(k)$. 
To see this note that for $B= B_0 + \epsi b$ and in the presence of an electric potential $V^\epsi(n)= V(\epsi n)$ with $V(x) = V_{\rm b}(x) + \mathcal{E}\cdot x$ for  $V_{\rm b}\in C^\infty_{\rm b}(\R^2)$,  the Bloch-Floquet  transformation with respect to $B_0$  yields the Hamiltonian
 \begin{eqnarray*}
 H^\epsi &:=& U^{B_0} H^{B } U^{B_0 *} = H_0(k_1 +{\textstyle \frac{1}{2} } b \,\I\epsi \partial_{k_2}^\tau,\, k_2 -{\textstyle \frac{1}{2} } b \,\I\epsi \partial_{k_1}^\tau ) + V(\I\epsi\nabla_k^\tau)\\& =& H_0(k + {\textstyle \frac{1}{2} } {\mathbf{b}} \,\I\epsi\nabla_k^\tau  )+ V(\I\epsi\nabla_k^\tau)
 \end{eqnarray*}
 acting on $ L^2(M_q;\C^q)$. Here $\partial_{k_j}^\tau$ denotes the derivatives with $\tau$-equivariant boundary conditions and the matrix entries, e.g.\ $2\cos(k_2 -{\textstyle \frac{1}{2} } b \,\I\epsi \partial_{k_1}^\tau)$, 
 are defined by the functional calculus for self-adjoint operators.
 However, the operator $H^\epsi$ can also be understood as the $\epsi$-Weyl-quantization of  the matrix-valued function
\[
 H(k, r) := H_0( k_1 + {\textstyle \frac{1}{2} } b r_2, k_2 - {\textstyle \frac{1}{2} } b r_1) + V(r)\,,
 \]
i.e.\  $H^\epsi =   H(k, \I\epsi\nabla_k^\tau)$. In order to apply the Weyl-calculus from $\R^2$ here, one identifies
$L^2(M_q,\C^q)$ with the space 
$L^2_\tau := \{ f\in L^2_{\rm loc}(\R^2)\,|\, f\mbox{ is $\tau$-equivariant}\}$ with the norm $\|f\|_\tau^2 = \frac{q}{(2\pi)^2}\int_{M_q}|f(k)|^2\D k$  and restricts to $\tau$-equivariant symbols, see \cite{PST03} or Appendix~B in \cite{Teufel:2003} for details. With this modification all the results of this paper hold with the identical proofs for the Hamiltonian with symbol $H(k,r)$ and the underlying classical system on the phase space $\mathbb{T}^2\times \R^2$, which we describe next.

Let $e^{(j)}(k)$ be an isolated band of $H_0(k)$ with spectral projection $\pi^{(j)}(k)$. Then
$\tilde e^{(j)} (k,r) := e^{(j)}(k +\frac12 \mathbf{b} r) + V(r)$ is an isolated band for $H(k,r)$ with spectral projection 
$\tilde \pi^{(j)}  (k,r) = \pi^{(j)} (k +\frac12 \mathbf{b} r)$. A simple computation shows that the associated classical system
is
\[
h^{(j)} (k,r) = e^{(j)}(k +\tfrac12 \mathbf{b} r) + V(r) + \epsi  b \mathcal{M}^{(j)}( k +\tfrac12 \mathbf{b} r) \,,
\]
where
\[
 \mathcal{M}^{(j)}  =  {\rm Im} \;\tr_{\C^q}\left(\pi^{(j)}  \,\partial_{k_1}\pi^{(j)} \,(H_0 - e^{(j)}) \,\partial_{k_2}\pi^{(j)} 
 \right)
\]
is the effective magnetic moment of the particles in the $j$th band. With
\[
\Omega^{(j)}    := -\I \,\tr_{\C^q} ( \pi^{(j)}  [\partial_{k_1} \pi^{(j)} , \partial_{k_2} \pi^{(j)} ])  
\] 
being the curvature of the Berry connection on the $j$th eigenspace bundle of $H_0(k)$ one finds
\[
\Omega^{(j)kk}  (k,r) = \left(\begin{array}{cc} 0 &  \Omega^{(j)} (k +\tfrac12 \mathbf{b} r)\\
 -\Omega^{(j)} (k +\tfrac12 \mathbf{b} r)&0\end{array}\right) =: \mathbf{\Omega}^{(j)}(k +\tfrac12 \mathbf{b} r)\,,
\]
$  \Omega^{(j)rk} = -\tfrac12 \mathbf{b \Omega}^{(j)} =  \frac12 b\Omega^{(j)} E_2 $ and $  \Omega^{(j)rr} =- \tfrac14 \mathbf{b \Omega}^{(j) }\mathbf{b}= \frac14  b\Omega^{(j)}\mathbf{b}$.
The corresponding equations of motion (\ref{eq-motion}) become much simpler if one 
changes coordinates to the kinetic momentum $\kappa := k +\tfrac12 \mathbf{b} r$, leading to   
\[
h^{(j)} ( r,\kappa) = e^{(j)}(\kappa) + V(r) + \epsi  b \mathcal{M}^{(j)}( \kappa) \quad \mbox{and}\quad 
\omega^{(j)}_\epsi(\kappa) =   \left(\begin{array}{cc} - \mathbf{b}   \;&\;  E_2 \\[1mm]
-   E_2 \;&\;  \epsi \mathbf{\Omega}^{(j)}(\kappa)
\end{array} \right) 
\]
 with Liouville measure  $\D \lambda_\epsi^{(j)} =  (1 + \epsi b \Omega^{(j)}(\kappa))\,\D r\D \kappa$.
The Hamiltonian equations of motion are thus
\begin{eqnarray*}
\left(\begin{array}{c} \dot r \\ \dot \kappa \end{array}\right) &=& \frac{1}{(1 + \epsi b \Omega^{(j)}(\kappa))}   
 \left(\begin{array}{cc} -\epsi\mathbf{\Omega}^{(j)}(\kappa) \;&\;  E_2 \\[1mm]
-  E_2 \;&\;    \mathbf{b}
\end{array} \right)\left(\begin{array}{c} \partial_r h(\kappa,r) \\[1.5mm] \partial_\kappa h(\kappa,r) \end{array}\right)
\\&=&  \frac{1}{(1 + \epsi b \Omega^{(j)}(\kappa))} 
\left(\begin{array}{c}  \partial_\kappa \left(e^{(j)}(\kappa)   + \epsi  b \mathcal{M}^{(j)}( \kappa)\right)  - \epsi\mathbf{\Omega}^{(j)}(\kappa) \,\partial_r V(r) \\[1.5mm]  -\partial_r V(r) + \mathbf{b}  \,\partial_\kappa \left(e^{(j)}(\kappa)   + \epsi  b \mathcal{M}^{(j)}( \kappa)\right) \end{array}\right)
 \,,
\end{eqnarray*}
or, equivalently, by
\[
\dot r =  \partial_\kappa \left(e^{(j)}(\kappa)   + \epsi  b \mathcal{M}^{(j)}( \kappa)\right)    - \epsi\mathbf{\Omega}^{(j)}(\kappa)\,\dot \kappa\,,\qquad \dot \kappa =   -\partial_r V(r) + \mathbf{b} \,\dot r\,.
\]  
Finally note that by $\tau$-equivariance of $H_0(k)$  the functions $e^{(j)}$,  $\mathcal{M}^{(j)}$ and $\Omega^{(j)}$ are $\Gamma^*_q$-periodic. However, due to the symmetry of the original problem   they are even periodic with respect to $\frac{2\pi}{q} \Z^2$. Therefore the classical system can be defined on the phase space $\mathbb{T}_q\times \R^2$ where $\mathbb{T}_q$ is the torus $[0,2\pi/q)^2$.

We now compute the free energy per unit area of a gas of noninteracting Hofstadter particles at temperature $T=1/\beta$ and chemical potential $\mu$, also called the pressure.
 To this end let $\chi_n:\R^2 \to [0,1]$ be a sequence of ``smooth characteristic functions'' on $\Lambda_n := [-n,n]^2$, i.e.\ $\chi_n\in S^0(\R)$ uniformly with ${\rm supp} \chi_n \in \Lambda_n$ and $\chi_n|_{\Lambda_{n-1}} \equiv 1$. 
 With a slight abuse of notation we set $|\Lambda_n| := \|\chi_n\|_{L^1}$. 
 \begin{theo}[The thermodynamic pressure in the Hofstadter model]\label{pressure}
Let $B_0 = 2\pi \,\frac{p}{q}$ with $q$ odd and let $B= B_0 + \epsi b$ and $V(r)=0$.  Then for $0< \beta<\infty$, $\mu\in\R$ and $\epsi>0$ small enough, the limit
\[
p(B,\beta,\mu) = \beta^{-1} \lim_{n\to\infty}  \,\frac{1}{\frac{1}{ \epsi  ^2}|\Lambda_n|} \,\tr_{\ell^2(\Z^2)}\left( \chi_n(\epsi   x) \ln\left(
1 + \exp{-\beta (H^B-\mu)}
\right)\right)
\] 
exists and is approximated by the classical expression
\[
p(B,\beta,\mu) = \frac{q}{(2\pi)^2} \sum_{j=1}^q \int_{\mathbb{T}_q}   \left(1 + \epsi b \Omega^{(j)}(\kappa)\right) 
 \ln\left(
1 + \exp{-\beta ( h^{(j)}(\kappa) - \mu)}
\right)\D \kappa
+ \Or(\epsi  ^2)\,.
\]
Note that $\frac{1}{ \epsi  ^2}|\Lambda_n|$ is the area of the region in space to which $\chi_n(\epsi   x)$ restricts. 
\end{theo}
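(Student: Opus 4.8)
The plan is to deduce the statement from Theorem~\ref{satz:exp}, applied band by band in the magnetic Bloch--Floquet representation, after three elementary reductions. First, the function $f(x):=\beta^{-1}\ln(1+\ez^{-\beta(x-\mu)})$ is not in $\mathcal{A}$ since it grows linearly as $x\to-\infty$; however $H^B$ is bounded self-adjoint with $\|H^B\|\le 4$ uniformly for $B$ near $B_0$, and for small $\epsi$ each band obeys $h^{(j)}(\kappa)=e^{(j)}(\kappa)+\Or(\epsi)\in[-5,5]$ because $e^{(j)}(\kappa)\in\sigma(H_0(\kappa))\subset[-4,4]$ and $\mathcal{M}^{(j)}$ is bounded. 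Hence one may replace $f$ by some $\tilde f\in C^\infty_c(\R)\subset\mathcal{A}$ with $\tilde f\equiv f$ on $[-5,5]$: then $f(H^B)=\tilde f(H^B)$ and $\tilde f(h^{(j)})=f(h^{(j)})$, and we work with $\tilde f$ from now on. Second, $\chi_n(\epsi x)$ is a finite-rank multiplication operator, so $\chi_n(\epsi x)\tilde f(H^B)$ is trace class; and because $H^B$ commutes with every dual magnetic translation $\tilde T^B_m$, $m\in\Z^2$ (these are unitary even for irrational $B/2\pi$), conjugating by $\tilde T^B_m$ shows that $\tilde f(H^B)_{jj}=\tilde f(H^B)_{j-m,\,j-m}$ for all $j,m$, i.e.\ the diagonal of $\tilde f(H^B)$ is constant; here $V\equiv0$ is crucial. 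Therefore
\[
\tr_{\ell^2(\Z^2)}\bigl(\chi_n(\epsi x)\,\tilde f(H^B)\bigr)=\tilde f(H^B)_{00}\sum_{j\in\Z^2}\chi_n(\epsi j)\,,
\]
and by Poisson summation $\epsi^2\sum_j\chi_n(\epsi j)=|\Lambda_n|+\Or(\epsi^\infty n)$ (the correction is governed by the rapidly decaying Fourier transform of $\chi_n$, whose seminorms grow only like the area of the annular support of the derivatives of $\chi_n$). Consequently the limit exists and $p(B,\beta,\mu)=\tilde f(H^B)_{00}$, so it remains only to evaluate this number up to $\Or(\epsi^2)$.

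To evaluate it, pass to the Bloch--Floquet picture: $H^B$ becomes $\hat H=\mathrm{op}^W(H_0(k+\tfrac12\mathbf{b}r))$ on $L^2_\tau$ (Assumptions~1 and 2 hold, with $\xi=0$, $H_1=0$, and for $q$ odd the uniform gap condition for each of the $q$ distinct eigenvalues $e^{(j)}(k)$), and $\chi_n(\epsi x)$ becomes $\hat a=\mathrm{op}^W(\chi_n(r))$ --- a symbol depending only on the momentum variable $r$, so its quantization is the functional calculus $\chi_n(\I\epsi\nabla_k^\tau)$. Since $\sum_{j=1}^q\pi^{(j)}_0(k)=\mathbf 1_{\C^q}$, the associated super-adiabatic projections satisfy $\sum_{j=1}^q\Pi^{\epsi,(j)}=\mathbf 1+\Or(\epsi^\infty)$ in norm, and using $\|\hat a\|_{\mathcal{J}_1}=\Or(\epsi^{-2}|\Lambda_n|)$ (recall $0\le\chi_n\le1$) together with $\|\tilde f(\hat H)\|\le\sup|\tilde f|$ we get
\[
\tr_{L^2_\tau}\bigl(\hat a\,\tilde f(\hat H)\bigr)=\sum_{j=1}^q\tr_{L^2_\tau}\bigl(\Pi^{\epsi,(j)}\,\tilde f(\hat H)\,\hat a\bigr)+\Or(\epsi^\infty|\Lambda_n|)\,.
\]
To each summand I apply Theorem~\ref{satz:exp} in its $\tau$-equivariant version, whose trace formula integrates over the magnetic Brillouin zone $M_q$ and over $\R^2_r$; with $\df\lambda^{(j)}_\epsi=(1+\epsi b\,\Omega^{(j)})\,\df k\,\df r$ this yields $\tr_{L^2_\tau}(\Pi^{\epsi,(j)}\tilde f(\hat H)\hat a)=(2\pi\epsi)^{-2}\bigl(\int_{M_q\times\R^2}\df\lambda^{(j)}_\epsi\,\tilde f(h^{(j)})\,\chi_n(r)+\Or(\epsi^2|\Lambda_n|)\bigr)$, where the remainder constant is uniform in $n$ precisely because Theorem~\ref{satz:exp} is stated with the explicit $\|a\|_{L^1}$-dependence.

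It remains to compute the integral. With $V\equiv0$ both $h^{(j)}=e^{(j)}(k+\tfrac12\mathbf br)+\epsi b\mathcal M^{(j)}(k+\tfrac12\mathbf br)$ and the Liouville density $1+\epsi b\Omega^{(j)}(k+\tfrac12\mathbf br)$ depend on $(k,r)$ only through the kinetic momentum $\kappa=k+\tfrac12\mathbf br$; the shear $(k,r)\mapsto(\kappa,r)$ preserves Lebesgue measure, and $e^{(j)},\mathcal M^{(j)},\Omega^{(j)}$ are $\tfrac{2\pi}{q}\Z^2$-periodic, so the $k$-integral over $M_q$ equals $q$ times the $\kappa$-integral over $\mathbb{T}_q$ and is independent of $r$, while the remaining $r$-integral produces $\|\chi_n\|_{L^1}=|\Lambda_n|$. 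Hence $\int_{M_q\times\R^2}\df\lambda^{(j)}_\epsi\,\tilde f(h^{(j)})\chi_n(r)=q\,|\Lambda_n|\int_{\mathbb{T}_q}(1+\epsi b\Omega^{(j)}(\kappa))\,\tilde f(h^{(j)}(\kappa))\,\df\kappa$. Summing over $j$, dividing $\tr_{\ell^2(\Z^2)}(\chi_n(\epsi x)\tilde f(H^B))$ by $\epsi^{-2}|\Lambda_n|$ and letting $n\to\infty$, the factor $|\Lambda_n|$ cancels, all remainders become $\Or(\epsi^2)$ uniformly in $n$, and $\tilde f(h^{(j)})=f(h^{(j)})$ on $[-5,5]$ --- leaving exactly the claimed classical expression with prefactor $q/(2\pi)^2$. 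The only genuinely delicate points are keeping the error at $\Or(\epsi^2)$ after the division by $\epsi^{-2}|\Lambda_n|$ (which is why the $\|a\|_{L^1}$-form of Theorem~\ref{satz:exp} is needed) and the bookkeeping of the $\tau$-equivariant normalization, in particular the volume ratio $|M_q|/|\mathbb{T}_q|=q$; the rest is routine. As an aside, the existence of the limit could alternatively be obtained from the general theory of the trace per unit volume for covariant operators, cf.\ \cite{SBT12}.
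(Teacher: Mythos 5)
Your argument follows the paper's route step for step --- magnetic Bloch--Floquet transformation, decomposition into the $q$ isolated bands, application of Theorem~\ref{satz:exp} with the explicit $\|a\|_{L^1}$-error, and the shear to kinetic momentum so that the $r$-integral factors out $|\Lambda_n|$ --- and it is correct. You additionally repair a small gap the paper leaves implicit: $x\mapsto\beta^{-1}\ln(1+\ez^{-\beta(x-\mu)})$ grows linearly as $x\to-\infty$ and hence is not in $\mathcal{A}$, so your replacement by a compactly supported cutoff agreeing with it on a neighbourhood of $\sigma(H^B)$ and of $\bigcup_j h^{(j)}(\mathbb{T}_q)$ is needed before Theorem~\ref{satz:exp} (whose proof uses the Helffer--Sj\"ostrand calculus on $\mathcal{A}$) can be invoked. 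Your Poisson-summation argument for the existence of the limit and the $\Or(\epsi^\infty)$-control of $\mathbf{1}-\sum_j\Pi^{(j)}$ are likewise sound technical details that the paper simply asserts or omits.
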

\begin{proof} The existence of the limit   follows immediately from invariance of the Hamiltonian under magnetic translations. By unitarity of the Bloch-Floquet transformation  and Theorem~\ref{satz:exp} we find
\begin{eqnarray*}\lefteqn{
\tr_{\ell^2(\Z^2)}\left( \chi_n(\epsi   x) \ln\left(
1 + \exp{-\beta (H^B-\mu)}
\right)\right)
=
\tr_{\Hi}\left( \chi_n(\I \epsi   \nabla_k^\tau) \ln\left(
1 + \exp{-\beta ( H^\epsi -\mu)}
\right)\right)}\\
&=& \sum_{j=1}^q \tr_{\Hi}\left( \Pi^{(j)} \chi_n(\I \epsi   \nabla_k^\tau) \ln\left(
1 + \exp{-\beta ( H^\epsi -\mu)}
\right)\right)\\
&=& \sum_{j=1}^q  \frac{1}{(2\pi\epsi)^2}\int_{M_q\times \R^2} \left(1 + \epsi b \Omega^{(j)}(\kappa)\right)
 \chi_n( r)\, \ln\left(
1 + \exp{-\beta ( h^{(j)}(\kappa) - \mu)}
\right)\D \kappa\,\D r + \Or(  \|\chi_n\|_{L^1})\\
&=& \sum_{j=1}^q  \frac{|\Lambda_n|}{(2\pi\epsi  )^2}\int_{M_q } \left(1 + \epsi b \Omega^{(j)}(\kappa)\right)
   \ln\left(
1 + \exp{-\beta ( h^{(j)}(\kappa) - \mu)}
\right)\D \kappa  + \Or( |\Lambda_n|)\,.
\end{eqnarray*}
Now we can restrict the integration to $\mathbb{T}_q$ to get a factor of $q$, divide by the area $ \epsi^{-2}|\Lambda_n|$ and take the limit to prove the second statement.
\end{proof}

The thermodynamic pressure is the starting point for computing several physically relevant quantities like
the magnetization $M(B,\beta,\mu) = \partial_B p(B,\beta,\mu)$ and the  density $\rho(B,\beta,\mu) = \partial_\mu  p(B,\beta,\mu)$ and derivatives thereof. 
\begin{coro}[The magnetization of the Hofstadter model at rational $B$]\label{MagCor}
Let $B_0 = 2\pi \,\frac{p}{q}$ with $q$ odd   and $V(r)=0$.  Then for $0< \beta<\infty$, $\mu\in\R$  
it holds that 
\[
M(B_0 , \beta,\mu)   =  \frac{q}{(2\pi)^2} \sum_{j=1}^q \int_{\mathbb{T}_q}   \left( f_{\beta,\mu}( h^{(j)} ) \mathcal{M}^{(j)}   + \tfrac{1}{\beta}
 \ln\left(
1 + \exp{-\beta ( h^{(j)}  - \mu)}
\right) \Omega^{(j)} \right)\D \kappa
 \,,
\]
where $f_{\beta,\mu}(E) := (1 + \exp{\beta(E-\mu)})^{-1}$ is the Fermi-Dirac distribution. 
\end{coro}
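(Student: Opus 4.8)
The plan is to read off $M(B_0,\beta,\mu)=\partial_B p(B,\beta,\mu)\big|_{B=B_0}$ directly from the closed-form expression for the pressure established in Theorem~\ref{pressure}. For $V=0$ the right hand side of that theorem is an explicit smooth function of $B$ near $B_0$: setting $B-B_0=\epsi b$, the band data enter only through $h^{(j)}(\kappa)=e^{(j)}(\kappa)+(B-B_0)\,\mathcal{M}^{(j)}(\kappa)$ and through the density $1+(B-B_0)\,\Omega^{(j)}(\kappa)$ of the Liouville measure, so that
\[
p(B,\beta,\mu)=\tfrac1\beta\,\tfrac{q}{(2\pi)^2}\sum_{j=1}^{q}\int_{\mathbb{T}_q}\bigl(1+(B-B_0)\Omega^{(j)}\bigr)\,\ln\bigl(1+\exp{-\beta(h^{(j)}-\mu)}\bigr)\,\df\kappa\;+\;\Or(\epsi^2)\,.
\]
The first step is simply to differentiate this in $B$ and set $B=B_0$.

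By the product rule this produces exactly two contributions per band. Differentiating the Liouville density $1+(B-B_0)\Omega^{(j)}$ gives $\Omega^{(j)}$ times $\tfrac1\beta\ln\bigl(1+\exp{-\beta(h^{(j)}-\mu)}\bigr)$, the Berry-curvature (density-of-states) contribution. Differentiating the band energy inside the logarithm uses $\partial_B h^{(j)}=\mathcal{M}^{(j)}$ together with the elementary identity $\partial_E\bigl[\tfrac1\beta\ln(1+\exp{-\beta(E-\mu)})\bigr]=-f_{\beta,\mu}(E)$, and produces the magnetic-moment contribution proportional to $\mathcal{M}^{(j)}\,f_{\beta,\mu}(h^{(j)})$, the factors $\beta$ and $\beta^{-1}$ cancelling. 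At $B=B_0$ one has $h^{(j)}=e^{(j)}$, but since $h^{(j)}-e^{(j)}=\Or(\epsi)$ the $h^{(j)}$ notation may be kept at no cost; pulling out the common prefactor $\tfrac{q}{(2\pi)^2}\sum_j\int_{\mathbb{T}_q}\cdots\df\kappa$ then reproduces the formula of Corollary~\ref{MagCor}. The splitting into the two summands is exactly the familiar decomposition of the orbital magnetization into the intrinsic magnetic moment of the occupied states and the Berry-phase correction to the phase-space measure.

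The real work is not the chain rule but justifying the term-by-term differentiation of the asymptotic expansion: Theorem~\ref{pressure} bounds the remainder $R(B):=p(B)-P_{\rm cl}(B)$ only in supremum norm, $|R(B)|=\Or(\epsi^2)$, which by itself says nothing about $\partial_B R$. The main obstacle is therefore to show that $R$ is $C^1$ in $B$ near $B_0$ with small derivative, so that $M(B_0,\beta,\mu)=P_{\rm cl}'(B_0)$ up to the corresponding error. I would obtain this by observing that every ingredient of Theorems~\ref{satz:sat}, \ref{satz:exp} and \ref{pressure} — the super-adiabatic projections $\Pi^{(j)}$, the classical symbols $e^{(j)},\pi^{(j)}$, the Moyal expansions, and each estimate in the proof of Theorem~\ref{pressure} — depends smoothly on the parameter $b$ with bounds locally uniform in $b$, so that differentiating that chain of estimates with respect to $b$ reproduces the same structure for $\partial_b$, giving $\partial_b R=\Or(\epsi^2)$ and hence $\partial_B R=\epsi^{-1}\partial_b R=\Or(\epsi)$ (with a little more care on the derivative of the remainder one recovers the full $\Or(\epsi^2)$).

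An alternative I would not pursue is to compute the magnetization directly from $M(B,\beta,\mu)=-\tfrac1V\,\tr\bigl(\chi\,f_{\beta,\mu}(H^B)\,\partial_B H^B\bigr)$ and to apply the expectation-value machinery of Theorem~\ref{satz:exp} to the ``magnetic moment'' observable $\partial_B H^B$; after the magnetic Bloch--Floquet transform this observable has, for $V=0$, the Weyl symbol $\tfrac1{2\epsi}\bigl(r_2\,\partial_{k_1}H_0-r_1\,\partial_{k_2}H_0\bigr)$, which is operator-valued and grows linearly in the position variable $r$, hence lies outside $S^0$ and would require the micro-local truncation arguments alluded to in Section~\ref{sec:sett}. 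Differentiating the pressure formula is cleaner, and it is precisely the hypothesis $V=0$ that keeps $h^{(j)}$, $\mathcal{M}^{(j)}$ and $\Omega^{(j)}$ functions of the kinetic momentum $\kappa$ alone, so that the $r$-integration can be carried out as in Theorem~\ref{pressure} before the differentiation in $B$.
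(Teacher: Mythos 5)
Your argument follows the paper's proof exactly: differentiate the classical expression from Theorem~\ref{pressure} in $B$ and split the integrand by the product rule into the magnetic-moment part (via $\partial_B h^{(j)}=\mathcal{M}^{(j)}$ and $\partial_E[\tfrac1\beta\ln(1+\exp{-\beta(E-\mu)})]=-f_{\beta,\mu}(E)$) and the Berry-curvature part (from the Liouville density). Your identification of the alternative route via a linearly growing observable $\partial_B H^B$, and of the reason to avoid it, is also a sound remark.

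One adjustment: you treat the differentiability of the remainder as the "real work" and propose re-running the whole construction of Theorems~\ref{satz:sat}--\ref{pressure} with $b$-derivatives. That is more than is needed for this corollary. You only differentiate at the single point $B=B_0$, i.e.\ $\epsi=0$, where the remainder $R(\epsi):=p(B_0+\epsi b)-P_{\rm cl}(\epsi)$ already satisfies $R(0)=0$ by the $\Or(\epsi^2)$ bound. Then $R'(0)=\lim_{\epsi\to 0}R(\epsi)/\epsi=0$ follows directly from $|R(\epsi)|\le C\epsi^2$, \emph{provided} $R$ is differentiable at $\epsi=0$. Since $P_{\rm cl}$ is manifestly smooth, this amounts only to the differentiability of $B\mapsto p(B,\beta,\mu)$ at $B_0$, which is the mild regularity of the thermodynamic pressure that the paper silently assumes (and notes via "$\partial_B=b^{-1}\partial_\epsi$"). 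You do not need a $C^1$ bound on $R$ in a full $b$-neighborhood; the pointwise second-order vanishing of the error at $\epsi=0$ does the job. Your worry would become material only if one wanted $M(B,\beta,\mu)$ with $\Or(\epsi)$ accuracy uniformly for $B$ near $B_0$, not just at $B_0$ itself.
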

\begin{proof}
The formula is obtained by taking the derivative of the semi-classical expression for $p$ from Theorem~\ref{pressure}.
Noting that $\partial_B = b^{-1} \partial_\epsi$ shows that the error term does not contribute to the derivative,   since it is $\Or(\epsi^2)$.
\end{proof}
This formula for the magnetization in solids is rather recent even on a heuristic level, see \cite{gat2003semiclassical,xiao2005berry}. It was proved by a completely different approach as a special case of a more general formula in \cite{SBT12}. Note however, that at least on a formal level the present derivation using the semiclassical model is much simpler.

As a final application of the Egorov theorem or, more precisely, of   Proposition~\ref{commprop}, we compute the current density in the Hofstadter model at zero temperature when a constant  electric field $\mathcal{E}\in\R^2$ is applied and when  $\mu$ lies  in a gap between $e^{(m)}$ and $e^{(m+1)}$.
The corresponding state is given by $\Pi^\epsi = \sum_{j=1}^m \Pi^{(j)}$ and its localization to the region $\epsi^{-1}\Lambda_n$ by $\Pi^\epsi_n := \Pi^\epsi \chi_n(\epsi x) \Pi^\epsi$.

\begin{theo}[Current density in the Hofstadter model] 
Let $B_0 = 2\pi \,\frac{p}{q}$ with $q$ odd  and let $B= B_0 + \epsi b$ and $V(r)=\mathcal{E}\cdot r$. Let $\mu\in\R$   be such that $e^{(m)}(k) < \mu < e^{(m+1)}(k)$  for one $m\in\{1,\ldots, q\}$ and all $k\in \mathbb{T}_q$. 
Then 
with $\Pi^\epsi = \sum_{j=1}^m \Pi^{(j)}$ and $\Pi^\epsi_n := \Pi^\epsi \chi_n(\epsi x) \Pi^\epsi$ it holds that
\[
j(B,\mathcal{E}) :=  \lim_{n\to\infty}  \,\frac{1}{\frac{1}{ \epsi  ^2}|\Lambda_n|} \,\tr_{\ell^2(\Z^2)}\left( \Pi^\epsi_n\,J^\epsi  \right) = \mathcal{E}^\perp \,\tfrac{q}{(2\pi)^2} \sum_{j=1}^m \int_{\mathbb{T}_q} \Omega^{(j)}(\kappa)\D\kappa + \Or(\epsi)\,,
\]
where $J^\epsi := \frac{\I}{\epsi}[ H^B, x]$ is the current operator. 
\end{theo}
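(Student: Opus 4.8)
The plan is to pass to the magnetic Bloch--Floquet picture, rewrite $J^\epsi$ as a band-projected velocity operator, apply Proposition~\ref{commprop} to the linear symbol $r_l$, and then exploit that the formally leading $\Or(\epsi^{-1})$ contribution to the current is a Brillouin-zone integral of a total $\kappa$-derivative and therefore vanishes, leaving only the anomalous ($\mathbf\Omega^{(j)}$) piece of the velocity. Under $U^{B_0}$ the current operator becomes $J^\epsi_l=\tfrac{\I}{\epsi}[\widehat H,x_l]=\tfrac1\epsi\cdot\tfrac{\I}{\epsi}[\widehat H,\widehat{r_l}]$, since $x_l$ corresponds to $\tfrac1\epsi\widehat{r_l}=\I\nabla^\tau_{k_l}$; by the exact Weyl identity $[\widehat H,\widehat{r_l}]=\I\epsi\,{\rm Op}^{\rm W}(\partial_{k_l}H)$ this equals $\tfrac1\epsi\,{\rm Op}^{\rm W}(\partial_{k_l}H)$ up to an overall sign fixed by the conventions of the transform, where $\partial_{k_l}H=(\partial_{\kappa_l}H_0)(k+\tfrac12\mathbf b r)$. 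By unitarity of $U^{B_0}$ and cyclicity of the trace, $\tr_{\ell^2(\Z^2)}(\Pi^\epsi_n J^\epsi_l)=\tr_\Hi(\widehat{\chi_n}\,\Pi^\epsi J^\epsi_l\,\Pi^\epsi)$ with $\widehat{\chi_n}:=\chi_n(\I\epsi\nabla^\tau_k)$, so the two factors flanking $J^\epsi_l$ are automatically $\Pi^\epsi$. Writing $\Pi^\epsi=\sum_{j=1}^m\Pi^{(j)}$ and applying Proposition~\ref{commprop} in the form for a linear observable $a(z)=r_l$ to each diagonal block, I get $\Pi^{(j)}J^\epsi_l\Pi^{(j)}=\tfrac1\epsi\,{\rm Op}^{\rm W}\big(\pi^{(j)}\#(X_{h^{(j)}}\!\cdot\nabla r_l)\#\pi^{(j)}\big)+\Or(\epsi)$, where $X_{h^{(j)}}\!\cdot\nabla r_l=\dot r^{(j)}_l$ is the $r_l$-component of the Hamiltonian vector field of $(h^{(j)},\omega^{(j)}_\epsi)$ from Section~\ref{Hofstadter}; with $V(r)=\mathcal E\cdot r$ the equations of motion there give $\dot r^{(j)}\,(1+\epsi b\,\Omega^{(j)})=\partial_\kappa e^{(j)}+\epsi b\,\partial_\kappa\mathcal M^{(j)}-\epsi\,\mathbf\Omega^{(j)}\mathcal E+\Or(\epsi^2)$.

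Next, the fibre-trace computation from the proof of Theorem~\ref{satz:exp} works verbatim for any scalar classical symbol in place of $f\circ h$ and gives $\tr_{\C^q}\big(\pi^{(j)}\#\dot r^{(j)}_l\#\pi^{(j)}\big)=\dot r^{(j)}_l\,(1+\epsi b\,\Omega^{(j)})+\Or(\epsi^2)$; plugging this into the trace identity (\ref{traceform}) gives, up to the overall sign,
\[
\tr_\Hi(\widehat{\chi_n}\,\Pi^{(j)}J^\epsi_l\Pi^{(j)})=\tfrac{1}{\epsi\,(2\pi\epsi)^2}\int_{M_q\times\R^2}\chi_n(r)\,\big(\partial_{\kappa_l}e^{(j)}+\epsi b\,\partial_{\kappa_l}\mathcal M^{(j)}-\epsi\,(\mathbf\Omega^{(j)}\mathcal E)_l\big)(k+\tfrac12\mathbf b r)\,\df k\,\df r+\Or(\epsi^{-1}|\Lambda_n|).
\]
The crucial observation is the inner $\df k$-integration: after the volume-preserving shear $\kappa=k+\tfrac12\mathbf b r$ and using $\mathbb T_q$-periodicity of $e^{(j)},\mathcal M^{(j)},\Omega^{(j)}$ one has $\int_{M_q}\partial_{\kappa_l}e^{(j)}=\int_{M_q}\partial_{\kappa_l}\mathcal M^{(j)}=0$, so the nominal $\Or(\epsi^{-1})$ part cancels and the right-hand side reduces to $\tfrac{q}{(2\pi\epsi)^2}|\Lambda_n|\int_{\mathbb T_q}(\mathbf\Omega^{(j)}\mathcal E)_l\,\df\kappa+\Or(\epsi^{-1}|\Lambda_n|)$. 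Dividing by $\epsi^{-2}|\Lambda_n|$, summing over $j=1,\dots,m$, and letting $n\to\infty$ — the limit existing because $\Pi^\epsi$ commutes with magnetic translations, exactly as in Theorem~\ref{pressure} — yields $\tfrac{q}{(2\pi)^2}\sum_{j=1}^m\int_{\mathbb T_q}(\mathbf\Omega^{(j)}\mathcal E)_l\,\df\kappa+\Or(\epsi)$, which is the claimed formula since $\mathbf\Omega^{(j)}\mathcal E=\Omega^{(j)}\mathcal E^\perp$.

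The hard part will be the off-diagonal contributions: the cross-band blocks $\tr_\Hi(\widehat{\chi_n}\,\Pi^{(j')}J^\epsi_l\Pi^{(j)})$ with $j\neq j'$, both $\le m$, and the internal-position piece implicit in $x_l$. Here $\Pi^{(j')}J^\epsi_l\Pi^{(j)}=\tfrac1\epsi\,{\rm Op}^{\rm W}(\pi^{(j')}\#\partial_{k_l}H\#\pi^{(j)})+\Or(\epsi^\infty)$, whose leading fibre trace $\tr_{\C^q}(\pi^{(j')}_0\partial_{k_l}H_0\pi^{(j)}_0)=\tr_{\C^q}(\pi^{(j)}_0\pi^{(j')}_0\partial_{k_l}H_0)=0$ again kills the $\Or(\epsi^{-1})$ divergence, but one must still check that the remaining $\Or(1)$ term integrates to zero over $\mathbb T_q$; this can be done either by a direct computation using the explicit formula (\ref{eq:pidi}) for the subprincipal symbols together with $\mathbb T_q$-periodicity (so that total $\kappa$-derivatives drop out), or, more structurally, by treating $\{1,\dots,m\}$ as a single isolated band group, for which $\Pi^\epsi$ and its classical symbol $\pi$ are well-defined objects and the off-diagonal blocks are never split off, so that only $\sum_{j=1}^m$ of the diagonal terms above survives after the fibre trace. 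Throughout one also needs the routine but, for the unbounded $x_l$, not quite trivial fact that $\Pi^{\epsi\perp}x_l\Pi^\epsi=\Pi^{\epsi\perp}[x_l,\Pi^\epsi]$ is bounded, with $[x_l,\Pi^\epsi]=\I\,{\rm Op}^{\rm W}(\partial_{k_l}\pi)+\Or(\epsi^\infty)$, which is what keeps all remainder terms at the controlled size $\Or(\epsi^{-1}|\Lambda_n|)$.
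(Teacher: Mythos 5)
Your proposal follows essentially the same route as the paper's own (explicitly sketch-level) argument: pass to Bloch--Floquet, write $J^\epsi_l = \epsi^{-1}\widehat{\partial_{k_l}H}$ via the exact Weyl identity for linear symbols, apply Proposition~\ref{commprop} to $a(z)=r_l$, use the trace formula and the fibre-trace computation from the proof of Theorem~\ref{satz:exp}, and observe that the $\Or(\epsi^{-1})$ gradient term integrates to zero over $\mathbb{T}_q$. You also correctly spot a gap that the paper glosses over: the paper silently replaces $\tr(\hat\chi_n\Pi^\epsi\hat J_l\Pi^\epsi)$ by $\sum_{j=1}^m\tr(\hat\chi_n\Pi^{(j)}\hat J_l\Pi^{(j)})$, dropping the interband pieces $\tr(\hat\chi_n\Pi^{(j)}\hat J_l\Pi^{(j')})$, $j\ne j'$.

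Where your proposal is weaker than it should be is precisely in closing that gap. Neither of your two suggested fixes is actually carried out, and the band-group idea in particular is not immediate: Proposition~\ref{commprop} and the whole reduction to the scalar vector field $X_{h,r_l}$ require a non-degenerate band, so for a rank-$m$ projection one would have to redo the fibre-trace computation from scratch rather than invoke the existing machinery. The direct-computation route is also more delicate than you indicate, since terms of the form $\chi_n\,\tr_{\C^q}(\pi_0^{(j')}\pi_1^{(j)}\partial_{k_l}H_0)$ (with $\chi_n$, not $\nabla\chi_n$) appear at first order and their vanishing after the $\kappa$-integration is not an obvious total-derivative identity. The cleanest closure of the gap, which you do not mention, is that the interband contributions are boundary terms: since $\Pi^{(j)}\Pi^{(j')}=\Or(\epsi^\infty)$, one has $\Pi^{(j)}\hat\chi_n\Pi^{(j')}=\Pi^{(j)}[\hat\chi_n,\Pi^{(j')}]+\Or(\epsi^\infty)$, and the commutator's symbol is (to all finite orders in $\epsi$) supported in the boundary layer $\Lambda_n\setminus\Lambda_{n-1}$ where $\nabla\chi_n$ lives, giving $\|\Pi^{(j)}\hat\chi_n\Pi^{(j')}\|_1=\Or(\epsi^{1-2}\,\|\nabla\chi_n\|_{L^1})=\Or(\epsi^{-1}n)$; paired with $\|J^\epsi\|=\Or(\epsi^{-1})$ and divided by $\epsi^{-2}|\Lambda_n|\sim\epsi^{-2}n^2$ this is $\Or(1/n)\to 0$ in the thermodynamic limit, independently of any cancellation among subprincipal symbols.
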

As $\frac{q}{2\pi} \int_{\mathbb{T}_q} \Omega^{(j)}(\kappa)\D\kappa\in \Z$ is the Chern number of the $j$th magnetic sub-band at magnetic field $B_0$, the Hall current is quantized and independent of $B$ in a neighborhood of $B_0$ at leading order.
 Since this result is basically well known and proven in greater generality by other means, e.g.\ \cite{BSB}, we  only give the simple computation based on the semiclassical model, but skip the mathematical details. However, the following can be easily made rigorous by using the results of the previous sections. On the other hand we believe that the strength of the semiclassical model lies in the fact that its application is very simple and leads in a straight-forward way to the correct physical expressions.
 
According to (\ref{CommuPoisson})  we have that  $J = \frac{\I}{\epsi} (H\# r- r\# H)\in S^0(\R)$ satisfies
\[
\Pi^{(j)} \,\hat J_{l} \Pi^{(j)} = \Pi^{(j)} \,\hat X_{h,{r_l}}\Pi^{(j)}  + \Or(\epsi^2)
\]
and thus 
 \begin{eqnarray*}\lefteqn{
\sum_{j=1}^m\tr_{\Hi}\left( \hat \chi_n  \Pi^{(j)} \,\hat J_{l} \Pi^{(j)} \right)
= \sum_{j=1}^m\tr_{\Hi}\left( \hat \chi_n  \Pi^{(j)} \,\hat X_{h,{r_l}}\Pi^{(j)} \right)+ \Or(\epsi^2|\Lambda_n|)}\\&=& \tfrac{q}{(2\pi\epsi)^2}\sum_{j=1}^m \,\int\limits_{\mathbb{T}_q\times\R^2}\hspace{-8pt}\D \lambda^{(j)}_\epsi
 \frac{\chi_n(r)}{(1 + \epsi b \Omega^{(j)}(\kappa))} 
\left(  \partial_\kappa \left(e^{(j)}(\kappa)   + \epsi  b \mathcal{M}^{(j)}( \kappa)\right)  - \epsi\mathbf{\Omega}^{(j)}(\kappa) \,\partial_r V(r)\right)\,.
\end{eqnarray*}
Since $J^\epsi = \frac{1}{\epsi}\hat J$, dividing by $\frac{1}{ \epsi  ^2}|\Lambda_n|$ and taking the thermodynamic limit yields
\begin{eqnarray*}
j(B,\mathcal{E})& =& \tfrac{q}{\epsi (2\pi)^2}\sum_{j=1}^m \,\int_{\mathbb{T}_q}\D \kappa
\left(  \partial_\kappa \left(e^{(j)}(\kappa)   + \epsi  b \mathcal{M}^{(j)}( \kappa)\right)  - \epsi\mathbf{\Omega}^{(j)}(\kappa) \,\mathcal{E}\right)\\
&=&  \mathcal{E}^\perp \,\tfrac{q}{(2\pi)^2} \sum_{j=1}^m \int_{\mathbb{T}_q} \Omega^{(j)}(\kappa)\D\kappa + \Or(\epsi)\,.
\end{eqnarray*}
Here we used that the leading term is the integral of a gradient over a manifold without boundary and thus zero by Stokes' theorem. 
This is the well known text book statement that filled bands cannot contribute to the current. The standard semiclassical model without the first order corrections would imply exactly this.

\appendix

\section{Technical lemmas}

In the following proofs we apply the Helffer-Sj\"ostrand formula for functions in $\mathcal{A}$ as developed in \cite{Davies}.

\begin{lemma}\label{lem:est}
Let $A$ be a self-adjoint operator on $\Hi$ with domain $D$ and $P\in\mathcal{B}(\Hi)$ an orthogonal projection such that $PD\subset D$ and 
\[
\| [ A,P ] \| \leq d\,.
\]
Then   for any $f\in \mathcal{A}$ there is a constant $C <\infty$ such that
\begin{equation}\label{esti1}
\| [ f(A), P] \| \;\leq \; C \, d
\end{equation}
and
\begin{equation}\label{esti2}
\|  P f(A) P -   f(PAP)   \|  \;\leq\; C \,d^2\,.
\end{equation}
\end{lemma}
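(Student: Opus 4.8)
The plan is to reduce everything to resolvent estimates via the Helffer--Sj\"ostrand formula in the form developed in \cite{Davies} for the class $\mathcal{A}$. For $f\in\mathcal{A}$ one picks a quasi-analytic extension $\tilde f\in C_0^\infty(\C)$ with $\tilde f|_{\R}=f$ and $|\bar\partial\tilde f(z)|\le C_N\,\langle z\rangle^{\beta-1-N}\,|\mathrm{Im}\,z|^{N}$ for every $N\in\N_0$; then for any self-adjoint operator $T$ one has $f(T)=\frac1\pi\int_{\C}\bar\partial\tilde f(z)\,(z-T)^{-1}\,\D x\,\D y$. Both claims will follow by inserting suitable bounds on the relevant combinations of resolvents, which behave like $|\mathrm{Im}\,z|^{-2}$ for \eqref{esti1} and $|\mathrm{Im}\,z|^{-3}$ for \eqref{esti2}; choosing $N$ large (here $N\ge 3$ suffices) makes the $z$-integral converge near the real axis, while the $\langle z\rangle^{\beta-1-N}$ decay with $\beta<0$ handles convergence at infinity.

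For \eqref{esti1} I would simply write $[(z-A)^{-1},P]=(z-A)^{-1}[A,P](z-A)^{-1}$, so that $\|[(z-A)^{-1},P]\|\le d\,|\mathrm{Im}\,z|^{-2}$, and conclude $\|[f(A),P]\|\le Cd$ by integrating against $\bar\partial\tilde f$ with $N\ge 2$. For \eqref{esti2}, set $Q:=\mathbf 1-P$; since $PD\subset D$ and $\|[A,P]\|\le d$, the block-diagonal operator $A_P:=PAP+QAQ=A-(PAQ+QAP)$ is self-adjoint on $D$, each block is self-adjoint, and $f(PAP)$ is read as the functional calculus of $PAP$ on $\operatorname{ran}P$, equivalently $Pf(A_P)P$. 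Note $QAP=Q[A,P]P$ and $PAQ=-P[A,P]Q$, so $\|QAP\|,\|PAQ\|\le d$. With $R:=(z-A)^{-1}$, $\tilde R:=(z-A_P)^{-1}$, the resolvent identity and the block structure of $\tilde R$ give
\[
P(R-\tilde R)P\;=\;PR\,(PAQ+QAP)\,\tilde R P\;=\;PRQ\,(QAP)\,P\tilde R P\,,
\]
because $PAQ$ annihilates $\operatorname{ran}P$ and $PRP$ annihilates $\operatorname{ran}Q$, so the only surviving route is $\operatorname{ran}P\to\operatorname{ran}Q\to\operatorname{ran}P$. A further Schur-complement identity,
\[
PRQ\;=\;\bigl((z-PAP)|_{\operatorname{ran}P}\bigr)^{-1}\,(PAQ)\,(QRQ)\,,
\]
together with $\|QRQ\|\le|\mathrm{Im}\,z|^{-1}$ gives $\|PRQ\|\le d\,|\mathrm{Im}\,z|^{-2}$, while $\|P\tilde R P\|\le|\mathrm{Im}\,z|^{-1}$ and $\|QAP\|\le d$, hence $\|P(R-\tilde R)P\|\le d^2\,|\mathrm{Im}\,z|^{-3}$. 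Since $Pf(A)P-f(PAP)=\frac1\pi\int_{\C}\bar\partial\tilde f(z)\,P(R-\tilde R)P\,\D x\,\D y$, integrating with $N\ge 3$ yields \eqref{esti2}.

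The main obstacle is obtaining the \emph{quadratic} gain in \eqref{esti2} rather than the naive linear one: replacing $A$ by $A_P$ inside the resolvent costs one factor of the off-diagonal part $PAQ+QAP$, i.e.\ $O(d)$, but that first-order error is killed by the outer projections because $P(PAQ+QAP)P=0$; one must therefore expand to the genuinely second-order contribution connecting the range of $P$ to itself through the complement, which is exactly the bookkeeping above and is what introduces the extra power of $|\mathrm{Im}\,z|^{-1}$ (hence the need for a higher-order quasi-analytic extension). The only other point requiring a little care is pinning down the precise decay of $\bar\partial\tilde f$ near and away from the real axis, which is precisely what membership in $\mathcal{A}$ with $\beta<0$ supplies.
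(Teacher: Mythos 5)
Your proof is correct and follows the paper's overall strategy: both reduce \eqref{esti1} and \eqref{esti2} to resolvent estimates via the Helffer--Sj\"ostrand formula, with the crux being the bound $\|P((A-z)^{-1}-(PAP-z)^{-1})P\|\lesssim d^2/|\mathrm{Im}\,z|^3$. For \eqref{esti1} the two arguments coincide. For the key resolvent identity in \eqref{esti2} you take a different algebraic route: the paper telescopes directly, inserts $(1-P)$ after noting $P[P,A]P=0$, and arrives at the single chain
\[
P(A-z)^{-1}[A,P](A-z)^{-1}[P,A]\,P(PAP-z)^{-1}P\,,
\]
i.e.\ two resolvents of $A$ bracketing one of $PAP$. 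You instead introduce the block-diagonal comparison operator $A_P=PAP+QAQ$, use the naive resolvent identity $R-\tilde R=R(A-A_P)\tilde R$, observe that the block structure of $\tilde R$ forces the surviving term to be the $P\to Q\to P$ path $PRQ\,(QAP)\,P\tilde RP$, and then close the estimate by a Schur-complement formula $PRQ=(z-PAP)^{-1}|_{\mathrm{ran}P}(PAQ)(QRQ)$, producing two resolvents of $PAP$ bracketing one of $A$. Both routes count three resolvent factors and two off-diagonal ($O(d)$) factors, giving the same $d^2\,|\mathrm{Im}\,z|^{-3}$. Your version makes the mechanism of the quadratic gain slightly more transparent (the first-order error is manifestly off-diagonal and therefore annihilated by the outer $P$'s), at the cost of introducing $A_P$ and one extra lemma (the Schur identity); the paper's is a more compact ad hoc telescoping. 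One small point in your favour: you are careful to record that the almost-analytic extension for $f\in\mathcal{A}$ needs decay $\langle z\rangle^{\beta-1-N}$ at infinity in addition to the $|\mathrm{Im}\,z|^N$ vanishing near the axis, whereas the paper only displays the latter; and you explicitly flag that $f(PAP)$ should be read as $Pf(A_P)P$ (equivalently restricted to $\mathrm{ran}\,P$), which resolves the small imprecision in the paper's claim that $f(PAP)=Pf(PAP)P$ (literally false unless $f(0)=0$, but harmless since only the $P$-sandwiched quantity is used).
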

\begin{proof} The bound (\ref{esti1}) is an immediate consequence of the Helffer-Sj\"ostrand formula. For 
(\ref{esti2})
first note that $PAP$ is self-adjoint on $\Hi$ with domain $PD\oplus P^\perp\Hi$.
Hence $f(PAP) \in \mathcal{B}( \Hi)$ is well-defined by the functional calculus and it holds that
$f(PAP)=Pf(PAP)P$.
We next show that
\begin{equation}\label{reso}
\| P \left( (A-z)^{-1} - (PAP-z)^{-1}\right) P\|  \;\leq\; \frac{d^2}{|{\rm Im}(z)|^3}\,.
\end{equation}
This follows from taking norms in the identity 
\begin{align*}\lefteqn{
 P \left( (A-z)^{-1} - (PAP-z)^{-1}\right) P= P  (A-z)^{-1} P  -P (PAP-z)^{-1}  P}
 \\&\quad=\;
 P (A-z)^{-1} P \left( (PAP-z)  - (A-z) \right)P(PAP-z)^{-1} P\\
 &\quad\quad +  P (A-z)^{-1} P  (A-z)  P(PAP-z)^{-1} P -P (PAP-z)^{-1}  P\\
 &\quad=\; P (A-z)^{-1} [P,  A]  P(PAP-z)^{-1} P\\
 &\quad=\; P (A-z)^{-1} (1-P) [P,  A]  P(PAP-z)^{-1} P\\
 &\quad=\; P [P, (A-z)^{-1}] [P,  A]  P(PAP-z)^{-1} P\\
 &\quad=\; P(A-z)^{-1}[A,P] (A-z)^{-1}[P,A] P(PAP-z)^{-1} P\,.
\end{align*}
Now let $\tilde f$ be an almost analytic extension of $f$ with $|\partial_{\bar z} \tilde f(z)|\leq c |{\rm Im}(z)|^3 $.
By the Helffer-Sj\"ostrand formula and  (\ref{reso})
\begin{align*} \lefteqn{
 \| P f(A) P - P f(PAP) P\| \;=}\\&\quad=\; \left\|   \tfrac{1}{\pi} \int_\C\partial_{\bar z} \tilde f(z)\,P \left((A-z)^{-1} - (PAP-z)^{-1}
  \right)P\,\D x\D y\right\|\\
  &\quad\leq \;   \tfrac{1}{\pi} \int_\C |\partial_{\bar z} \tilde f(z)|\left\|P \left((A-z)^{-1} - (PAP-z)^{-1}
  \right)P\right\| \D x\D y \;\leq\; C_f \,d^2\,.
\end{align*}%
 \end{proof}

\begin{lemma}\label{lem:est2}
Let $A,B$ be   self-adjoint operators on $\Hi$ with domain $D$ 
such that $A-B$ extends to a bounded operator with $\| A-B\|\leq d$. 
Then for any $f\in \mathcal{A}$ there is a constant $C <\infty$ such that
\begin{equation}\label{esti3}
\|    f(A)   -   f(B)   \|  \;\leq\; C \,d \,.
\end{equation}
\end{lemma}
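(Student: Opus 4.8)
The plan is to prove this exactly as Lemma~\ref{lem:est} was proved, via the Helffer--Sj\"ostrand formula, except that here the relevant resolvent difference is one power of $|{\rm Im}(z)|^{-1}$ better behaved, so only a second-order almost analytic extension is required and the argument is even shorter.

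First I would recall that, since $f\in\mathcal{A}$, one can choose an almost analytic extension $\tilde f$ of $f$ such that, for every $N\in\N_0$,
\[
|\partial_{\bar z}\tilde f(z)|\;\leq\;c_N\,|{\rm Im}(z)|^{N}\,\langle{\rm Re}(z)\rangle^{\beta-1-N}\,,
\]
with $\beta<0$ as in the definition of $\mathcal{A}$, and with $\partial_{\bar z}\tilde f$ supported in a strip $|{\rm Im}(z)|\leq 1$; this is the content of the construction in \cite{Davies}. For the present estimate I would use the choice $N=2$.

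Next I would estimate the resolvent difference. Since $A$ and $B$ are self-adjoint with common domain $D$ and $B-A$ extends to a bounded operator, the second resolvent identity gives, for $z\in\C\setminus\R$,
\[
(A-z)^{-1}-(B-z)^{-1}\;=\;(A-z)^{-1}(B-A)(B-z)^{-1}\,,
\]
and hence, using $\|(A-z)^{-1}\|,\|(B-z)^{-1}\|\leq|{\rm Im}(z)|^{-1}$,
\[
\big\|(A-z)^{-1}-(B-z)^{-1}\big\|\;\leq\;\frac{\|A-B\|}{|{\rm Im}(z)|^{2}}\;\leq\;\frac{d}{|{\rm Im}(z)|^{2}}\,.
\]
Inserting this into the Helffer--Sj\"ostrand representation
\[
f(A)-f(B)\;=\;\frac1\pi\int_{\C}\partial_{\bar z}\tilde f(z)\,\big((A-z)^{-1}-(B-z)^{-1}\big)\,\D x\,\D y
\]
and taking norms yields
\[
\|f(A)-f(B)\|\;\leq\;\frac{d}{\pi}\int_{\C}\frac{|\partial_{\bar z}\tilde f(z)|}{|{\rm Im}(z)|^{2}}\,\D x\,\D y\;=:\;C\,d\,,
\]
where the integral is finite by the bound above with $N=2$: the factor $|{\rm Im}(z)|^{2}$ cancels the resolvent singularity, the $y$-integration runs over the bounded range $|y|\leq 1$, and the remaining weight $\langle{\rm Re}(z)\rangle^{\beta-3}$ with $\beta<0$ makes the $x$-integration converge. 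The constant $C=C_f$ depends only on $f$, not on $A$, $B$ or $d$, which proves (\ref{esti3}).

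The only point requiring a little care is the simultaneous control of the order of vanishing of $\partial_{\bar z}\tilde f$ at the real axis and its decay as $|{\rm Re}(z)|\to\infty$; but this is precisely what the almost analytic extension construction for $f\in\mathcal{A}$ delivers, and no vanishing order higher than two is needed here, since the resolvent \emph{difference} already carries one extra power of $|{\rm Im}(z)|^{-1}$ relative to a single resolvent. This is what makes the estimate linear in $d$ rather than merely qualitative.
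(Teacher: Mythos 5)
Your proof is correct and follows essentially the same route as the paper: second resolvent identity to bound the resolvent difference by $d/|{\rm Im}(z)|^2$, then the Helffer--Sj\"ostrand formula with an almost analytic extension vanishing to second order at the real axis. You are somewhat more explicit than the paper about why the resulting integral converges (tracking the $\langle{\rm Re}(z)\rangle^{\beta-1-N}$ weight from the construction in Davies), but this is just filling in detail the paper leaves implicit.
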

\begin{proof} 
Clearly 
\begin{equation}\label{reso2}
\|   (A-z)^{-1} - (B-z)^{-1} \| \;=\;  \|   (A-z)^{-1} (B-A) (B-z)^{-1} \|    \;\leq\; \frac{d }{|{\rm Im}(z)|^2}\,.
\end{equation}
Let $\tilde f$ be an almost analytic extension of $f$ with $|\partial_{\bar z} \tilde f(z)|\leq c |{\rm Im}(z)|^2 $.
By The Helffer-Sj\"ostrand formula and  (\ref{reso2})
\begin{align*} 
 \|   f(A)   -   f(B)  \|\; &=\; \left\|   \frac{1}{\pi} \int_\C\partial_{\bar z} \tilde f(z)  \left( (A-z)^{-1} - (B-z)^{-1}\right) \D x\D y\right\|\\
  &\leq \;     \frac{1}{\pi} \int_\C |\partial_{\bar z} \tilde f(z)|\left\| (A-z)^{-1} - (B-z)^{-1}\right\| \D x\D y\;\leq\; C_f \,d \,.
\end{align*}%
\end{proof}

\begin{lemma}\label{lem-func}
Let $h\in S^0(\epsi,\C)$ be real-valued and $f\in \mathcal{A}$. Then 
$\hat h$ is a bounded self-adjoint operator and 
\begin{equation}\label{1}
\| f(\hat h) - \widehat{f(h)}\| = \Or(\epsi^2)\,.
\end{equation}
\end{lemma}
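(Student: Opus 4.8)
The plan is to run the Helffer--Sj\"ostrand formula, exactly as in the proofs of Lemmas~\ref{lem:est} and~\ref{lem:est2}, to reduce the claim to a resolvent estimate, and then to exploit that for a \emph{scalar} symbol the first-order term in the Moyal expansion of a parametrix of $\hat h-z$ vanishes --- which is precisely what upgrades an $\Or(\epsi)$ error to $\Or(\epsi^2)$. The elementary part is quick: since $h\in S^0(\epsi,\C)$, Calder\'on--Vaillancourt gives $\|\hat h\|\le c_n\|h(\epsi)\|_{0,2n+1}$ uniformly in $\epsi$, and as $h$ is real-valued, $\hat h$ is an everywhere-defined bounded symmetric operator, hence self-adjoint. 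Fix an almost analytic extension $\tilde f$ of $f$; since $f\in\mathcal{A}$ one may, following \cite{Davies}, take $\tilde f$ supported in $\{|\mathrm{Im}\,z|\le1\}$ with $|\partial_{\bar z}\tilde f(z)|\le C_N\langle z\rangle^{\beta-1-N}|\mathrm{Im}\,z|^N$ for every $N\in\N_0$, where $\beta<0$ is as in the definition of $\mathcal{A}$. Applying the scalar Helffer--Sj\"ostrand formula pointwise in $(q,p)$ and using continuity of the Weyl quantization on $S^0(\C)$ to interchange quantization with integration gives $\widehat{f(h)}=\tfrac1\pi\int_\C\partial_{\bar z}\tilde f(z)\,\widehat{(h-z)^{-1}}\,\D x\,\D y$, while $f(\hat h)=\tfrac1\pi\int_\C\partial_{\bar z}\tilde f(z)\,(\hat h-z)^{-1}\,\D x\,\D y$, so that
\[
f(\hat h)-\widehat{f(h)}\;=\;\tfrac1\pi\int_\C\partial_{\bar z}\tilde f(z)\,\Big((\hat h-z)^{-1}-\widehat{(h-z)^{-1}}\Big)\,\D x\,\D y\,.
\]

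The crux is the parametrix computation. For $\mathrm{Im}\,z\neq0$ the symbol $r_z:=(h-z)^{-1}$ belongs to $S^0(\C)$, and by the Moyal expansion $r_z\#(h-z)$ has principal term $r_z(h-z)=1$ and subprincipal term $-\tfrac{\imag}{2}\{r_z,h-z\}=-\tfrac{\imag}{2}\{r_z,h\}$; but $h$ is scalar, so $\{r_z,h\}=-(h-z)^{-2}\{h,h\}=0$. Hence $r_z\#(h-z)=1+\epsi^2 S_z$ with $S_z:=R_2(r_z,h-z)\in S^0(\epsi,\C)$, and quantizing yields $\widehat{r_z}\,(\hat h-z)=\mathbf 1+\epsi^2\widehat{S_z}$, i.e.
\[
(\hat h-z)^{-1}-\widehat{r_z}\;=\;-\,\epsi^2\,\widehat{S_z}\,(\hat h-z)^{-1}\,.
\]

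It then remains to estimate norms and integrate. Continuity of the Moyal remainder map gives $\|S_z\|_{0,2n+1}\le c\,\|r_z\|_{0,\tilde r}\|h-z\|_{0,\tilde r}$ for a fixed $\tilde r=\tilde r(n)$; writing $\partial_z^\alpha(h-z)^{-1}$ as a finite sum of products of derivatives of $h$ with powers $(h-z)^{-1-\ell}$, $\ell\le|\alpha|$, bounds $\|r_z\|_{0,\tilde r}\le C|\mathrm{Im}\,z|^{-(1+\tilde r)}$ on $\{|\mathrm{Im}\,z|\le1\}$, while $\|h-z\|_{0,\tilde r}\le C\langle z\rangle$. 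Together with $\|(\hat h-z)^{-1}\|\le|\mathrm{Im}\,z|^{-1}$ and Calder\'on--Vaillancourt this gives $\|(\hat h-z)^{-1}-\widehat{r_z}\|\le C\epsi^2\langle z\rangle|\mathrm{Im}\,z|^{-N_0}$ with $N_0$ depending only on $n$; choosing $N:=N_0+1$ in $\tilde f$ renders the $z$-integral absolutely convergent (the leftover power of $|\mathrm{Im}\,z|$ is nonnegative, and $\langle z\rangle^{\beta-N}$ is integrable in $\mathrm{Re}\,z$ since $\beta<0$), and $\|f(\hat h)-\widehat{f(h)}\|\le C\epsi^2$ follows. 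I expect the only real work to be this bookkeeping --- tracking the blow-up of the seminorms of $r_z$ and of the Moyal remainder as $\mathrm{Im}\,z\to0$ together with their growth in $\langle z\rangle$, and checking that the vanishing order of $\partial_{\bar z}\tilde f$ afforded by $f\in\mathcal{A}$ beats both. The conceptual content is entirely the identity $\{h,h\}=0$, which removes the would-be $\Or(\epsi)$ term; as remarked after~(\ref{weylcommu}), it is exactly this property of the Weyl quantization that is responsible for the $\epsi^2$ accuracy.
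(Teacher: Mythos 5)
Your proof is correct and follows the same route as the paper's: reduce to a resolvent estimate via Helffer--Sj\"ostrand, use the Moyal expansion of the parametrix, and observe that the $\Or(\epsi)$ term is proportional to a Poisson bracket of $h$ (or $h-z$) with a function of $h$, which vanishes since $h$ is scalar. The paper presents this as a brief sketch (citing an exercise in Martinez for the reduction and only writing out the one-line Moyal computation), whereas you make the Helffer--Sj\"ostrand reduction explicit and carry out the seminorm bookkeeping for the $|\mathrm{Im}\,z|^{-1}$ blow-up and $\langle z\rangle$ growth; this is filling in the very details the paper leaves to the reader rather than a genuinely different argument.
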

\begin{proof}
The Weyl quantization of a real valued symbol in $S^0(\epsi,\C)$  is bounded and symmetric and thus self-adjoint. Following Exercise~24 of Chapter~2 in \cite{martinez2002introduction}, it suffices to show (\ref{1}) for the resolvents at $z\in\C\setminus\R$:
\begin{align*}
\lefteqn{ \hspace{-40pt}
(\hat h - z)^{-1} - {\rm Op}^{\rm W}( (h-z)^{-1} )=(\hat h - z)^{-1} \big(
1 - (\hat h -z )\, {\rm Op}^{\rm W}( (h-z)^{-1})
\big)
}\\
&\quad=\; (\hat h - z)^{-1} \big(
1 - {\rm Op}^{\rm W}( (  h -z )\,\#\, (h-z)^{-1})
\big)\\
&\quad=\; (\hat h - z)^{-1} \big( \tfrac{\imag\epsi}{2} \underbrace{\{ (  h -z ) , (h-z)^{-1}\}}_{=0} \,+\,\Or(\epsi^2)
\big)\,. 
\end{align*}%
\end{proof}

\bibliographystyle{alpha}

\end{document}